  \DeclareFontShape{T1}{cmr}{m}{scit}{<->ssub*cmr/m/sc}{}%
\crefname{ineq}{inequality}{inequalities}
\crefname{fact}{fact}{facts}
\crefname{equation}{equation}{equations}
\crefname{algorithm}{protocol}{protocols} 
\crefname{remark}{remark}{remarks}
\crefname{conjecture}{conjecture}{conjectures}
\crefname{problem}{problem}{problems}
\declaretheorem[style=plain,numberwithin=section]{theorem}
\declaretheorem[style=plain,numberlike=theorem]{lemma,corollary}
\declaretheorem[style=remark,numberlike=theorem]{remark}
\declaretheorem[style=definition,numberlike=theorem]{definition}
\declaretheorem[style=plain,numberlike=theorem]{proposition}
\declaretheorem[style=definition,numberlike=theorem]{problem,conjecture}
\numberwithin{equation}{section}
\def\@buildmath#1{%
  \expandafter\def\csname bb#1\endcsname{\ensuremath{\mathbb{#1}}}%
  \expandafter\def\csname bf#1\endcsname{\ensuremath{\mathbf{#1}}}%
  \expandafter\def\csname sf#1\endcsname{\ensuremath{\mathsf{#1}}}%
  \expandafter\def\csname cal#1\endcsname{\ensuremath{\mathcal{#1}}}%
  \expandafter\def\csname rm#1\endcsname{\ensuremath{\mathrm{#1}}}%
  \expandafter\def\csname tt#1\endcsname{\ensuremath{\mathtt{#1}}}%
}
\def\@buildmathletters#1{%
  \ifx#1\relax\else
    \@buildmath{#1}%
    \expandafter\@buildmathletters
  \fi
} 
\newcommand{\Ptime}{\textnormal{\textsf{P}}\xspace}
\newcommand{\BQP}{\textnormal{\textsf{BQP}}\xspace}
\newcommand{\NP}{\textnormal{\textsf{NP}}\xspace}
\newcommand{\QIP}{\textnormal{\textsf{QIP}}\xspace}
\newcommand{\QIPtwo}{\textnormal{\textsf{QIP(2)}}\xspace}
\newcommand{\coQIPtwo}{\textnormal{\textsf{co}\text{-}\textsf{QIP(2)}}\xspace}
\newcommand{\qqQAM}{\textnormal{\textrm{qq}\text{-}\textsf{QAM}}\xspace}
\newcommand{\coqqQAM}{\textnormal{\textsf{co}\text{-}\textrm{qq}\text{-}\textsf{QAM}}\xspace}
\newcommand{\PP}{\textnormal{\textsf{PP}}\xspace}
\newcommand{\AM}{\textnormal{\textsf{AM}}\xspace}
\newcommand{\coAM}{\textnormal{\textsf{coAM}}\xspace}
\newcommand{\countP}{\textnormal{\textsf{\#P}}\xspace}
\newcommand{\SZK}{\textnormal{\textsf{SZK}}\xspace}
\newcommand{\QSZK}{\textnormal{\textsf{QSZK}}\xspace}
\newcommand{\coQSZK}{\textnormal{\textsf{co}\text{-}\textsf{QSZK}}\xspace}
\newcommand{\NIQSZK}{\textnormal{\textsf{NIQSZK}}\xspace}
\newcommand{\QSZKHV}{\texorpdfstring{\textnormal{\textsf{QSZK}\textsubscript{HV}}}\xspace}
\newcommand{\protocol}[2]{{#1}\!\rightleftharpoons\!{#2}}
\newcommand{\PSPACE}{\textnormal{\textsf{PSPACE}}\xspace}
\newcommand{\BQPSPACE}{\textnormal{\textsf{BQPSPACE}}\xspace}
\newcommand{\BQL}{\textnormal{\textsf{BQL}}\xspace}
\newcommand{\NC}{\textnormal{\textsf{NC}}\xspace}
\newcommand{\FEst}{\textnormal{\textsc{F\textsuperscript{2}Est}}\xspace}
\newcommand{\GapFEst}{\textnormal{\textsc{GapF\textsuperscript{2}Est}}\xspace}
\newcommand{\GapFEstlog}{\textnormal{\textsc{GapF\textsuperscript{2}Est}\textsubscript{log}}\xspace}
\newcommand{\GapQSDlog}{\texorpdfstring{\textnormal{\textsc{GapQSD}\textsubscript{log}}}\xspace}
\newcommand{\QSD}{\textnormal{\textsc{QSD}}\xspace}
\newcommand{\GapQSD}{\textnormal{\textsc{GapQSD}}\xspace}
\newcommand{\QSC}{\textnormal{\textsc{QSC}}\xspace}
\newcommand{\QSCMM}{\textnormal{\textsc{QSCMM}}\xspace}
\newcommand{\SD}{\textnormal{\textsc{SD}}\xspace}
\DeclarePairedDelimiter\rbra{\lparen}{\rparen}
\DeclarePairedDelimiter\cbra{\{}{\}}
\DeclarePairedDelimiter\abs{\lvert}{\rvert}
\DeclarePairedDelimiter\norm{\lVert}{\rVert}
\DeclarePairedDelimiter\ceil{\lceil}{\rceil}
\let\ket\relax\DeclarePairedDelimiter\ket{\lvert}{\rangle}
\let\bra\relax\DeclarePairedDelimiter\bra{\langle}{\rvert}
\newcommand{\ketbra}[2]{\ensuremath{\ket{#1}\!\bra{#2}}}
\renewcommand{\bra}[1]{\langle #1 \rvert}
\renewcommand{\ket}[1]{\lvert #1 \rangle}
\newcommand{\innerprod}[2]{\langle #1 | #2 \rangle}
\newcommand{\Tr}{\mathrm{Tr}}
\newcommand{\rank}{\mathrm{rank}}
\newcommand{\td}{\mathrm{T}}
\newcommand{\F}{\mathrm{F}}
\newcommand{\sign}{\mathrm{sgn}}
\newcommand{\acc}{\mathrm{acc}}
\newcommand{\Eval}{\mathrm{Eval}}
\newcommand{\SWAP}{\textnormal{\textsc{SWAP}}\xspace}
\renewcommand{\Pr}[1]{\mathrm{Pr}\!\left[#1\right]}
\newcommand{\binset}{\{0,1\}}
\newcommand{\SV}{\mathrm{(SV)}}
\newcommand{\UHH}{U_{\mathrm{HH}}}
\newcommand{\AHH}{A_{\mathrm{HH}}}
\newcommand{\XUhl}{X_{\mathrm{Uhl}}}
\newcommand{\poly} {\operatorname{poly}}
\DeclareMathOperator\polylog{polylog}
\begin{document}
\setlength{\abovedisplayskip}{6pt}
\setlength{\belowdisplayskip}{6pt}

\title{A slightly improved upper bound for quantum statistical zero-knowledge}
\author[1]{Fran\c{c}ois Le Gall\thanks{Email: legall@math.nagoya-u.ac.jp}}
\author[2,1]{Yupan Liu\thanks{Email: yupan.liu@epfl.ch}} 
\author[3]{Qisheng Wang\thanks{Email: QishengWang1994@gmail.com}}
\affil[1]{Graduate School of Mathematics, Nagoya University}
\affil[2]{School of Computer and Communication Sciences, \'Ecole Polytechnique F\'ed\'erale de Lausanne}
\affil[3]{School of Computer Science, Shanghai Jiao Tong University}

\date{}
\maketitle
\pagenumbering{roman}
\thispagestyle{empty}

\begin{abstract}
The complexity class Quantum Statistical Zero-Knowledge ($\mathsf{QSZK}$), introduced by~\hyperlink{cite.Watrous02}{Watrous~(FOCS 2002)} and later refined in~\hyperlink{cite.Wat09}{Watrous~(SICOMP, 2009)}, has the best known upper bound $\mathsf{QIP(2)} \cap \text{co-}\mathsf{QIP(2)}$, which was simplified following the inclusion $\mathsf{QIP(2)} \subseteq \mathsf{PSPACE}$ established in~\hyperlink{cite.JUW09}{Jain, Upadhyay, and Watrous~(FOCS 2009)}. Here, $\mathsf{QIP(2)}$ denotes the class of promise problems that admit two-message quantum interactive proof systems in which the honest prover is typically \textit{computationally unbounded}, and $\text{co-}\mathsf{QIP(2)}$ denotes the complement of $\mathsf{QIP(2)}$.

We slightly improve this upper bound to $\mathsf{QIP(2)} \cap \text{co-}\mathsf{QIP(2)}$ with a \textit{quantum linear-space} honest prover. Specifically, the honest prover uses space linear in the size of the transcript of the original $\mathsf{QSZK}$ proof system. A similar improvement also applies to the upper bound for the non-interactive variant $\mathsf{NIQSZK}$. Our main techniques are \textit{algorithmic} versions of the Holevo--Helstrom measurement and the Uhlmann transform, both implementable in quantum \textit{linear} space, implying polynomial-time complexity in the state dimension, using the recent \textit{space-efficient} quantum singular value transformation of~\hyperlink{cite.LGLW23}{Le Gall, Liu, and Wang~(CC, to appear)}. 
\end{abstract}
\newpage
\tableofcontents
\thispagestyle{empty}
\newpage
\pagenumbering{arabic}

\section{Introduction}

Quantum Statistical Zero-Knowledge (\QSZK) is the complexity class of promise problems that admit (single-prover) quantum interactive proof systems with the statistical zero-knowledge property. Intuitively, this property requires that \textit{any} verifier interacting with the honest prover (implicitly on \emph{yes} instances) gains no information from the interaction beyond the validity of the statement.
A weaker variant with \textit{honest} verifiers, denoted by \QSZKHV{},\footnote{For instance, in Graph Non-isomorphism~\cite{GMW91}, where the problem is to decide whether two given graphs $G_0$ and $G_1$ are non-isomorphic, an honest verifier queries only the graphs $G_0$ and $G_1$, whereas an arbitrary verifier may present some graph $G'$ in an attempt to extract additional information.} was first investigated in~\cite{Watrous02}. The resulting class shares most of the basic properties with its classical counterpart \SZK{}~\cite{SV97,GSV98}, including that such proof systems can be parallelized to \textit{two messages}. A few years later, it was shown in~\cite{Wat09} that removing the honest-verifier restriction does not reduce the computational power, establishing the equivalence $\QSZK = \QSZKHV$.\footnote{When the verifier is allowed non-unitary operations, the honest-verifier notion becomes subtle: there is a variant of \QSZKHV{}, introduced in~\cite{CK08}, that contains \PSPACE{}.}

Parallel to the classical \textsc{Statistical Difference Problem} (\SD{}) in~\cite{SV97}, a complete characterization of \QSZK{} was established in~\cite{Watrous02},\footnote{The \QSZK{} containment of $\QSD[\alpha(n),\beta(n)]$ in~\cite{Watrous02} holds only when $\alpha^2(n)-\beta(n) \geq 1/O(\log{n})$, the so-called \textit{polarizing regime}. Slight improvements for the \SZK{} containment of \SD{} beyond this regime were obtained in~\cite{BDRV19} and were later partially extended to the \QSZK{} containment of \QSD{} in~\cite{Liu23}, but the general case remains open.} namely, the \textsc{Quantum State Distinguishability Problem} ($\QSD[\alpha,\beta]$). This promise problem asks whether two quantum states $\rho_0$ and $\rho_1$, whose purifications are prepared by polynomial-size quantum circuits $Q_0$ and $Q_1$, respectively, satisfy $\td(\rho_0,\rho_1) \geq \alpha$ (for \emph{yes} instances) or $\td(\rho_0,\rho_1) \leq \beta$ (for \emph{no} instances), where the trace distance is defined as $\td(\rho_0,\rho_1) \coloneqq \frac{1}{2}\Tr\abs*{\rho_0-\rho_1}$. For convenience, we refer to $\QSD[\alpha(n),\beta(n)]$ with $\alpha(n)-\beta(n) \geq 1/\poly(n)$ as \GapQSD{}. 

Using this complete problem, the best known upper bound for \QSZK{} was shown in~\cite{Watrous02}: 
\[ \QSZK \subseteq \QIPtwo \cap \coQIPtwo \cap \PSPACE. \]
Here, \QIPtwo{} denotes the class of problems admitting two-message quantum interactive proof systems. Notably, the \PSPACE{} containment essentially follows from an $\NC(\poly)$ algorithm for \GapQSD{}. The development of more sophisticated $\NC(\poly)$ algorithms for characterizing quantum interactive proof systems subsequently led to the celebrated result $\QIP=\PSPACE$~\cite{JJUW11}. In particular, an intermediate step proving $\QIPtwo \subseteq \PSPACE$ in~\cite{JUW09} immediately simplified the state-of-the-art upper bound for \QSZK{} to 
\[ \QSZK \subseteq \QIPtwo \cap \coQIPtwo.\]

By contrast, the best known upper bound for the classical counterpart \SZK{} is $\AM\cap\coAM$, as proven in~\cite{Fortnow87,AH91}, where \AM{} denotes the class of problems admitting two-message classical interactive proof systems in which the first message (from the verifier) consists \emph{solely of (public) random coins}. This comparison between the classical and quantum scenarios naturally raises the following intriguing question: 
\begin{problem}
    \label{prob:QSZK-upper-bounds}
    Could the current upper bound for \QSZK{} be improved, even slightly? 
\end{problem}

\subsection{Main results}
\label{subsec:main-results}

In this work, we make progress on \Cref{prob:QSZK-upper-bounds} by restricting the computational power of the honest prover in the proof systems underlying the $\QIPtwo{}\cap\coQIPtwo{}$ upper bound~\cite{Watrous02}, from being computationally \textit{unbounded} to quantum \textit{linear} space (and therefore quantum single-exponential time), as stated in \Cref{thm:GapQSD-in-QIP(2)-informal,thm:GapFEst-in-QIP(2)-informal}.

\begin{theorem}[Informal version of \Cref{thm:GapQSD-in-QIP(2)-bounded-prover}]
    \label{thm:GapQSD-in-QIP(2)-informal}
    \GapQSD{} is in \QIPtwo{} with a quantum linear-space honest prover.
\end{theorem}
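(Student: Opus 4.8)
\emph{Plan.} The idea is to keep the standard state‑distinguishing protocol that places $\GapQSD$ in $\QIPtwo$~\cite{Watrous02} and replace its computationally unbounded honest prover --- the Holevo--Helstrom measurement --- by an \emph{algorithmic} one implementable in quantum linear space. The protocol is unchanged: on input circuits $Q_0,Q_1$ generating $\rho_0,\rho_1$, the verifier samples $b\in\{0,1\}$ uniformly, prepares $\rho_b$ with $Q_b$ (discarding the purifying register), sends the state register to the prover, and keeps $b$; the prover answers with a bit $g'$; the verifier accepts iff $g'=b$. Soundness needs no work, and is exactly where the prover's power is irrelevant: from the prover's side the received register is in the mixed state $\rho_b$, so whatever two‑outcome measurement it performs, the probability of answering $g'=b$ is at most $\tfrac12+\tfrac12\td(\rho_0,\rho_1)\le\tfrac12+\tfrac{\beta}{2}$ on \emph{no} instances, by the optimality of Holevo--Helstrom discrimination. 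The whole content is thus (i) exhibiting an honest prover whose acceptance probability on \emph{yes} instances is close to $\tfrac12+\tfrac12\td(\rho_0,\rho_1)\ge\tfrac12+\tfrac{\alpha}{2}$, and (ii) making it run in quantum linear space.

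For (i), set $A\coloneqq\rho_0-\rho_1$, Hermitian with $\norm{A}\le 1$; the ideal prover measures $\{\theta(A),I-\theta(A)\}$, where $\theta$ is the step function ($\theta(x)=1$ for $x\ge 0$, else $0$), attaining acceptance probability exactly $\tfrac12+\tfrac12\td(\rho_0,\rho_1)$. Since the prover knows $Q_0,Q_1$, it builds --- on the received register together with fresh ancillas, via the standard purification‑to‑block‑encoding construction followed by a one‑qubit linear combination of block‑encodings --- a block‑encoding of $\tfrac12 A$. It then applies the quantum singular value transformation with a polynomial $\tilde\theta\colon[-1,1]\to[0,1]$ of degree $d$ with $\abs{\tilde\theta(x)-\theta(x)}\le\epsilon$ for all $\abs{x}\ge\delta$, obtaining a block‑encoding $U$ of $\tilde\theta(\tfrac12 A)$; applying $U$ to the received register and measuring the block‑encoding flag yields $g'$, with the outcome $g'=0$ occurring with probability $\Tr\!\big(\tilde\theta(\tfrac12 A)^2\rho_b\big)$. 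Hence the honest prover's acceptance probability is
\[ \tfrac12+\tfrac12\sum_i \tilde\theta(\tfrac12\lambda_i)^2\,\lambda_i \]
over the eigenvalues $\lambda_i$ of $A$, and comparing to the ideal value $\tfrac12+\tfrac12\sum_{\lambda_i\ge 0}\lambda_i$ the loss is at most $O(\epsilon)$ from eigenvalues with $\abs{\lambda_i}\ge 2\delta$ (the squaring introduced by reading out a block‑encoding is harmless there, since $\tilde\theta(\tfrac12\lambda_i)^2$ is still $O(\epsilon)$‑close to $\theta(\lambda_i)$), plus a contribution of at most $O(\delta)$ per eigenvalue with $\abs{\lambda_i}<2\delta$. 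The key observation is that $\rho_0-\rho_1$ has rank at most $2^{\,m+1}$, where $m$ is the number of qubits the states act on, so the latter part is at most $O(\delta\cdot 2^{m})$; choosing $\epsilon=\Theta(\alpha-\beta)$ (inverse‑polynomial) and $\delta=\Theta\big((\alpha-\beta)\,2^{-m}\big)$ (inverse‑exponential) makes the total loss below $\tfrac14(\alpha-\beta)$, leaving a protocol with completeness--soundness gap at least $\tfrac14(\alpha-\beta)\ge 1/\poly$.

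For (ii) --- and this is the step I expect to require real care --- those parameters force $d=O(\delta^{-1}\log\epsilon^{-1})=2^{\Theta(m)}$, i.e.\ polynomial in the state dimension $2^{m}$ but \emph{exponential} in $m$, so a naive quantum singular value transformation would both cost single‑exponential time and, fatally, need to store exponentially many phase factors. The resolution is that $\tilde\theta$ can be taken to be one of the explicit error‑function / minimax approximations of the sign (step) function whose quantum signal processing phase factors are \emph{uniformly} computable in space $O(m)$; feeding such a polynomial, together with the space‑$O(m)$ block‑encoding of $\tfrac12(\rho_0-\rho_1)$, into the space‑efficient quantum singular value transformation of Le Gall, Liu, and Wang then implements $U$ --- hence the entire honest‑prover measurement --- in quantum space $O(m)$ and time $\poly(2^{m})$. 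Marrying the approximation‑theoretic requirements on $\tilde\theta$ (exponential degree, uniform $\epsilon$‑accuracy outside an exponentially narrow window around $0$) with the demand that its phase factors be generated on the fly in linear space is precisely where that space‑efficient transform does the heavy lifting; everything else --- the protocol, the Holevo--Helstrom optimality, the rank bound --- is routine. Finally, the inverse‑polynomial gap is boosted to a constant by the standard message‑preserving error reduction for $\QIPtwo$, under which the honest prover runs polynomially many independent copies of the above strategy, remaining quantum polynomial‑space and single‑exponential‑time; a more careful accounting (or a preliminary gap amplification already at the level of $\GapQSD$) keeps the honest prover genuinely linear‑space.
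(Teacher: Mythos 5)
Your proposal is correct and follows essentially the same route as the paper: run the Watrous state-discrimination protocol, observe that soundness is automatic from the Holevo--Helstrom bound, replace the ideal Holevo--Helstrom measurement by a QSVT-based approximation of the sign/step function applied to a block-encoding of $(\rho_0-\rho_1)/2$, use the rank bound $\rank(\rho_0-\rho_1)\le 2^{r+1}$ to handle the inverse-exponentially small threshold window, and invoke the space-efficient QSVT of Le Gall--Liu--Wang to keep the honest prover in linear space and single-exponential time, followed by parallel-repetition error reduction.

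The one place your implementation diverges is how the approximate measurement is read out. You build a block-encoding $U$ of a step polynomial $\tilde\theta(\tfrac12(\rho_0-\rho_1))$ and post-select on the block-encoding flag, so your acceptance probability involves $\Tr\bigl(\tilde\theta^2\,\rho\bigr)$ -- \emph{quadratic} in the block-encoded operator -- and you then argue the squaring is harmless. The paper instead keeps the odd sign polynomial $P^{\sign}_{d'}$, block-encodes $P^{\sign}_{d'}((\rho_0-\rho_1)/2)$, and uses the Hadamard test (one-bit phase estimation), whose outcome probability is $\tfrac12+\tfrac12\Tr(P^{\sign}_{d'}(\cdot)\rho)$ -- \emph{linear}. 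The linear version is cleaner for two reasons: it avoids the extra error term from comparing $M^{\dagger}M$ with $\tilde\theta^2$ when the block-encoding is only approximate, and it sidesteps the issue that $\tilde\theta$ lacks definite parity (QSVT natively implements even or odd polynomials, so a step function requires an additional LCU with the identity -- which the Hadamard test absorbs for free by realizing $\tfrac{I+P^{\sign}_{d'}(\cdot)}{2}$ as the effective POVM element). Your final remark about ``a more careful accounting'' to keep the prover linear-space after amplification is exactly the subtlety the paper pins down: after $t_0 t_1$ parallel repetitions it redefines $n'$ as the total input length of all the repeated state-preparation circuits and measures the prover's space against $n'$, so ``linear space'' is linear in $n'$, not in the original $n$.
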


The promise problem underlying the \coQIPtwo{} proof system in~\cite{Watrous02} is the \textsc{Quantum State Closeness Problem} (\QSC{}), which is the complement of \QSD{}. This problem is closely related to \FEst{} (to be specified later) via the Fuchs--van de Graaf inequality~\cite{FvdG99}. 

\begin{theorem}[Informal version of \Cref{thm:GapFEst-in-QIP(2)-bounded-prover}]
    \label{thm:GapFEst-in-QIP(2)-informal}
    \GapFEst{} is in \QIPtwo{} with a quantum linear-space honest prover.
\end{theorem}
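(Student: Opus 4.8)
The plan is to run the Uhlmann-based two-message protocol for quantum state closeness and to have the honest prover execute it using our algorithmic (quantum linear-space) implementation of the Uhlmann transform; soundness will then be unconditional via Uhlmann's theorem, and the only subtle point is that an \emph{approximate}, space-bounded Uhlmann transform still certifies completeness. Concretely, an instance of $\GapFEst$ is a pair of polynomial-size circuits $Q_0,Q_1$ acting on an output register $A$ and a purifying register $B$, producing $\ket{\psi_i}=Q_i\ket{\bar 0}$ with $\rho_i=\Tr_B\ketbra{\psi_i}{\psi_i}$, under the promise $\F(\rho_0,\rho_1)^2\ge\alpha$ on \emph{yes} instances and $\F(\rho_0,\rho_1)^2\le\beta$ on \emph{no} instances. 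The protocol: the verifier prepares $\ket{\psi_0}_{AB}$, sends $B$ to the prover (keeping $A$); the prover applies an isometry $V$ on $B$ (possibly introducing a fresh ancilla $C$) and returns $B$; the verifier applies $Q_1^\dagger$ to $AB$ and accepts iff the outcome is $\ket{\bar 0}$. The verifier is polynomial-time.

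For soundness, note that against an arbitrary prover the acceptance probability equals $\norm*{(\bra{\psi_1}_{AB}\otimes I_C)(I_A\otimes V)\ket{\psi_0}_{AB}}^2$; since $(I_A\otimes V)\ket{\psi_0}$ is a purification of $\rho_0$ and $\ket{\psi_1}$ is a purification of $\rho_1$, Uhlmann's theorem bounds this by $\F(\rho_0,\rho_1)^2\le\beta$ on \emph{no} instances, with no hypothesis on the prover's computational power.

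Completeness, together with the linear-space guarantee for the honest prover, is the heart of the argument. Uhlmann's theorem furnishes a unitary $U^\star$ on $B$ with $\abs*{\braket{\psi_1}{I_A\otimes U^\star}{\psi_0}}=\F(\rho_0,\rho_1)$, so a prover applying $U^\star$ is accepted with probability exactly $\F(\rho_0,\rho_1)^2\ge\alpha$; the obstacle is that $U^\star$ is not obviously efficiently implementable. The honest prover instead invokes our algorithmic Uhlmann transform: using the space-efficient quantum singular value transformation of Le Gall--Liu--Wang, it takes the descriptions of $Q_0,Q_1$ and a target precision $\delta$ and builds a circuit for a unitary $\widetilde U$ with $\abs*{\braket{\psi_1}{I_A\otimes\widetilde U}{\psi_0}}\ge\F(\rho_0,\rho_1)-\delta$, using space \emph{linear} in the number of qubits of $Q_0,Q_1$ --- the $\poly(1/\delta)$ cost of the approximation enters only the running time (which stays single-exponential) and only logarithmically the space. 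With this strategy the honest prover is accepted with probability at least $(\F(\rho_0,\rho_1)-\delta)^2\ge\alpha-2\delta$.

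Finally, taking $\delta$ a small inverse polynomial leaves a completeness--soundness gap at least $\tfrac12(\alpha-\beta)$, which --- after the standard gap amplification for $\QIPtwo$, or, in order to keep a single round and a strictly \emph{linear}-space prover, after first applying the usual polarization reduction (via the Fuchs--van Graaf inequality $\GapFEst$ reduces to $\GapQSC$, which polarizes, and one translates back to an instance with $\F^2\ge 1-2^{-n}$ versus $\F^2\le 2^{-n}$ and then runs the protocol once with $\delta=2^{-n}$) --- places $\GapFEst$ in $\QIPtwo$ with a quantum linear-space honest prover. The routine part is the bookkeeping ensuring that the approximation error never exceeds the promise gap and that neither the precision nor the polarization step pushes the honest prover's space above linear; the genuinely hard ingredient, the linear-space implementation of the Uhlmann transform, is established separately via the space-efficient QSVT and invoked here as a black box.
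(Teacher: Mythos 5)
Your proposal matches the paper's proof essentially step for step: the same two-message Uhlmann fidelity test (verifier prepares $\ket{\psi_0}$, sends the reference register, applies $Q_1^\dagger$ and tests for $\ket{\bar 0}$), soundness via the monotonicity form of Uhlmann's theorem (the paper phrases the prover's action as a channel on $\sfR$ via \Cref{corr:stronger-Uhlmann}, you phrase it as an isometry with fresh ancilla --- equivalent Stinespring pictures), and completeness by invoking the algorithmic Uhlmann transform (\Cref{thm:algorithmic-Uhlmann-transform}) as a black box, followed by error reduction.

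One caveat on your final step. You offer two routes to amplify the gap; the paper uses the first (parallel repetition à la \Cref{lemma:QIPtwo-error-reduction}) and explicitly verifies that it preserves linearity of the honest prover's space measured against the \emph{new} total input length $n'=n\cdot t_0t_1$, so your worry that standard amplification might break linear space is unfounded. The alternative you propose --- first polarize via the Fuchs--van Graaf reduction to $\GapQSC$ and then run the protocol once with $\delta=2^{-n}$ --- does not work for the full parameter range of $\GapFEst$. The polarization lemma for $\QSC[\beta,\alpha]$ requires the polarizing regime $\alpha^2>\beta$, and translating $\F^2\ge\alpha$ vs.~$\F^2\le\beta$ through Fuchs--van Graaf yields $\QSC[\sqrt{1-\alpha},\,1-\sqrt{\beta}]$; the condition $(1-\sqrt{\beta})^2>\sqrt{1-\alpha}$ fails, e.g., already for $\alpha\approx\beta\approx 1/2$ with $\alpha-\beta=1/\poly$. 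So if you choose the polarization detour you silently restrict $\GapFEst$ to a subregime, whereas the parallel-repetition route (which you also mention, and which the paper takes) covers the stated $\alpha-\beta\ge 1/\poly$ promise.
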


Here, the promise problem \textsc{Quantum Squared Fidelity Estimation} ($\FEst[\alpha,\beta]$) asks whether $\F^2(\rho_0,\rho_1) \geq \alpha$ for \emph{yes} instances or $\F^2(\rho_0,\rho_1) \leq \beta$ for \emph{no} instances, where the squared fidelity is defined as $\F^2(\rho_0,\rho_1) \coloneqq \Tr\abs*{\sqrt{\rho_0}\sqrt{\rho_1}}^2$. As with \GapQSD{}, we refer to $\FEst[\alpha(n),\beta(n)]$ with $\alpha(n)-\beta(n)\geq 1/\poly(n)$ as $\GapFEst$. 

\paragraph{Computational efficiency of the honest prover compared to the general case.}
Approximately implementing the honest prover's strategies in \textit{general} quantum interactive proof systems has been studied in~\cite[Section II.C]{MY23}, which requires quantum \textit{polynomial} space. In contrast, our results (\Cref{thm:GapQSD-in-QIP(2)-informal,thm:GapFEst-in-QIP(2)-informal}) achieve a \textit{polynomial} improvement in space complexity for implementing the honest prover's strategies in \textit{specific} two-message quantum interactive proof systems for \GapQSD{} and \GapFEst{}. Moreover, the corresponding time complexity is \textit{exponentially} improved with respect to the state dimension.\footnote{For a detailed algorithmic comparison, see the discussion in the first paragraph of \Cref{subsec:related-works}.} 

This distinction appears fundamental and challenging to close: even combining the SDP-based approach of~\cite{MY23} with the space-efficient QSVT from~\cite{LGLW23} still requires at least quantum \textit{quadratic} space to approximately implement the
honest prover's strategy in the general case. Further discussion is deferred to \Cref{subsec:open-problems}. 

\paragraph{Implications on \QSZK{} and \NIQSZK{}.}
The main result of this work follows directly from combining \Cref{thm:GapQSD-in-QIP(2)-informal,thm:GapFEst-in-QIP(2)-informal}: 
\begin{corollary}
    \label{corr:QSZK-upper-bound}
    \QSZK{} is in $\QIPtwo{} \cap \coQIPtwo{}$ with a quantum linear-space (and thus single-exponential-time) honest prover. More precisely, the space complexity of the honest prover in the resulting $\QIPtwo{} \cap \coQIPtwo{}$ proof system is linear in the size of the transcript in the original \QSZK{} proof system.
\end{corollary}

In addition to \QSZK{}, a non-interactive variant called \NIQSZK{} was studied in~\cite{Kobayashi03}. In this model, the prover and verifier share prior entanglement (EPR pairs), and only the prover sends a message. As noted in~\cite{KLGN19}, a direct upper bound for \NIQSZK{} is \qqQAM{}, a subclass of \QIPtwo{} in which the verifier's message consists of half of the shared EPR pairs (``quantum public coins''). 
A natural complete problem for \NIQSZK{} is the \textsc{Quantum State Closeness to Maximally Mixed Problem} (\QSCMM{})~\cite{Kobayashi03,BASTS10,CCKV08}, obtained by fixing the state $\rho_0$ in \QSC{} to be the maximally mixed state. 

Noting that $\QSCMM[1/3,2/3]$ is \NIQSZK{}-hard,\footnote{More precisely, if $n$ denotes the number of qubits that the state-preparation circuits act on and $r(n)$ is the number of qubits in the resulting states, then $\QSCMM[1/r,1-1/r]$ is \NIQSZK{}-hard~\cite[Section 8.1]{CCKV08}.} \Cref{thm:GapFEst-in-QIP(2)-informal} also yields the following: 
\begin{corollary}
    \label{corr:NIQSZK-upper-bound}
    \NIQSZK{} is in \qqQAM{} with a quantum linear-space (and thus single-exponential-time) honest prover. More precisely, the space complexity of the honest prover in the resulting \qqQAM{}  proof system is linear in the size of the transcript in the original \NIQSZK{} proof system.
\end{corollary}

To justify that our notion of linearity in \Cref{corr:QSZK-upper-bound,corr:NIQSZK-upper-bound} captures the \emph{nature} of quantum interactive proof systems, we view the inclusions $\QSZK \subseteq \QIPtwo \cap \coQIPtwo$ and $\NIQSZK \subseteq \qqQAM$ primarily as transformations from one class of quantum interactive proof systems (e.g., proof systems with the statistical zero-knowledge property) to another class with distinct structural features (e.g., being parallelizable to two messages, which is stronger than the usual three-message parallelization in~\cite{KW00,KKMV09}). From this perspective, the notion of linearity should be defined in terms of the underlying proof system, which motivates our choice to measure it by the \emph{transcript} of the original proof system, whose size has the same order as the verifier’s message length in the resulting two-message proof systems. 

\subsection{Revisiting the upper bound \texorpdfstring{$\QIPtwo \cap \coQIPtwo$}{QIP(2)∩coQIP(2)}} 
\label{subsec:revisiting-QSZK-upper-bound}

Before explaining the proofs of \Cref{thm:GapQSD-in-QIP(2)-informal,thm:GapFEst-in-QIP(2)-informal}, we first revisit the $\QIPtwo \cap \coQIPtwo$ upper bound established in~\cite{Watrous02}. 

\paragraph{$\GapQSD \in \QIPtwo$.}
The \QIPtwo{} part follows directly from the \QIPtwo{} proof system of \GapQSD{}, as shown in~\cite[Section 4.2]{Watrous02}. This proof system can be seen as a computational version of quantum hypothesis testing (see \Cref{prob:state-discrimination}). In particular, the verifier $\calV$ proceeds as follows:
\begin{enumerate}[label={\upshape(\roman*)}]
    \item $\calV$ sends a quantum state $\rho$, promised to be either $\rho_0$ or $\rho_1$. 
    \item $\calV$ receives a guess $b\in\binset$,\footnote{Although the prover may send an arbitrary quantum state $\sigma$, the verifier can obtain the classical bit $b\in\binset$ by measuring the first qubit of $\sigma$ in the computational basis, ignoring all remaining qubits, and denoting the outcome by $b$.\label{footnote:single-bit-response}} and accepts if $\rho_b$ matches the state $\rho$ exactly. 
\end{enumerate}

Notably, this proof system has classical counterparts, such as the zero-knowledge protocol for Graph Non-isomorphism~\cite{GMW91}. The prover aims to maximize the acceptance probability but can only perform \textit{a two-outcome measurement} on the received state. By the Holevo--Helstrom bound~\cite{Holevo73TraceDist,Helstrom69}, the optimal success probability is $\frac{1}{2} + \frac{1}{2} \td(\rho_0,\rho_1)$, which directly yields an upper bound on the acceptance probability for \emph{no} instances. The optimal measurement $\cbra*{\Pi_0,\Pi_1}$, known as the \textit{Holevo--Helstrom measurement}, has been used to achieve the acceptance probability lower bound for \emph{yes} instances.

\paragraph{$\GapFEst \in \QIPtwo$.}
The \coQIPtwo{} part boils down to the \QIPtwo{} proof system of \GapFEst{}, as presented in~\cite[Section 4.3]{Watrous02}. This proof system can be interpreted as a computational version of the Uhlmann fidelity test (see \Cref{prob:fidelity-test}) and does not have a direct classical counterpart. A natural starting point is testing the closeness between a quantum state $\rho$ and a pure state $\ket{\phi}$, as in~\cite[Exercise 9.2.2]{Wilde13}. The test measures $\rho$ using a two-outcome measurement $\cbra*{\ketbra{\phi}{\phi},I-\ketbra{\phi}{\phi}}$. The test succeeds if the first outcome is obtained, and the success probability $\Tr(\ketbra{\phi}{\phi}\rho)$ coincides exactly with the squared (Uhlmann) fidelity $\F^2(\ketbra{\phi}{\phi},\rho)$. 
In the general case, the verifier $\calV$ proceeds as follows:
\begin{enumerate}[label={\upshape(\roman*)}]
    \item $\calV$ prepares a purification $\ket{\psi_0}$ of $\rho_0$ using the given circuit $Q_0$ and sends the non-output qubits. 
    \item $\calV$ receives these qubits back, which are expected to be transformed by the prover. The modified ``purification'' of $\rho_0$, including the output and received qubits, is denoted by $\rho_{\psi_0}$.  
    \item $\calV$ measures $\rho_{\psi_0}$ using $\cbra*{\ketbra{\psi_1}{\psi_1},I-\ketbra{\psi_1}{\psi_1}}$ and accepts if the first outcome occurs. 
\end{enumerate}

As in the \QIPtwo{} part, the prover aims to maximize the acceptance probability but is restricted to applying \textit{a dimension-preserving quantum channel} $\Phi(\cdot)$ to the received qubits. By a corollary of Uhlmann's theorem~\cite{Uhlmann76} (\Cref{corr:stronger-Uhlmann}), proven in~\cite[Section 4.3]{Watrous02}, the maximum acceptance probability is $\F^2(\rho_0,\rho_1)$, which implies an upper bound on the acceptance probability for \emph{no} instances. The optimal channel $\Phi_\star(\cdot) = U_\star (\cdot) U_\star^{\dagger}$, known as the \textit{Uhlmann transform}, is determined by the chosen purifications of $\rho_0$ and $\rho_1$, and has been used to obtain the acceptance probability lower bound for \emph{yes} instances. 

\subsection{Proof techniques}

We now provide approximate implementations of the honest prover strategies described in \Cref{subsec:revisiting-QSZK-upper-bound}, thereby establishing \Cref{thm:GapQSD-in-QIP(2)-informal,thm:GapFEst-in-QIP(2)-informal}. A central ingredient in our constructions is a \textit{space-efficient} polynomial approximation of the sign function~\cite{LGLW23}. 

\paragraph{The importance of space-efficient polynomial approximations.} The quantum singular value transformation (QSVT) framework~\cite{GSLW19} reduces the design of quantum algorithms to finding good polynomial approximations $P^\mathrm{f}_d$ of a target function $f(x)$ in an appropriate form (``pre-processing''), such as rotation-angle representations~\cite{GSLW19} or coefficients in Chebyshev-type truncations~\cite{MY23,LGLW23}. Moreover, the efficiency of the resulting quantum algorithms is largely determined by the degree $d$.\footnote{The (classical) pre-processing in the time-efficient QSVT~\cite{GSLW19} uses $\poly(d)$ time, so the corresponding space complexity is trivially bounded above by $\poly(d)$.} Importantly, $d$ must be \textit{exponential} in $n$ for QSVT-based approaches to estimating the trace distance~\cite{WGL+22,WZ23} or the fidelity~\cite{WZC+23,GP22,MY23,UNWT25} between quantum states whose  purifications are on $n$ qubits, even to within \textit{constant} precision. This requirement arises from the \textit{square-root}-rank dependence in quantum query complexity lower bounds~\cite{CFMdW10,BKT20,CWZ25}. 

Therefore, to establish \Cref{thm:GapQSD-in-QIP(2)-informal,thm:GapFEst-in-QIP(2)-informal}, we rely on space-efficient polynomial approximations $P_{d'}$ from~\cite{LGLW23}, which can be computed \textit{simultaneously} in $\poly(d)$ time and $O(\log{d})$ space, yielding $2^{O(n)}$ time and $O(n)$ space. Here, the original degree $d$ comes from the time-efficiently computable polynomials~\cite{GSLW19}, and the new degree $d'=O(d)$ is kept explicit to distinguish the space-efficient version.\footnote{To make a polynomial approximation space-efficiently computable, as in~\cite[Section 3.1]{LGLW23}, this increase in degree from $d$ to $d'$ maintains the polynomial approximation error at $O(\epsilon)$, compared with the original approximation error $\epsilon$ associated with $P_d$.} 
 
\subsubsection{Algorithmic Holevo--Helstrom measurement}
\label{subsubsec:algo-HH-meas-informal}

As discussed in \Cref{subsec:revisiting-QSZK-upper-bound}, the honest prover's strategy underlying $\GapQSD\in\QIPtwo$ is the Holevo--Helstrom measurement $\cbra*{\Pi_0,\Pi_1}$, where $\Pi_1 \coloneqq I-\Pi_0$. The decomposition of the trace distance in~\cite[Equation (8)]{WZ23} yields an explicit form of $\Pi_0$ (see \Cref{prop:HH-meas}): 
\[ \td(\rho_0,\rho_1) = \Tr(\Pi_0\rho_0) - \Tr(\Pi_0\rho_1), \quad \text{where } \Pi_0 \coloneqq \frac{I}{2}+\frac{1}{2} \sign^{\SV}\rbra[\Big]{\frac{\rho_0-\rho_1}{2}}.\]

Our first technical contribution is an explicit implementation of $\tilde{\Pi}_0$, which approximately realizes the honest prover's strategy in~\Cref{thm:GapQSD-in-QIP(2)-informal} and ensures that, for \emph{yes} instances, the maximum acceptance probability remains at least $\frac{1}{2} + \frac{1}{2}\td(\rho_0,\rho_1)-2^{-n}$: 

\begin{theorem}[Informal version of \Cref{thm:algo-HH-meas}]
    \label{thm:algo-HH-meas-informal}    
    For quantum states $\rho_0$ and $\rho_1$ specified in \GapQSD{}, whose purifications can be prepared by $n$-qubit polynomial-size quantum circuits $Q_0$ and $Q_1$, the Holevo--Helstrom measurement $\{\Pi_0,\Pi_1\}$ can be approximately implemented in quantum single-exponential time and linear space with additive error $2^{-n}$.  
\end{theorem}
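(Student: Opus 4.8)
\emph{Proof sketch.} The plan is to realize the measurement $\{\Pi_0,\Pi_1\}$ through the quantum singular value transformation (QSVT) applied to a block-encoding of the Hermitian operator $A\coloneqq\tfrac12(\rho_0-\rho_1)$, instantiated with the space-efficient polynomial approximation of $\sign$ from~\cite{LGLW23}. First I would construct, from the purification circuits $Q_0$ and $Q_1$ (which the honest prover knows, being part of the instance), the standard purified-density-matrix block-encodings of $\rho_0$ and $\rho_1$ --- each a $(1,O(n),0)$-block-encoding using one controlled call to $Q_b$, one to $Q_b^{\dagger}$, and $O(n)$ extra gates --- and then combine them by a linear-combination-of-unitaries (LCU) step with coefficients $(\tfrac12,-\tfrac12)$ into a $(1,O(n),0)$-block-encoding $U_A$ of $A$; the subnormalization is exactly $1$, which is consistent since $\norm{A}\le\tfrac12$. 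Since $A$ is Hermitian, feeding $U_A$ into QSVT with an odd real polynomial $P$ of degree $d$ with $\norm{P}_{[-1,1]}\le1$ produces a $(1,O(n),0)$-block-encoding of $P(A)=\sum_i P(\lambda_i)\ketbra{v_i}{v_i}$, where $A=\sum_i\lambda_i\ketbra{v_i}{v_i}$~\cite{GSLW19}.

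Next I would take $P=P_{d'}$ to be the \emph{space-efficient} odd polynomial of~\cite{LGLW23} with $\abs{P_{d'}(x)-\sign(x)}\le\epsilon$ on $[-1,-\delta]\cup[\delta,1]$ and $\abs{P_{d'}}\le1$ on $[-1,1]$, of degree $d'=O\bigl(\delta^{-1}\log(1/\epsilon)\bigr)$, whose rotation angles (equivalently, Chebyshev coefficients) are computable in $O(\log d')$ space and $\poly(d')$ time. One further LCU step combining $\tfrac12 I$ with $\tfrac12 P_{d'}(A)$ yields a block-encoding of $\widetilde\Pi_0\coloneqq\tfrac{I}{2}+\tfrac12 P_{d'}(A)$, which satisfies $0\preceq\widetilde\Pi_0\preceq I$ and is the intended approximant to $\Pi_0=\tfrac{I}{2}+\tfrac12\sign^{\SV}(A)$. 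The honest prover's strategy is then: on receiving $\rho\in\{\rho_0,\rho_1\}$, adjoin fresh ancillas in $\ket0$, apply the block-encoding unitary (which invokes $Q_0,Q_1$ only on its own scratch registers and uncomputes them), and measure the block ancilla, reporting the guess $b$ accordingly. This realizes a genuine two-outcome POVM whose ``$b=0$'' effect is $\widetilde\Pi_0^2$, which approximates $\Pi_0$ just as well as $\widetilde\Pi_0$ does (because $\Pi_0$ acts as a projector on $(\ker A)^{\perp}$, and the residual $\ker A$ part never contributes to the bias below); alternatively one applies one more QSVT step to block-encode $\sqrt{\widetilde\Pi_0}$, again a shifted-and-rescaled sign polynomial, and obtains the effect $\widetilde\Pi_0$ exactly. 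Running QSVT of degree $d'$ uses $d'\cdot\poly(n)$ gates on $O(n)$ qubits, and by~\cite{LGLW23} the classical preprocessing runs in $O(\log d')$ reusable space; with $d'=2^{O(n)}$ (see below) this is quantum single-exponential time and linear space.

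The crux is the error estimate, and the obvious route fails: because $A$ may have eigenvalues arbitrarily close to $0$ --- where no bounded polynomial can track $\sign$ --- one cannot control $\norm{\widetilde\Pi_0-\Pi_0}$, which may be $\Omega(1)$. Instead I would bound only the quantity that matters for the proof system, the distinguishing bias $\Tr(\widetilde\Pi_0\rho_0)-\Tr(\widetilde\Pi_0\rho_1)=\Tr\bigl(\widetilde\Pi_0(\rho_0-\rho_1)\bigr)=\sum_i P_{d'}(\lambda_i)\lambda_i$, where the $\tfrac{I}{2}$ term drops out since $\Tr(\rho_0-\rho_1)=0$, and compare it with the ideal bias $\sum_i\sign(\lambda_i)\lambda_i=\Tr\abs{A}=\td(\rho_0,\rho_1)$. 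Their difference equals $\sum_i(\sign(\lambda_i)-P_{d'}(\lambda_i))\lambda_i$; each term with $\abs{\lambda_i}\ge\delta$ contributes at most $\epsilon\abs{\lambda_i}\le\epsilon$, and each term with $\abs{\lambda_i}<\delta$ at most $2\abs{\lambda_i}<2\delta$, and since $A$ acts on at most $2^n$ dimensions there are at most $2^n$ terms, so the total defect is at most $2^n(\epsilon+2\delta)$. Taking $\epsilon=\delta=2^{-2n-2}$ makes this below $2^{-n}$, so the implemented measurement has bias at least $\td(\rho_0,\rho_1)-2^{-n}$ and, on \emph{yes} instances, acceptance probability at least $\tfrac12+\tfrac12\td(\rho_0,\rho_1)-2^{-n}$; simultaneously it forces $d'=O\bigl(\delta^{-1}\log(1/\epsilon)\bigr)=2^{O(n)}$. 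The harmless squaring from measuring the block ancilla is absorbed by the same estimate, since $x\mapsto\bigl(\tfrac12+\tfrac12 P_{d'}(x)\bigr)^2$ takes values in $[0,1]$ and lies within $\epsilon$ of $\mathbf{1}[x>0]$ on $[-1,-\delta]\cup[\delta,1]$.

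The main obstacle is thus conceptual rather than computational: recognizing that the right notion of ``$2^{-n}$-approximate implementation'' here is closeness of the induced distinguishing bias (equivalently, of the acceptance probability), not operator-norm closeness of the POVM elements, and that this bias is governed by the \emph{eigenvalue-weighted} polynomial error --- the same mechanism underlying the trace-distance decomposition of~\cite{WZ23}. This is exactly what makes exponentially small polynomial error near $0$ both necessary (there can be $\sim 2^n$ near-zero eigenvalues, each contributing error the polynomial cannot suppress) and affordable: the resulting exponential degree costs only single-exponential \emph{time}, while the space-efficient QSVT of~\cite{LGLW23} keeps the \emph{space} linear.
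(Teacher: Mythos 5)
Your approach matches the paper's in its core idea: reduce to a block-encoding of $A=\tfrac12(\rho_0-\rho_1)$ via purified-density-matrix encodings plus LCU, apply the space-efficient QSVT with the odd polynomial $P_{d'}^{\sign}$, and control the distinguishing bias $\sum_i\bigl(\sign(\lambda_i)-P_{d'}(\lambda_i)\bigr)\lambda_i$ by splitting eigenvalues at $\delta$ and paying only eigenvalue-weighted error. Your observation that operator-norm closeness of the POVM elements is the wrong target, and that the eigenvalue-weighted error is what matters, is exactly the right insight and is also the mechanism the paper uses (the paper additionally sharpens the small-eigenvalue tail via the rank bound $\le 2^r$ and $\sum_i|\lambda_i|\le 1$ rather than your cruder $2^n$ dimension count, but yours still gives $d'=2^{O(n)}$ so that is only a constant-factor difference in the exponent).

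There are, however, two genuine divergences worth flagging. First, a cosmetic one: the paper does not block-encode $\widetilde\Pi_0$ and measure the block ancilla. Instead it block-encodes $A_{\mathrm{HH}}\approx P_{d'}^{\sign}(A)$ and runs the Hadamard test (one-bit phase estimation, Lemma~\ref{lemma:Hadamard-test}), which outputs $0$ with probability $\tfrac12+\tfrac12\operatorname{Re}\Tr(A_{\mathrm{HH}}\rho)=\Tr(\widetilde\Pi_0\rho)$ directly, no squaring. Your route forces you to wrestle with the effect being $\widetilde\Pi_0^2$ rather than $\widetilde\Pi_0$; your fix does go through (as you argue, the $\ker A$ part is irrelevant to the bias and outside a $\delta$-neighbourhood of $0$ the squaring only helps), but it is an avoidable detour, and the second fix you offer (QSVT for $\sqrt{\widetilde\Pi_0}$) would require a fresh space-efficient polynomial approximation whose existence you neither cite nor construct.

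Second, and more substantively: your claim that the QSVT step yields a $(1,O(n),0)$-block-encoding of $P_{d'}(A)$ is incorrect for the \emph{space-efficient} QSVT. The whole point of the space-efficient construction of~\cite{LGLW23} (Lemma~\ref{lemma:sign-polynomial-implementation} in the paper) is that the classical preprocessing, done in $O(\log d)$ reusable space, can only approximate the coefficients, so the resulting unitary is a $\bigl(1,\,a+\lceil\log d'\rceil+3,\,144\hat{C}_{\sign}^2 d\sqrt{\epsilon_1}+(36\hat{C}_{\sign}+37)\epsilon_2\bigr)$-block-encoding, not an exact one. The paper accounts for this implementation error separately (via the matrix H\"older inequality, bounding it by $(36\hat{C}_{\sign}+37)\epsilon_2\cdot\td(\rho_0,\rho_1)$) and folds $\epsilon_2$ into its parameter choice. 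Your error budget omits this term entirely, so as written the analysis does not close; the omission is repairable, but it must be made, since without it you cannot conclude the $2^{-n}$ additive error.
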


Our approach is inspired by~\cite[Section III.A]{WZ23} (see also~\cite[Section 4.2]{LGLW23}). We start with the one-bit precision phase estimation~\cite{Kitaev95}, commonly referred to as the Hadamard test~\cite{AJL09}, which has previously been used in space-bounded quantum computation~\cite{TS13,FL18}. This procedure enables an explicit implementation of a two-outcome measurement $\cbra*{\Pi,I-\Pi}$, where $\Pi = (I+U)/2$, such that the acceptance probability is $\Tr(\Pi\rho)$, provided that the unitary $U$ can be \emph{approximately} implemented via a block-encoding.\footnote{Following~\cite[Lemma 9]{GP22}, given a block-encoding of a linear operator, the Hadamard test naturally extends to implement $\Pi = (I+A)/2$ with acceptance probability $\operatorname{Re}(\Tr(\Pi\rho))$.} 

To achieve this, we adopt the \textit{space-efficient} quantum singular value transformation~\cite{LGLW23}, specifically employing a polynomial approximation $P_{d'}^\sign$ of the sign function. 
Our explicit implementation of $\tilde{\Pi}_0$ is then accomplished as follows:
\begin{enumerate}[label={\upshape(\arabic*)}]
    \item Using the linear-combinations-of-unitaries technique in~\cite{BCC+15,GSLW19} (see also the space complexity analysis in~\cite[Lemma 3.22]{LGLW23}), one can implement an exact block-encoding of $(\rho_0-\rho_1)/2$, namely $\bra{\bar{0}} U_{(\rho_0-\rho_1)/2} \ket{\bar{0}} = (\rho_0-\rho_1)/2$, in quantum $O(n)$ space. 
    \item Using the space-efficient QSVT associated with the sign function~\cite[Corollary 3.25]{LGLW23}, a block-encoding of $\sign^{\SV}\rbra[\big]{\frac{\rho_0-\rho_1}{2}}$ can be approximately implemented in quantum $O(n)$ space. 
\end{enumerate}

The proof of \Cref{thm:algo-HH-meas-informal} is then completed by analyzing the errors introduced by the polynomial approximation $P_{d'}^\sign$ and the associated space-efficient QSVT implementation, which together accumulate to the desired bound of $2^{-n}$, as detailed in \Cref{subsec:algo-HHmeas-proof}. 

\subsubsection{Algorithmic Uhlmann transform}
\label{subsubsec:algo-Uhlmann-informal}

As discussed in \Cref{subsec:revisiting-QSZK-upper-bound}, the honest prover's strategy for $\GapFEst\in\QIPtwo$ is given by the Uhlmann transform $\Phi_\star (\cdot) = U_\star (\cdot) U_\star^{\dagger}$. Let $\ket{\psi_0}$ and $\ket{\psi_1}$ be purifications of the quantum states $\rho_0$ and $\rho_1$ on register $\sfA$, defined on two registers $(\sfA,\sfR)$, where $\sfR$ serves as the reference register. An explicit form of $U_\star$ is provided implicitly in~\cite[Lemma 6]{Jozsa94} (see \Cref{lemma:Uhlmann-transform}), yielding an alternative expression of the squared Uhlmann fidelity:
\[ \F^2(\rho_0,\rho_1) = \abs[\big]{\bra{\psi_0} \rbra[\big]{I^{\sfA} \!\otimes\! U_{\star}^{\sfR}} \ket{\psi_1}}^2, \quad\text{where } U_{\star} \coloneqq \sign^\SV \rbra*{\Tr_{\sfA} \rbra*{ \ketbra{\psi_0}{\psi_1} }}. \]

Our second technical contribution is an explicit implementation of $\Phi_\star(\cdot)$, which approximately realizes the honest prover's strategy in \Cref{thm:GapFEst-in-QIP(2)-informal}. This implementation guarantees that, for \textit{yes} instances, the maximum acceptance probability remains at least $\F^2(\rho_0,\rho_1)-2^{-n}$: 

\begin{theorem}[Informal version of \Cref{thm:algorithmic-Uhlmann-transform}]
    \label{thm:algo-Uhlmann-transform-informal}    
    For quantum states $\rho_0$ and $\rho_1$ specified in \GapFEst{}, whose purifications can be prepared by $n$-qubit polynomial-size quantum circuits $Q_0$ and $Q_1$, the Uhlmann transform $\Phi_\star(\cdot)$ can be approximately implemented in quantum single-exponential time and linear space with additive error $2^{-n}$. 
\end{theorem}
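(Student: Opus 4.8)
The plan is to follow the proof of \Cref{thm:algo-HH-meas}, with the eigenvalue transformation of the Hermitian operator $(\rho_0-\rho_1)/2$ replaced by a \emph{singular-value} transformation of the (generally non-Hermitian) cross-term operator
\[ X \coloneqq \Tr_{\sfA}\rbra{\ketbra{\psi_0}{\psi_1}} \quad\text{on register } \sfR, \qquad \ket{\psi_b}=Q_b\ket{\bar 0}. \]
By \Cref{lemma:Uhlmann-transform}, the Uhlmann transform is $\Phi_\star(\cdot)=U_\star(\cdot)U_\star^{\dagger}$ with $U_\star=\sign^{\SV}(X)$, and by Uhlmann's theorem (\Cref{corr:stronger-Uhlmann}) one has $\norm{X}_1=\F(\rho_0,\rho_1)$ with $\abs{\Tr(U_\star X^{\dagger})}=\norm{X}_1$. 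Two further observations drive the argument. First, in the \QIPtwo{} protocol recalled in \Cref{subsec:revisiting-QSZK-upper-bound}, if the honest prover applies to $\sfR$ a channel $\Phi(\cdot)=\sum_k E_k(\cdot)E_k^{\dagger}$, the verifier accepts with probability $\sum_k\abs{\braket{\psi_0}{I^{\sfA}\otimes E_k^{\sfR}}{\psi_1}}^{2}$ (after matching, if needed, the adjoint convention of the protocol), hence at least $\abs{\braket{\psi_0}{I^{\sfA}\otimes E_0^{\sfR}}{\psi_1}}^{2}$ for its leading Kraus operator $E_0$; at the ideal choice $E_0=U_\star$ this equals $\abs{\Tr(U_\star X^{\dagger})}^{2}=\F^{2}(\rho_0,\rho_1)$. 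Second, $\abs{\braket{\psi_0}{I^{\sfA}\otimes U^{\sfR}}{\psi_1}}=\abs{\Tr(UX^{\dagger})}$ for every operator $U$ on $\sfR$, so it suffices to implement space-efficiently a unitary $\widetilde U$ for which $\abs{\Tr\rbra{(U_\star-\widetilde U)X^{\dagger}}}\le 2^{-n-1}$.

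First I would construct an exact, space-efficient block-encoding of $X$: introduce a fresh copy $\sfR'$ of $\sfR$ and set $V\coloneqq\mathrm{SWAP}_{\sfR\sfR'}\,(Q_1^{\dagger})_{\sfA\sfR}\,(Q_0)_{\sfA\sfR'}$, a unitary on $\sfA\sfR\sfR'$ invoking each of $Q_0,Q_1$ once. Tracking the computation --- apply $Q_0$ on $(\sfA,\sfR')$, then $Q_1^{\dagger}$ on $(\sfA,\sfR)$, then swap $\sfR\leftrightarrow\sfR'$, then contract the ancilla $(\sfA,\sfR')$ onto $\ket{\bar 0}$ --- one verifies $\bra{\bar 0}_{\sfA\sfR'}\,V\,\ket{\bar 0}_{\sfA\sfR'}=X$; since $\norm{X}\le\norm{X}_1\le 1$ this is a genuine $(1,O(n),0)$-block-encoding, realized in $O(n)$ space and $\poly(n)$ time (alternatively one can assemble the same object from the linear-combinations-of-unitaries and purified-access gadgets used for \GapQSD{} in \Cref{thm:algo-HH-meas}). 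Next I would feed $V$ --- or $V^{\dagger}$, which block-encodes $X^{\dagger}$, when the protocol calls for $U_\star^{\dagger}$ --- to the space-efficient QSVT for the sign function of~\cite{LGLW23}, using the space-efficiently computable odd polynomial $P^{\sign}_{d'}$ of degree $d'=2^{O(n)}$ that is bounded by $1$ on $[-1,1]$ and satisfies $\abs{P^{\sign}_{d'}(x)-1}\le\epsilon_0$ on $[\delta,1]$ for a resolution threshold $\delta=2^{-\Theta(n)}$ and error $\epsilon_0=2^{-\Theta(n)}$. Since $P^{\sign}_{d'}$ is odd and the singular values of $X$ lie in $[0,1]$, the singular-value transform yields, up to operator-norm error $\epsilon_1=2^{-\Theta(n)}$, a block-encoding of $\widetilde U\coloneqq P^{\sign,\SV}_{d'}(X)$; computing $P^{\sign}_{d'}$ costs $\poly(d')=2^{O(n)}$ time and $O(\log d')=O(n)$ space, and the resulting circuit makes $O(d')$ calls to $V,V^{\dagger}$ with $O(n)$ extra ancillas, hence is $2^{O(n)}$-time and $O(n)$-space. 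The prover then adjoins the block-encoding ancillas in state $\ket{\bar 0}$, applies this block-encoding unitary to $\sfR$, and discards the ancillas, obtaining a channel whose $\ket{\bar 0}\!\to\!\ket{\bar 0}$ Kraus operator is $\widetilde U$.

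The heart of the argument is the acceptance-probability bound. Writing $X=W\Sigma V_0^{\dagger}$ in singular-value decomposition, so $U_\star=WV_0^{\dagger}$ and $\widetilde U=WP^{\sign}_{d'}(\Sigma)V_0^{\dagger}$, one computes
\[ \abs[\big]{\Tr\rbra[\big]{(U_\star-\widetilde U)X^{\dagger}}}=\abs[\Big]{\textstyle\sum_i\rbra{1-P^{\sign}_{d'}(\sigma_i)}\sigma_i}, \]
to which the block-encoding's $\epsilon_1$-error contributes at most $\epsilon_1\norm{X}_1\le\epsilon_1$. The singular values $\sigma_i\ge\delta$ contribute at most $\epsilon_0\sum_i\sigma_i\le\epsilon_0$, while those $\sigma_i<\delta$ --- exactly the ones on which a polynomial bounded by $1$ cannot be close to $\sign$ --- contribute at most $2\delta\cdot\rank(X)\le 2^{n+1}\delta$. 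Absorbing this last term is the delicate point: it forces $\delta=2^{-\Theta(n)}$ (small enough that $2^{n+1}\delta\le 2^{-n-2}$), which in turn pushes $d'=O(\delta^{-1}\log\epsilon_0^{-1})$ only up to $2^{O(n)}$, so the running time stays single-exponential and the space stays $O(n)=O(\log d')$. Choosing the constants so that $\epsilon_0+2^{n+1}\delta+\epsilon_1\le 2^{-n-1}$ gives $\abs{\braket{\psi_0}{I^{\sfA}\otimes\widetilde U^{\sfR}}{\psi_1}}=\abs{\Tr(\widetilde U X^{\dagger})}\ge\F(\rho_0,\rho_1)-2^{-n-1}$, hence an acceptance probability on \emph{yes} instances of at least $\rbra{\F(\rho_0,\rho_1)-2^{-n-1}}^{2}\ge\F^{2}(\rho_0,\rho_1)-2^{-n}$; soundness is untouched, since the verifier's Uhlmann-theorem bound already caps this probability at $\F^{2}(\rho_0,\rho_1)$ against every prover. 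This establishes \Cref{thm:algorithmic-Uhlmann-transform}. I expect the block-encoding of $X$ and the QSVT invocation to be routine given~\cite{LGLW23}; the main obstacle is the truncation estimate above --- bounding the loss from singular values of $X$ lying below the polynomial's resolution threshold and verifying that, with $d'$ kept single-exponential in $n$, it fits inside the target error $2^{-n}$.
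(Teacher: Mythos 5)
Your proposal is correct and follows essentially the same route as the paper's proof of \Cref{thm:algorithmic-Uhlmann-transform}: obtain an exact, ancilla-$n$, single-query-per-circuit block-encoding of $X=\Tr_\sfA\rbra{\ketbra{\psi_0}{\psi_1}}$, feed it to the space-efficient QSVT with the odd sign polynomial $P^\sign_{d'}$, and control the error by splitting singular values at a threshold $\delta=2^{-\Theta(n)}$. Your block-encoding $V=\SWAP_{\sfR\sfR'}(Q_1^\dagger)_{\sfA\sfR}(Q_0)_{\sfA\sfR'}$ is a re-derivation of the same gadget as the paper's $W$ from \Cref{lemma:Uhlmann-transform-block-encoding} (different register choreography, identical cost and identical block-encoded operator), and your flagging of the $U_\star$-vs.-$U_\star^\dagger$ convention mismatch with the protocol is exactly the care needed here. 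The one genuine organizational difference is in the error accounting: you bound the \emph{linear} quantity $\abs{\Tr\rbra{(U_\star-\widetilde U)X^\dagger}}$ by $\epsilon_0+2^{n+1}\delta+\epsilon_1$ and then square once at the end via $(\F-2^{-n-1})^2\ge\F^2-2^{-n}$, whereas the paper works directly with the squared overlap and repeatedly invokes a difference-of-squares expansion (\Cref{eq:algo-Uhlmann-polyError}, \Cref{eq:algo-Uhlmann-implementError}); your version is a bit cleaner and yields the same target. Your rank bound $\rank(X)\le 2^n$ is looser than the paper's $\rank(X)\le\min\{\rank\rho_0,\rank\rho_1\}\le 2^r$, but this only shifts $\delta$ by a polynomial-in-$n$ factor in the exponent and costs nothing at the granularity of single-exponential time and $O(n)$ space, which is all the theorem asks for.
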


Our approach is inspired by~\cite[Section 5.1]{UNWT25}. Analogous to \Cref{subsubsec:algo-HH-meas-informal}, we aim to use the space-efficient QSVT associated with the sign function, as established in~\cite[Section 3]{LGLW23}, corresponding to the space-efficient polynomial approximation $P_{d'}^{\sign}$. A technical challenge is to obtain an \textit{exact} block-encoding of 
\[\XUhl \coloneqq \Tr_{\sfA} \rbra*{ \ketbra{\psi_0}{\psi_1} }.\] 
A straightforward approach for realizing the partial trace is to contract $\abs{\sfA}=\log\dim(\calH_\sfA)$ EPR pairs, yielding only an exact encoding of $X_\mathrm{Uhl}/\!\dim(\calH_\sfA)$, as shown in~\cite[Section II.D]{MY23}. Handling this normalization factor $\dim(\calH_\sfA)$ requires additional effort and leads to an implementation that is both less efficient and conceptually more involved. Notably, an exact block-encoding $W$ of $\XUhl$ was recently proposed in~\cite[Section 5.1]{UNWT25}. Leveraging this key ingredient, our explicit implementation of $\Phi_\star(\cdot)$ proceeds as follows:
\begin{enumerate}[label={\upshape(\arabic*)}]
    \item Following~\cite[Section 5.1]{UNWT25} (see \Cref{lemma:Uhlmann-transform-block-encoding}), one can implement an exact block-encoding $W$ of $\Tr_{\sfA} \rbra*{ \ketbra{\psi_0}{\psi_1} }$, namely $\bra{\bar{0}} W \ket{\bar{0}} = \Tr_{\sfA} \rbra*{ \ketbra{\psi_0}{\psi_1} }$, using quantum $O(n)$ space. 
    \item Using the space-efficient QSVT associated with the sign function~\cite[Corollary 3.25]{LGLW23}, a block-encoding of $\sign^{\SV}\rbra*{\Tr_{\sfA} \rbra*{ \ketbra{\psi_0}{\psi_1} }}$ can be approximately implemented in quantum $O(n)$ space. 
\end{enumerate}

Similar to \Cref{subsubsec:algo-HH-meas-informal}, the proof of \Cref{thm:algo-Uhlmann-transform-informal} is completed by analyzing the errors introduced by the polynomial approximation $P_{d'}^\sign$ and the associated space-efficient QSVT implementation. These errors combine to the desired bound of $2^{-n}$, as elaborated in \Cref{subsec:algo-Uhlmann-proof}.

\subsection{Discussion and open problems}
\label{subsec:open-problems}

\paragraph{Improving upper bounds for \QSZK{}.}
The main open problem is to further improve the upper bounds for \QSZK{} and \NIQSZK{} beyond our results (\Cref{corr:QSZK-upper-bound,corr:NIQSZK-upper-bound}). Since the best known upper bound for the classical counterpart \SZK{} is $\AM\cap\coAM$, as established in~\cite{Fortnow87,AH91}, this inclusion naturally motivates the following question: 

\begin{enumerate}[label={\upshape(\alph*)}]
    \item Could the quantum upper bound for \QSZK{} be improved to a subclass of \QIPtwo{} defined in terms of ``public coins'' quantum interactive proof systems~\cite{MW05,KLGN19}, such as $\qqQAM \cap \coqqQAM$?
\end{enumerate}

A more intriguing question concerns the \textit{classical} upper bound for \QSZK{}, whose best known bound is \PSPACE{} and is believed ``almost certainly can be improved'' in~\cite[Section 7]{Watrous02}:

\begin{enumerate}[label={\upshape(\alph*)}]
    \setcounter{enumi}{1}
    \item \label{open-problem:QSZK-classical-upper-bound} Could the classical upper bounds for \QSZK{} and \NIQSZK{} be improved to any subclass of \PSPACE{}? 
\end{enumerate}

As noted at the beginning of this section, the classical upper bound \PSPACE{} for complexity classes ranging from \QSZK{} to \QIP{}~\cite{Watrous02,JUW09,JJUW11} is obtained via $\NC(\poly)$ algorithms for the corresponding problems. Consequently, making progress on Question~\ref{open-problem:QSZK-classical-upper-bound} likely requires techniques that go beyond this paradigm.

\paragraph{Improving the computational efficiency of the honest prover.}
As noted in \Cref{subsec:main-results}, a na\"ive approach consists of combining the method underlying~\cite[Theorem II.4]{MY23} with the space-efficient QSVT from~\cite{LGLW23}. Let $\omega(\calV)$ denote the maximum acceptance probability of the quantum interactive proof system $\protocol{\calP}{\calV}$. Informally, this combination yields a quantum algorithm that computes $\omega(\calV)$ to within constant precision and simultaneously produces an SDP solution specifying the associated quantum states. This algorithm requires $O(n)$ iterations on a block-encoding that initially acts on $O(n)$ qubits. The honest prover's strategy is then approximately implemented via the algorithmic Uhlmann transform constructed from these states. 
Since the number of required ancillary qubits in this algorithm eventually grows to $O(n^2)$, the resulting algorithm still requires at least \textit{quadratic} quantum space.\footnote{See also the discussion in~\cite[Section 1.6]{LGLW23}.}

Noting that the notion of honest-prover efficiency in our results (\Cref{thm:GapQSD-in-QIP(2)-informal,thm:GapFEst-in-QIP(2)-informal}) appears specifically tailored for \textit{two-message} quantum interactive proof systems, the distinction between our results and the general case suggests the following question:  
\begin{enumerate}[label={\upshape(\alph*)}]
    \setcounter{enumi}{2}
    \item \label{open-problem:prover-efficiency} Could the honest prover's strategy in any two-message quantum interactive proof system be approximately implemented in quantum \emph{linear} space, meaning that the space complexity of the algorithmic implementation scales \emph{linearly} with the number of qubits on which the verifier's message-preparing circuit acts?  
\end{enumerate}

A natural starting point for Question \ref{open-problem:prover-efficiency} is to revisit the \qqQAM{} containment of the \textsc{Close Image to Totally Mixed Problem} (\textsc{CITM})~\cite{KLGN19}.\footnote{It is worth noting that the \qqQAM{} containment of $\textsc{CITM}[a,b]$, as stated in~\cite[Lemma 4.1]{KLGN19}, holds only for the constant parameter regime $(1-a)^2>1-b^2$. This is because the underlying proof system in~\cite[Figure 2]{KLGN19} is essentially designed for the closeness testing problem associated with $\max_\sigma\F^2(\Phi(\sigma),(I/2)^{\otimes r})$.} Here, as a \qqQAM{}-hard problem, \textsc{CITM} is a generalization of \QSCMM{}, defined in terms of $\min_{\sigma}\td(\Phi(\sigma),(I/2)^{\otimes r})$, where the quantum channel $\Phi(\cdot)$ can be implemented by a polynomial-size mixed-state quantum circuit. 

Noting that the proof system in~\cite[Figure 2]{KLGN19} is structurally similar to the \QIPtwo{} containment of \GapFEst{}, one might expect that \Cref{thm:GapFEst-in-QIP(2)-informal,thm:algo-Uhlmann-transform-informal} extend naturally to this more general setting. 
However, the honest prover now also needs to construct a nearly optimal $\tilde{\sigma}_\star$ satisfying 
\[\td(\Phi(\tilde{\sigma}_\star),(I/2)^{\otimes r}) \approx_{\epsilon} \min_{\sigma}\td(\Phi(\sigma),(I/2)^{\otimes r}).\]
It remains unclear how to achieve such a construction in quantum linear space, and this difficulty constitutes a technical barrier to resolving Question~\ref{open-problem:prover-efficiency}. Notably, an affirmative answer to that question would yield a tighter characterization of \QIPtwo{}. 

\subsection{Related works}
\label{subsec:related-works}

An approximate implementation of the Uhlmann transform was previously studied in~\cite[Section II.D]{MY23} under the name ``Algorithmic Uhlmann's Theorem'', in the context of unitary-synthesis complexity classes (e.g., $\mathsf{unitaryPSPACE}$; see also~\cite{BEM+23}). The central distinction between the prior construction in~\cite[Theorem II.5]{MY23}\footnote{For the formal statement, please refer to Theorem 7.4 in the arXiv version of~\cite{MY23}.} and \Cref{thm:algorithmic-Uhlmann-transform} (whose informal version is \Cref{thm:algo-Uhlmann-transform-informal}) is that our construction achieves an \textit{exponentially} improved time complexity when measured in the state dimension $N\coloneq 2^n$. This improvement arises because our construction requires only quantum \textit{linear} space, whereas theirs requires quantum \textit{polynomial} space. As a consequence, our resulting time complexity is $2^{O(n)}=\poly(N)$, while theirs is $2^{p(n)} = N^{q(n)}$ for some functions $p(n)$ and $q(n)\coloneq p(n)/n$ that are both polynomial in $n$. 

Beyond the algorithmic perspective, it is worth noting that a \textit{stability} result of the Uhlmann transform, referred to as ``robust rigidity'', has been recently investigated in~\cite{BMY25}. 

\paragraph{Other notions of the honest-prover efficiency.}
A natural, though folklore, notion of honest-prover efficiency is that of \textit{in-class} interactive proofs, formalized in~\cite[Definition 1]{GKL19}. This notion means that for any promise problem in a complexity class \textsc{C}, there exists a proof system $\protocol{P}{V}$ such that the verifier decides the problem and the honest prover's strategy can be (approximately) implemented in \textsc{C}. This notion applies to complexity classes such as $\Ptime^\countP$ and $\PSPACE$~\cite{LFKN92,Shamir92} via the sum-check protocol, as well as to an intermediate class \textsf{PreciseQCMA}~\cite{GKL19}.\footnote{The equivalence of \textsf{PreciseQCMA} and $\NP^\PP$ is established in~\cite{MN17,GSS+22}.} The same notion naturally extends to other settings, including in-class space-bounded classical interactive proofs for \Ptime{}~\cite{GKR15}, in-class quantum interactive proofs for \BQPSPACE{}~\cite{MY23}, and in-class streaming proofs for \BQL{}~\cite{GRZ23}. 

A more quantitative, practically  motivated notion is that of \textit{doubly-efficient} interactive proofs (see the survey~\cite{Goldreich18}), in which a polynomial-time (honest) prover ideally delegates the computation to an almost-linear-time verifier via interactions, with~\cite{GKR15} serving as a canonical example and subsequent improvements in~\cite{RRR16}.  

\section{Preliminaries}
\label{sec:preliminaries}

We assume that the reader has a basic familiarity with quantum computation and quantum information theory. For an introduction, the textbooks by~\cite{NC10,deWolf19} offer accessible starting points. For a more comprehensive overview of quantum complexity theory, see~\cite{Watrous08}; for a survey specifically focused on quantum interactive proof systems, refer to~\cite{VW16}. 

For convenience, we adopt the following notations throughout this work: (i) the symbol $\ket{\bar{0}}$ denotes an $a$-qubit state $\ket{0}^{\otimes a}$ for $a>1$. (ii) the logarithmic function $\log$ is taken to be base-$2$ by default, i.e., $\log(x)\coloneqq\log_2(x)$ for all positive real numbers $x$. (iii) hidden log factors are suppressed using the notation $\widetilde{O}(f)\coloneqq O(f \polylog(f))$. 

\subsection{Schatten norm and a matrix H\"older inequality}
For $1\leq p \leq \infty$, the Schatten $p$-norm of a matrix $A$ is defined by 
\[\norm{A}_{p} \coloneqq \rbra[\big]{\Tr\rbra{\abs{A}^{p}}}^{1/p}, \quad\text{where } \abs{A} \coloneqq \sqrt{A^\dagger A}.\] 
When $p=1$, this norm reduces to the \textit{trace norm} $\norm{A}_1=\Tr|A|$. When $p=\infty$, this norm becomes the \textit{operator norm}, given by $\norm{A} \coloneqq \norm{A}_{\infty} = \sigma_{\max}(A),$ where $\sigma_{\max}(A)$ denotes the largest singular value of $A$.  We also need the following version of the matrix H\"older inequality:
\begin{lemma}[H\"older inequality for Schatten norms, adapted from~{\cite[Equation 1.174]{Watrous18}}]
    \label{lemma:matrix-Holder-inequality}
    For each $p\in[1,\infty]$, let $q\in[1,\infty]$ satisfy $\frac{1}{p}+\frac{1}{q}=1$. For every matrix $A$, it holds that the Schatten $p$-norm and $q$-norm are \emph{dual}. Consequently, for all matrices $B$,
    \[ \abs*{\Tr\rbra[\big]{B^\dagger A}} \leq \norm{A}_p\norm{B}_q. \]
\end{lemma}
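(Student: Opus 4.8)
The plan is to reduce both assertions to the classical H\"older inequality for finite sequences, using the singular value decomposition together with von Neumann's trace inequality as the bridge. Throughout, let $\sigma_1(\cdot)\geq\sigma_2(\cdot)\geq\cdots\geq 0$ denote singular values listed in nonincreasing order, so that $\norm{A}_{p}=\rbra{\sum_i\sigma_i(A)^p}^{1/p}$ for $p<\infty$ and $\norm{A}_{\infty}=\sigma_1(A)$. I would first prove the inequality $\abs{\Tr(B^\dagger A)}\leq\norm{A}_{p}\norm{B}_{q}$, and then exhibit a matrix $B$ attaining equality (for fixed $A$), which upgrades the inequality to the claimed duality $\norm{A}_{p}=\sup\cbra{\abs{\Tr(B^\dagger A)}:\norm{B}_{q}\leq 1}$ with respect to the Hilbert--Schmidt inner product $\innerprodF{B}{A}=\Tr(B^\dagger A)$.

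First I would establish von Neumann's trace inequality: for matrices $A,B$ of the same dimensions, $\abs{\Tr(B^\dagger A)}\leq\sum_i\sigma_i(A)\,\sigma_i(B)$. Writing $A=\sum_i\sigma_i(A)\,u_iv_i^\dagger$ and $B=\sum_j\sigma_j(B)\,x_jy_j^\dagger$ in singular value decomposition (padding to full orthonormal bases) and expanding the trace gives $\abs{\Tr(B^\dagger A)}\leq\sum_{i,j}\sigma_i(A)\sigma_j(B)\,D_{ij}$, where $D_{ij}=\abs{\innerprod{x_j}{u_i}}\cdot\abs{\innerprod{v_i}{y_j}}$. By Cauchy--Schwarz applied to the two families of orthonormal vectors, the nonnegative matrix $D=(D_{ij})$ has all row sums and all column sums at most $1$, i.e.\ it is doubly substochastic, hence a convex combination of subpermutation matrices; since $(\sigma_i(A))_i$ and $(\sigma_j(B))_j$ are sorted in the same order, a rearrangement argument shows $\sum_{i,j}\sigma_i(A)\sigma_j(B)D_{ij}$ is maximized over that polytope by the identity, yielding $\sum_i\sigma_i(A)\sigma_i(B)$. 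Alternatively, this inequality can simply be invoked, e.g.\ from~\cite{Watrous18}.

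Next I would apply the classical H\"older inequality with exponents $p,q$ to the nonnegative singular-value sequences:
\[ \sum_i\sigma_i(A)\sigma_i(B)\leq\rbra[\Big]{\sum_i\sigma_i(A)^p}^{1/p}\rbra[\Big]{\sum_i\sigma_i(B)^q}^{1/q}=\norm{A}_{p}\norm{B}_{q}, \]
with the boundary cases $\cbra{p,q}=\cbra{1,\infty}$ handled directly by $\sum_i\sigma_i(A)\sigma_i(B)\leq\sigma_1(B)\sum_i\sigma_i(A)$. Combining with the trace inequality gives $\abs{\Tr(B^\dagger A)}\leq\norm{A}_{p}\norm{B}_{q}$. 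For tightness, given the SVD $A=\sum_i\sigma_i(A)\,u_iv_i^\dagger$ and $1<p<\infty$, I would take $B\coloneqq\norm{A}_{p}^{1-p}\sum_i\sigma_i(A)^{p-1}u_iv_i^\dagger$; using the identity $q(p-1)=p$ (equivalent to $\tfrac1p+\tfrac1q=1$) one checks $\norm{B}_{q}=1$ and $\Tr(B^\dagger A)=\norm{A}_{p}$. The cases $p=\infty$ (take $B=u_1v_1^\dagger$) and $p=1$ (take $B=\sum_i u_iv_i^\dagger$) are immediate. This proves the duality and shows the H\"older bound is best possible.

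The only genuinely nontrivial ingredient is von Neumann's trace inequality; everything else is bookkeeping with the singular value decomposition and the scalar H\"older inequality. I expect the doubly-substochastic/rearrangement argument (or, equivalently, the majorization statement $\sigma(B^\dagger A)\prec_w\sigma(A)\ast\sigma(B)$) to be the main obstacle to a fully self-contained write-up, which is presumably why the lemma is stated as ``adapted from'' a reference rather than proved from scratch.
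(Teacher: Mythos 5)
The paper does not prove this lemma itself; it simply cites it from Watrous's textbook (Equation 1.174 of \emph{The Theory of Quantum Information}), so there is no in-paper argument to compare against. Your proposal follows the standard route used in that reference — von Neumann's trace inequality reduces the matrix statement to the scalar H\"older inequality applied to the sorted singular-value sequences, and the explicit choice $B=\norm{A}_p^{1-p}\sum_i\sigma_i(A)^{p-1}u_iv_i^\dagger$ (with the boundary cases $p\in\{1,\infty\}$ handled separately) witnesses tightness and hence the duality $\norm{A}_p=\sup\{\abs{\Tr(B^\dagger A)}:\norm{B}_q\le 1\}$. I checked the exponent bookkeeping: from $1/p+1/q=1$ one has $q(p-1)=p$, so $\norm{B}_q^q=\norm{A}_p^{q(1-p)}\sum_i\sigma_i(A)^p=\norm{A}_p^{q(1-p)+p}=\norm{A}_p^{0}=1$, and $\Tr(B^\dagger A)=\norm{A}_p^{1-p}\sum_i\sigma_i(A)^p=\norm{A}_p$. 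Your argument is correct and essentially coincides with the cited source; as you note, the only nontrivial ingredient is von Neumann's inequality, for which the doubly-substochastic/rearrangement argument you sketch (or, equivalently, the weak majorization $\sigma(B^\dagger A)\prec_w(\sigma_i(A)\sigma_i(B))_i$) is the standard justification.
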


\subsection{Closeness measures for quantum states and the corresponding testing problems}
We begin by defining quantum states. A square matrix $\rho$ is called a \textit{quantum state} if $\rho$ is positive semi-definite and has unit trace, that is, $\Tr(\rho)=1$. 

\paragraph{Closeness measures for quantum states.}
We then introduce two measures of closeness between quantum states that are the focus of this work: 
\begin{definition}[Trace distance]
    Let $\rho_0$ and $\rho_1$ be two (possibly mixed) quantum states. The trace distance between $\rho_0$ and $\rho_1$ is defined by
    \[ \td(\rho_0,\rho_1)\coloneqq\frac{1}{2}\Tr|\rho_0-\rho_1|=\frac{1}{2}\norm*{\rho_0-\rho_1}_1. \]
\end{definition}

\begin{definition}[Uhlmann fidelity]
    Let $\rho_0$ and $\rho_1$ be two (possibly mixed) quantum states. The (Uhlmann) fidelity between $\rho_0$ and $\rho_1$ and its square are defined by
    \[ \F(\rho_0,\rho_1)\coloneqq\Tr|\sqrt{\rho_0}\sqrt{\rho_1}|=\norm*{\sqrt{\rho_0}\sqrt{\rho_1}}_1 \quad\text{and}\quad \F^2(\rho_0,\rho_1)\coloneqq \norm*{\sqrt{\rho_0}\sqrt{\rho_1}}_1^2.\]
\end{definition}

The trace distance reaches its minimum value of $0$ when $\rho_0$ equals $\rho_1$, while the (squared) fidelity attains its maximum of $1$. Conversely, the trace distance reaches its maximum value of $1$ when the supports of $\rho_0$ and $\rho_1$ are orthogonal, and the squared fidelity attains its minimum of $0$. Importantly, the trace distance and the (squared) fidelity are related by the well-known Fuchs--van de Graaf inequalities:
\begin{lemma}[Trace distance vs.~fidelity, adapted from~\cite{FvdG99}]
\label{lemma:traceDist-vs-fidelity}
    Let $\rho_0$ and $\rho_1$ be two (possibly mixed) quantum states. Then, it holds that
    \[1-\F(\rho_0,\rho_1) \leq \td(\rho_0,\rho_1) \leq \sqrt{1-\F^2(\rho_0,\rho_1)}.\]
\end{lemma}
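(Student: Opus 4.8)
The plan is to prove the two inequalities separately, since they rest on genuinely different ingredients; throughout, recall that in the paper's convention $\F$ denotes the (non-squared) fidelity $\norm{\sqrt{\rho_0}\sqrt{\rho_1}}_1$.

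\emph{Upper bound $\td(\rho_0,\rho_1)\le\sqrt{1-\F^2(\rho_0,\rho_1)}$.} First I would invoke Uhlmann's theorem to obtain purifications $\ket{\psi_0},\ket{\psi_1}$ on a common space $\calH_\sfA\otimes\calH_\sfB$ of $\rho_0,\rho_1$ with $\F(\rho_0,\rho_1)=\abs{\innerprod{\psi_0}{\psi_1}}$. A two-dimensional computation on $\spanset\{\ket{\psi_0},\ket{\psi_1}\}$ shows that the traceless Hermitian operator $\ketbra{\psi_0}{\psi_0}-\ketbra{\psi_1}{\psi_1}$ has eigenvalues $\pm\sqrt{1-\abs{\innerprod{\psi_0}{\psi_1}}^2}$ (compute $\Tr[(\ketbra{\psi_0}{\psi_0}-\ketbra{\psi_1}{\psi_1})^2]=2-2\abs{\innerprod{\psi_0}{\psi_1}}^2$ and use tracelessness), hence $\td(\ketbra{\psi_0}{\psi_0},\ketbra{\psi_1}{\psi_1})=\sqrt{1-\F^2(\rho_0,\rho_1)}$. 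Since the trace distance is non-increasing under quantum channels and $\Tr_\sfB$ sends $\ketbra{\psi_b}{\psi_b}$ to $\rho_b$, this gives $\td(\rho_0,\rho_1)\le\sqrt{1-\F^2(\rho_0,\rho_1)}$.

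\emph{Lower bound $1-\F(\rho_0,\rho_1)\le\td(\rho_0,\rho_1)$.} The key input here is the Powers--Størmer inequality $\norm{\sqrt{\rho_0}-\sqrt{\rho_1}}_2^2\le\norm{\rho_0-\rho_1}_1$. Expanding the left-hand side and using $\Tr(\rho_0)=\Tr(\rho_1)=1$ yields $2-2\Tr(\sqrt{\rho_0}\sqrt{\rho_1})$; the quantity $\Tr(\sqrt{\rho_0}\sqrt{\rho_1})$ is real (its conjugate is $\Tr(\sqrt{\rho_1}\sqrt{\rho_0})$ by cyclicity) and satisfies $\Tr(\sqrt{\rho_0}\sqrt{\rho_1})\le\abs{\Tr(\sqrt{\rho_0}\sqrt{\rho_1})}\le\norm{\sqrt{\rho_0}\sqrt{\rho_1}}_1=\F(\rho_0,\rho_1)$ by the $(p,q)=(1,\infty)$ case of \Cref{lemma:matrix-Holder-inequality} (with $B^\dagger=I$). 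Chaining these, $2-2\F(\rho_0,\rho_1)\le 2-2\Tr(\sqrt{\rho_0}\sqrt{\rho_1})\le\norm{\rho_0-\rho_1}_1=2\td(\rho_0,\rho_1)$, and dividing by $2$ closes the argument.

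\emph{Main obstacle.} Neither half requires a hard calculation; all the content sits in the two external facts. Uhlmann's theorem is routine to cite. The Powers--Størmer inequality is the only genuinely nontrivial ingredient: if it must be proved in place, I would reduce to the Jordan decomposition $\rho_0-\rho_1=\Delta_+-\Delta_-$ with orthogonal supports, write $\norm{\rho_0-\rho_1}_1=\Tr[(\rho_0-\rho_1)S]$ for $S=\sign(\rho_0-\rho_1)$ with $S^2\le I$, and bound $\Tr[(\sqrt{\rho_0}-\sqrt{\rho_1})^2]\le\Tr[(\rho_0-\rho_1)S]$ using the operator concavity of $t\mapsto\sqrt{t}$; alternatively, the lower bound can be routed through the Fuchs--Caves measurement characterization of fidelity, though that is a heavier hammer.
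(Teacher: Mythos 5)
The paper does not reproduce a proof of this lemma; it simply cites Fuchs and van de Graaf, so there is no in-paper argument to compare against. Your proof is correct and is precisely the standard one: for the upper bound, take Uhlmann-optimal purifications, compute the trace distance of the purifications exactly (your two-dimensional rank computation is right), and apply monotonicity of the trace norm under the partial-trace channel; for the lower bound, Powers--Størmer gives $2-2\Tr(\sqrt{\rho_0}\sqrt{\rho_1})\le\norm{\rho_0-\rho_1}_1$, and the bound $\Tr(\sqrt{\rho_0}\sqrt{\rho_1})\le\F(\rho_0,\rho_1)$ closes it. One minor simplification you could make: $\Tr(\sqrt{\rho_0}\sqrt{\rho_1})$ is not merely real but non-negative, since by cyclicity it equals $\Tr\bigl(\rho_0^{1/4}\rho_1^{1/2}\rho_0^{1/4}\bigr)$, the trace of a positive semi-definite operator; thus you can drop the absolute value and appeal only to $\Tr(A)\le\norm{A}_1$ for $A$ of non-negative trace, rather than routing through \Cref{lemma:matrix-Holder-inequality}. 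Your fallback suggestions for Powers--Størmer (operator concavity of the square root, or the Fuchs--Caves measurement characterization of fidelity) are both viable and well known, so the ``main obstacle'' paragraph is accurately assessed.
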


Furthermore, the operational interpretations of the trace distance and the (squared) fidelity, namely the Holevo--Helstrom bound~\cite{Holevo73TraceDist,Helstrom69} and Uhlmann's theorem~\cite{Uhlmann76,Jozsa94}, together with the corresponding \textit{optimal} operations that achieve these maxima, play a central role in this work. To keep the technical sections self-contained, we defer the formal statements of these results to \Cref{sec:algo-HH-meas,sec:algo-Uhlmann}, respectively. 

\paragraph{Closeness testing of quantum states via state-preparation circuits.}
Next, we introduce two promise problems defined with respect to the trace distance: 
\begin{definition}[\textsc{Quantum State Distinguishability}, {$\QSD[\alpha,\beta]$}, adapted from~{\cite[Section 3.3]{Watrous02}}]
    \label{def:QSD}
    Let $Q_0$ and $Q_1$ be polynomial-size quantum circuits acting on $n$ qubits, each with $r$ designated output qubits. For $b\in\binset$, let $\rho_b$ denote the quantum state obtained by applying $Q_b$ to the initial state $\ket{0}^{\otimes n}$ and tracing out the non-output qubits. Let $\alpha(n)$ and $\beta(n)$ be efficiently computable functions. The problem is to decide whether: 
    \begin{itemize}
        \item \textbf{Yes:} A pair of quantum circuits $(Q_0,Q_1)$ such that $\td(\rho_0,\rho_1) \geq \alpha(n)$; 
        \item \textbf{No:} A pair of quantum circuits $(Q_0,Q_1)$ such that $\td(\rho_0,\rho_1) \leq \beta(n)$. 
    \end{itemize}
\end{definition}

\begin{definition}[\textsc{Quantum State Closeness}, {$\QSC[\beta,\alpha]$}, adapted from~{\cite[Section 3]{Kobayashi03}}]
    \label{def:QSC}
    Let $Q_0$ and $Q_1$ be quantum circuits defined as in \Cref{def:QSD}, and let $\rho_0$ and $\rho_1$ denote the corresponding quantum states obtained from these circuits. Let $\alpha(n)$ and $\beta(n)$ be efficiently computable functions. The problem is to decide whether: 
    \begin{itemize}
        \item \textbf{Yes:} A pair of quantum circuits $(Q_0,Q_1)$ such that $\td(\rho_0,\rho_1) \leq \beta(n)$; 
        \item \textbf{No:} A pair of quantum circuits $(Q_0,Q_1)$ such that $\td(\rho_0,\rho_1) \geq \alpha(n)$.  
    \end{itemize}
\end{definition}

It is evident that \QSC{} is the complement of \QSD{}. Beyond \Cref{def:QSD,def:QSC}, this work also focuses on two additional closeness testing problems: 
\begin{enumerate}[label={\upshape(\arabic*)}]
    \item An important special case of \Cref{def:QSC} is the \textsc{Quantum State Closeness to Maximally Mixed State} (\QSCMM). This problem arises when $\rho_0$ is fixed to be the $r(n)$-qubit maximally mixed state $(I/2)^{\otimes r}$, and $Q_0$ is the circuit that prepares $r(n)$ EPR pairs acting on $n=2r$ qubits. 
    \item A closeness testing problem defined with respect to the squared fidelity, in particular, the promise problem \textsc{Quantum Squared Fidelity Estimation} ($\FEst[\alpha,\beta]$). This problem asks whether $\F^2(\rho_0,\rho_1) \geq \alpha(n)$ for \emph{yes} instances or $\F^2(\rho_0,\rho_1) \leq \beta(n)$ for \emph{no} instances. 
\end{enumerate}

\subsection{A space-efficient quantum algorithmic toolkit}

\paragraph{Space-efficient QSVT.}
We start by introducing key tools from the space-efficient quantum singular value transformation (QSVT) framework~\cite[Section 3]{LGLW23}. In particular, we recall the notions of block-encodings and singular value transformations of linear operators: 
\sloppy
\begin{definition}[Block encodings, adapted from~\cite{GSLW19}]
    A unitary $U$ is called an $(\alpha,a,\epsilon)$-\emph{block-encoding} of a linear operator $A$ if 
    \[\norm{A - \alpha\rbra*{\bra{0}^{\otimes a}}U\rbra*{\ket{0}^{\otimes a}}} \leq \epsilon.\] 
    Here, $U$ acts on $s+a$ qubits. In particular, a block-encoding  $U$ is called an \emph{exact block-encoding} if the normalization factor satisfies $\alpha=1$ and the error $\epsilon=0$. 
\end{definition}

\begin{definition}[Singular value transformation by even or odd functions, adapted from Definition 9 in~{\cite{GSLW19}}]
    \label{def:matrix-SV-function}
    Let $f\colon \bbR \rightarrow \bbC$ be an even or odd function, and let $A\in \bbC^{\tilde{d}\times d}$ have the singular value decomposition $A = \sum_{i=1}^{\min\{d,\tilde{d}\}} \sigma_i \ket{\tilde{\psi}_i}\bra{\psi_i}$. The \emph{singular value transformation} corresponding to $f$ is defined as: 
    \[f^{\SV}(A) \coloneqq \begin{cases}
        \sum_{i=1}^{\min\{d,\tilde{d}\}} f(\sigma_i) \ket{\tilde{\psi}_i}\bra{\psi_i},&\text{for odd }f,\\
        \sum_{i=1}^d f(\sigma_i) \ket{\psi_i}\bra{\psi_i},&\text{for even }f.
    \end{cases}\]
    Here, $\sigma_i\coloneqq0$ for $i \in \{\min\{d,\tilde{d}\}+1,\cdots,d-1,d\}$. 
    In particular, for any Hermitian matrix $A$, it holds that $f^{\SV}(A)=f(A)$. 
\end{definition}

In this work, we require the space-efficient QSVT associated with the sign function,
\[\sign(x)\coloneqq{\scriptscriptstyle
    \begin{cases}
    1,& x > 0\\
    -1,& x < 0\\
    0,& x=0
    \end{cases}}.\]
To obtain such an algorithmic subroutine, we use a polynomial approximation of the sign function whose coefficients can be computed space-efficiently: 
\begin{lemma}[Space-efficient approximation to the sign function, adapted from~{\cite[Corollary 3.6]{LGLW23}}]
    \label{lemma:space-efficient-sign}
    For any $\delta > 0$ and $\epsilon > 0$, there exists an explicit odd polynomial 
    \[P_{d'}^{\sign}(x)=\hat{c}_0/2+\sum_{k=1}^{d'} \hat{c}_k T_k(x) \in \bbR[x]\] 
    of degree $d'\leq\tilde C_{\sign
    }\cdot\frac{1}{\delta}\log{\frac{1}{\epsilon}}$, where $d'=2d-1$ and $\tilde C_{\sign}$ is a universal constant. 
    Every entry of the coefficient vector $\hat{\bfc}\coloneqq(\hat{c}_0,\cdots,\hat{c}_{d'})$ can be computed in deterministic time $\widetilde{O}\big(d^{2}/\sqrt{\epsilon}\big)$ and space $O(\log(d^3/\epsilon^{3/2}))$. Furthermore, the polynomial $P_{d'}^{\sign}$ satisfies the following conditions: 
    \begin{align*}
    \forall x \in [-1,1] \setminus [-\delta,\delta],~& \left|\sign(x)-P^{\sign}_{d'}(x)\right| \leq C_{\sign} \epsilon, \text{ where } C_{\sign}=5;\\
    \forall x \in [-1,1],~& \left|P_{d'}^{\sign}(x) \right| \leq 1.
    \end{align*}    
    Furthermore, the coefficient vector $\hat{\bfc}$ satisfies the norm bound $\|\hat{\bfc}\|_1 \leq \hat{C}_{\sign} \log{d'}$, where $\hat{C}_{\sign}$ is another universal constant. 
    We assume without loss of generality that $d'\geq 2$ and that $\hat{C}_{\sign}$ and $\tilde{C}_{\sign}$ are at least $1$.
\end{lemma}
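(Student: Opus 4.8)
The plan is to start from a near-optimal error-function approximation of the sign function and then show that, after a mild increase of the degree, its Chebyshev coefficients admit a closed form that can be evaluated term by term with only a logarithmic amount of working memory.

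\textbf{Step 1 (the analytic approximant).} First I would set $\beta\coloneqq k^2$ with $k=\Theta(\tfrac1\delta\sqrt{\log(1/\epsilon)})$ and recall that $\erf(kx)=\tfrac{2}{\sqrt\pi}\int_0^{kx}e^{-t^2}\dd t=\tfrac{2k}{\sqrt\pi}\int_0^{x}e^{-\beta u^2}\dd u$ already satisfies $\abs{\sign(x)-\erf(kx)}\le\epsilon$ for all $\abs{x}\ge\delta$ and $\abs{\erf(kx)}<1$ on $[-1,1]$, using the tail bound $1-\erf(y)\le e^{-y^2}$ for $y\ge0$. To turn this into a polynomial I would expand the Gaussian in its Jacobi--Anger-type Chebyshev series $e^{-\beta u^2}=e^{-\beta/2}\bigl(I_0(\beta/2)+2\sum_{n\ge1}(-1)^nI_n(\beta/2)T_{2n}(u)\bigr)$, truncate it at $n=d-1$ with $d=O(\tfrac1\delta\log\tfrac1\epsilon)$ — legitimate because $e^{-\beta/2}I_n(\beta/2)$ decays like a Gaussian in $n$ of width $\sqrt\beta=k$ — and integrate term by term via $\int_0^xT_{2n}(u)\dd u=\tfrac12\bigl(\tfrac{T_{2n+1}(x)}{2n+1}-\tfrac{T_{2n-1}(x)}{2n-1}\bigr)$. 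This produces an odd polynomial of degree $2d-1$, so $d'=2d-1$; a final rescaling by $1/(1+O(\epsilon))$ enforces $\abs{P_{d'}^{\sign}(x)}\le1$ on $[-1,1]$ while keeping the approximation error at $\le5\epsilon=C_{\sign}\epsilon$ on $[-1,1]\setminus[-\delta,\delta]$.

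\textbf{Step 2 (space-efficient coefficients and the $\ell_1$ bound).} Collecting terms, each Chebyshev coefficient $\hat c_j$ is a sum of $O(d)$ products of a modified Bessel value $I_n(\beta/2)$, the scalar $\tfrac{2k}{\sqrt\pi}e^{-\beta/2}$, and a rational factor $\tfrac{1}{2(2n\pm1)}$. The Bessel values obey the three-term recurrence $I_{n-1}(z)-I_{n+1}(z)=\tfrac{2n}{z}I_n(z)$, so, run stably in the downward direction and normalized via $I_0(z)+2\sum_{n\ge1}I_n(z)=e^z$, they can be tabulated to $O(\log(d/\epsilon))$ bits of precision while keeping only a constant number of $O(\log(d/\epsilon))$-bit registers live; the remaining work is iterated sums and products of $O(\log(d/\epsilon))$-bit numbers, computable in deterministic logarithmic space. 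Carrying this precision throughout and rounding to dyadic rationals perturbs $P_{d'}^{\sign}$ pointwise by $\sum_j\abs{\Delta\hat c_j}\le O(\epsilon)$, preserving the bounds of Step 1; a straightforward operation count over all $d'+1$ coefficients yields time $\widetilde O(d^2/\sqrt\epsilon)$ and space $O(\log(d^3/\epsilon^{3/2}))$. The bound $\norm{\hat{\bfc}}_1\le\hat C_{\sign}$ would then follow from the identity $e^{-\beta/2}(I_0(\beta/2)+2\sum_{n\ge1}I_n(\beta/2))=1$ (the Gaussian approximant has unit $\ell_1$ norm of Chebyshev coefficients), the $\tfrac{1}{2n\pm1}$ damping from the integration, the $\Theta(k)$ prefactor, and the $O(\epsilon)$ contribution of rounding.

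\textbf{Main obstacle.} The delicate part is Step 2: one must commit to a closed form for $\hat c_j$ explicit enough that every intermediate quantity is a ratio of $O(\log(d/\epsilon))$-bit integers computable in logarithmic space, control the numerical stability of the Bessel recurrence, and verify that the rounding errors accumulated over all $d'+1$ coefficients — together with their effect on both the uniform bound $\abs{P_{d'}^{\sign}}\le1$ and the approximation error on $[-1,1]\setminus[-\delta,\delta]$ — stay within the stated constants. This bookkeeping of precision against space (and the resulting time bound) is exactly the technical content underlying \cite[Corollary 3.6]{LGLW23}.
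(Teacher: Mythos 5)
This lemma is imported with attribution from \cite[Corollary 3.6]{LGLW23} and is not proved in the present paper, so there is no internal proof to compare against. Your Step 1 --- pass to $\erf(kx)$, expand the Gaussian via the modified-Bessel Chebyshev series, integrate term by term, truncate, and rescale by $1/(1+O(\epsilon))$ to enforce $\abs*{P^{\sign}_{d'}}\le 1$ --- is the classical time-efficient construction from the QSVT literature. The degree doubling $d'=2d-1$ together with the phrase ``degree-$d$ averaged Chebyshev truncation'' appearing in the proofs of \Cref{thm:algo-HH-meas,thm:algorithmic-Uhlmann-transform} suggests that~\cite{LGLW23} instead uses a de la Vall\'ee Poussin--type averaging of partial sums, which enforces $\abs*{P^{\sign}_{d'}}\le 1$ by construction rather than by post hoc rescaling; the two routes produce polynomials of the same degree, so this divergence is comparatively mild.

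The genuine gap is the $\ell_1$ bound at the end of your Step 2, which does not follow from the ingredients you list and in fact cannot hold with a parameter-independent constant for the polynomial you build. The Chebyshev coefficients of $\sign$ on $[-1,1]$ are $\hat c_{2m+1}(\sign)=\frac{4(-1)^m}{\pi(2m+1)}$, and for any polynomial $P$ with $\norm{P}_{\infty,[-1,1]}\le 1$ and $\abs{P(x)-\sign(x)}\le\epsilon$ off $[-\delta,\delta]$, the coefficient formula
$\hat c_{2m+1}(P)=\frac{2}{\pi}\int_{-1}^{1}\frac{P(x)T_{2m+1}(x)}{\sqrt{1-x^2}}\,\dd x$
gives $\abs{\hat c_{2m+1}(P)-\hat c_{2m+1}(\sign)}\le 2\epsilon+O(\delta)$, hence $\abs{\hat c_{2m+1}(P)}\gtrsim\frac{1}{2m+1}$ for all $m$ with $2m+1\lesssim(\epsilon+\delta)^{-1}$. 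Summing forces $\norm{\hat{\bfc}}_1=\Omega\rbra*{\log\min(1/\epsilon,1/\delta)}$. Concretely, in your own heuristic the factor $\sum_{n}e^{-\beta/2}I_n(\beta/2)/(2n+1)$ is $\Theta(\log\beta/\sqrt{\beta})$, not $\Theta(1/\sqrt{\beta})$, so the $\Theta(k)$ prefactor yields $\Theta(\log k)$, not $O(1)$. You would need to reconcile this with the cited statement --- either $\hat C_{\sign}$ in~\cite{LGLW23} carries an implicit $\log$-dependence on the parameters and the restatement here is loose, or the polynomial constructed there is not the direct $\erf$ truncation --- before the rest of Step 2 can go through as described.
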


With the polynomial approximation in \Cref{lemma:space-efficient-sign}, we can now state the space-efficient QSVT procedure associated with the sign function:
\begin{lemma}[Sign polynomial with space-efficient coefficients applied to block-encodings, adapted from~{\cite[Corollary 3.25]{LGLW23}}]
    \label{lemma:sign-polynomial-implementation}
    Let $A$ be an Hermitian matrix that acts on $s$ qubits, where $s(n) \geq \Omega(\log(n))$. Let $U$ be a $(1,a, \epsilon_1)$-block-encoding of $A$ that acts on $s+a$ qubits. 
    Then, for any $d' \leq 2^{O(s(n))}$ and $\epsilon_2 \geq 2^{-O(s(n))}$, we have an $\rbra[\big]{1, a+\ceil*{\log{d'}}+3, 144\hat{C}_{\sign}d'\log d'\epsilon_1^{1/2} + (36\hat{C}_{\sign}\log d' +37)\epsilon_2}$-block-encoding $V$ of $P_{d'}^{\sign}(A)$, where $P_{d'}^{\sign}$ is a space-efficient bounded polynomial approximation of the sign function from \Cref{lemma:space-efficient-sign}, and $\hat{C}_{\sign}$ is a universal constant. 
    This construction requires $O(d^2 \log{d})$ uses of $U$, $U^{\dagger}$, $\textsc{C}_{\Pi}\textsc{NOT}$,  $\textsc{C}_{\tilde{\Pi}}\textsc{NOT}$, and $O(d^2 \log{d})$ multi-controlled single-qubit gates. The description of $V$ can be computed in deterministic time $\widetilde{O}(d^{9/2}/\epsilon_2)$ and space $O(s(n))$.

    \noindent Furthermore, our construction directly extends to any non-Hermitian (but linear) matrix $A$ by replacing $P_{d'}^{\sign}(A)$ with $P_{\sign,d'}^{\SV}(A)$, defined analogously to \Cref{def:matrix-SV-function}.
\end{lemma}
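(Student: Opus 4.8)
The plan is to derive \Cref{lemma:sign-polynomial-implementation} by instantiating the space-efficient QSVT pipeline of \cite[Section~3]{LGLW23} with the explicit sign polynomial of \Cref{lemma:space-efficient-sign}. Since the statement is ``adapted from'' \cite[Corollary~3.25]{LGLW23}, the bulk of the construction is already in place, and I would mainly track the three points where the present formulation is exercised: which polynomial is plugged in, how the input block-encoding error $\epsilon_1$ propagates, and the passage from Hermitian $A$ to general linear $A$. Concretely, writing $P_{d'}^{\sign}(x)=\hat{c}_0/2+\sum_{k=1}^{d'}\hat{c}_k T_k(x)$ with $d'=2d-1$, $\norm{P_{d'}^{\sign}}_{\infty}\le 1$ on $[-1,1]$, and $\norm{\hat{\bfc}}_1\le \hat{C}_{\sign}$, I would build $V$ in three stages: (i) from the $(1,a,\epsilon_1)$-block-encoding $U$ of $A$, apply the standard quantum-walk (qubitization) construction, with the reflections $\textsc{C}_{\Pi}\textsc{NOT}$ and $\textsc{C}_{\tilde{\Pi}}\textsc{NOT}$ attached to the encoding subspace, to obtain block-encodings of the Chebyshev singular-value transforms $T_k^{\SV}(A)$ for $0\le k\le d'$, each costing $O(k)$ calls to $U,U^{\dagger}$; (ii) combine these with the weights $\hat{\bfc}$ via a linear-combinations-of-unitaries circuit with a $\lceil\log d'\rceil$-qubit selection register whose \textsc{prepare} unitary is realized by $O(d^2)$ multi-controlled single-qubit rotations, whose angles are specified to additive precision governed by $\epsilon_2$; and (iii) rescale the resulting block, which encodes $\norm{\hat{\bfc}}_1^{-1}P_{d'}^{\sign}(A)$ applied as a singular-value transform, to normalization exactly $1$ by a uniform singular-value amplification of the constant factor $\norm{\hat{\bfc}}_1\le\hat{C}_{\sign}$ --- legitimate because $\norm{P_{d'}^{\sign}}_{\infty}\le 1$ forces the encoded operator to have norm at most $1$. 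Stage~(iii) contributes the ``$+3$'' extra ancillas, for a total of $a+\lceil\log d'\rceil+3$, and summing $\sum_{k\le d'}O(k)=O(d^2)$ over stage~(i) gives the claimed query and gate counts.

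Next I would carry out the error and resource accounting. Three contributions enter the additive error: (a) the input block-encoding is only $\epsilon_1$-accurate, which, by the robustness of the singular-value transformation (\cite{GSLW19}; an operator-norm perturbation of $A$ affects the singular values through $\sqrt{A^{\dagger}A}$, whence the square-root dependence), a degree-$d'$ transform amplifies to $O(\hat{C}_{\sign}^2\, d\sqrt{\epsilon_1})$; (b) the finite precision $\epsilon_2$ with which the \textsc{prepare} and amplification rotations are computed contributes $O(\hat{C}_{\sign}\,\epsilon_2)$; and (c) \emph{no} term comes from the polynomial approximation itself, since $P_{d'}^{\sign}$ is exactly the object implemented --- the $\delta$-dependent approximation error of \Cref{lemma:space-efficient-sign} only enters later, when this lemma is applied. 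Collecting constants as in \cite{LGLW23} gives $144\hat{C}_{\sign}^2 d\sqrt{\epsilon_1}+(36\hat{C}_{\sign}+37)\epsilon_2$. For resources, \Cref{lemma:space-efficient-sign} already provides the coefficients in deterministic time $\widetilde{O}(d^2/\sqrt{\epsilon})$ and space $O(\log(d^3/\epsilon^{3/2}))$; and since $d'\le 2^{O(s(n))}$ forces $\log d'=O(s(n))$, every ancilla register and every gate description stays within $O(s(n))$ space, with the whole description of $V$ computable in deterministic time $\widetilde{O}(d^{9/2}/\epsilon_2)$. For the non-Hermitian extension I would note only that qubitization for the singular-value transformation alternates $U$ and $U^{\dagger}$, so the odd polynomial $P_{d'}^{\sign}$ acts as the singular-value transform $P_{\sign,d'}^{\SV}(A)$ on $A=\sum_i\sigma_i\ket{\tilde\psi_i}\bra{\psi_i}$, and the Hermitian statement is the special case $f^{\SV}(A)=f(A)$.

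I expect the main obstacle to be the robustness half of the error analysis --- obtaining the $\sqrt{\epsilon_1}$ dependence (rather than a naive $\epsilon_1$ or $\epsilon_1^2$), which requires passing through Jordan's lemma and the two-dimensional invariant subspaces of the qubitized walk and tracking how an operator-norm perturbation of $A$ turns into an angle perturbation of the walk operator. The secondary difficulty is stage~(iii): rescaling the sub-normalized LCU block to exact normalization $1$ must be done \emph{space-efficiently}, so one cannot invoke the quantum-signal-processing phase angles (whose space-efficient computation is precisely what \cite{LGLW23} is built to avoid); instead the amplification has to be realized through another constant-degree space-efficient polynomial, which is only affordable because the amplification factor $\hat{C}_{\sign}$ is an absolute constant. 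Everything else is bookkeeping --- ancilla counts and the $O(s(n))$-space computability of gate descriptions via $\log d'=O(s(n))$ --- which I would discharge by quoting the relevant statements from \cite[Section~3]{LGLW23}.
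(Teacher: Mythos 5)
This lemma is imported essentially verbatim from \cite[Corollary 3.25]{LGLW23}; the paper offers no proof of its own, so there is nothing internal to compare against. Your reconstruction---qubitize $U$ to realize Chebyshev singular-value transforms $T_k^{\SV}(A)$, combine them under LCU with the space-efficiently computed coefficient vector $\hat{\bfc}$, and amplify the normalization $\norm{\hat{\bfc}}_1$ up to $1$ using $\norm*{P_{d'}^{\sign}}_{\infty}\le 1$---is the standard Chebyshev-truncation QSVT pipeline and matches the construction in \cite[Section 3]{LGLW23}. Your remark that no $\delta$-dependent polynomial-approximation error appears in this lemma, only the implementation errors $\epsilon_1$ and $\epsilon_2$, correctly factors the statement: the $\delta$-dependent approximation error enters only at the point of application (as in the bounds leading to \Cref{thm:algo-HH-meas,thm:algorithmic-Uhlmann-transform}).

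One slip is worth correcting. Your first justification of the $\sqrt{\epsilon_1}$ dependence---that an operator-norm perturbation of $A$ ``affects the singular values through $\sqrt{A^\dagger A}$, whence the square-root dependence''---is wrong: singular values are $1$-Lipschitz in the operator norm of $A$ (Weyl/Mirsky perturbation bounds), so an $\epsilon_1$-perturbation of $A$ shifts singular values by at most $\epsilon_1$, not $\sqrt{\epsilon_1}$. The square root actually comes from the robustness bound for bounded polynomial singular-value transforms in \cite{GSLW19}: for a degree-$d$ polynomial $P$ bounded by $1$ on $[-1,1]$, an $\epsilon$ operator-norm perturbation of the encoded matrix changes $P^{\SV}$ by $O(d\sqrt{\epsilon})$. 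Your later explanation via Jordan's lemma and angle perturbation of the qubitized walk is closer to the true mechanism, but a written-out proof should simply invoke that robustness lemma rather than the $\sqrt{A^\dagger A}$ heuristic. This is a misstated intuition, not a gap: the construction, the $O(d^2)$ resource count, the $a+\lceil\log d'\rceil+3$ ancilla accounting, and the $O(s(n))$ space bound all hold as you describe.
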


\paragraph{Other quantum algorithmic subroutines.}
The first two subroutines are required to obtain an exact block-encoding of $(\rho_0-\rho_1)/2$. In particular, \Cref{lemma:purified-density-matrix} traces back to~\cite{LC19} and \Cref{lemma:space-efficient-LCU} is a space-efficient specialization of the LCU method~\cite{BCC+15}, with its space complexity analyzed in~\cite{LGLW23}.

\begin{lemma}[Purified density matrix, adapted from~{\cite[Lemma 25]{GSLW19}}]
\label{lemma:purified-density-matrix}
    Let $\rho$ be a quantum state on an $s$-qubit register $\sfA$, and let $U$ be a unitary acting on $(\sfA,\sfR)$ that prepares a purification of $\rho$, where the reference register $\sfR$ contains $a$ qubits. Specifically, 
    \[U \ket{0}^{\otimes a} \ket{0}^{\otimes s} = \ket{\rho} \quad \text{and} \quad \rho = \Tr_{\sfR}(\ket{\rho}\bra{\rho}).\]
    Then there exists an $O(a+s)$-qubit quantum circuit $\widetilde U$ that is a $(1, a+s, 0)$-block-encoding of $\rho$, using $O(1)$ queries to $U$ and $O(a+s)$ one- and two-qubit quantum gates. 
\end{lemma}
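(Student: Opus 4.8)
The plan is to use the standard dilation construction behind~\cite[Lemma 25]{GSLW19}. I would adjoin a fresh $s$-qubit \emph{system} register $\sfB$ and assemble $\widetilde U$ from one forward application of $U$, a qubit-wise swap between $\sfA$ and $\sfB$, and one application of $U^\dagger$; concretely,
\[
    \widetilde U \coloneqq \bigl(U^\dagger \otimes I_{\sfB}\bigr)\,\bigl(\mathrm{SWAP}_{\sfA,\sfB}\otimes I_{\sfR}\bigr)\,\bigl(U \otimes I_{\sfB}\bigr),
\]
acting on the $(2s+a)$-qubit register $(\sfA,\sfR,\sfB)$, where the $(s+a)$-qubit block $(\sfA,\sfR)$ (initialized to $\ket{\bar{0}}$) serves as the block-encoding ancilla and $\sfB$ is the $s$-qubit target register carrying the encoded operator.

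The only computation to carry out is the exact identity $\bra{\bar{0}}_{\sfA\sfR}\,\widetilde U\,\ket{\bar{0}}_{\sfA\sfR} = \rho$ as an operator on $\sfB$. Writing $\ket{\rho} = U\ket{\bar{0}}_{\sfA}\ket{\bar{0}}_{\sfR} = \sum_{i,j} c_{i,j}\,\ket{i}_{\sfA}\ket{j}_{\sfR}$, one has $\rho = \Tr_{\sfR}\bigl(\ketbra{\rho}{\rho}\bigr)$ with entries $\braket{i}{\rho}{k} = \sum_j c_{i,j}\,\overline{c_{k,j}}$. I would then evaluate the matrix element $\bra{\phi}_{\sfB}\bra{\bar{0}}_{\sfA\sfR}\,\widetilde U\,\ket{\bar{0}}_{\sfA\sfR}\ket{\psi}_{\sfB}$ for arbitrary states $\ket{\phi},\ket{\psi}$ on $\sfB$: the right-most $U$ sends $\ket{\bar{0}}_{\sfA\sfR}$ to $\ket{\rho}$, the left-most $U^\dagger$ sends $\bra{\bar{0}}_{\sfA\sfR}$ to $\bra{\rho}$, and after the swap the contraction over $\sfA$ and $\sfR$ collapses to $\sum_{i,k}\innerprod{\phi}{i}\bigl(\sum_j c_{i,j}\,\overline{c_{k,j}}\bigr)\innerprod{k}{\psi} = \braket{\phi}{\rho}{\psi}$. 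This exhibits $\widetilde U$ as a $(1,\,s+a,\,0)$-block-encoding of $\rho$, i.e.\ an exact block-encoding.

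For the resource bookkeeping: $\widetilde U$ invokes $U$ and $U^\dagger$ once each, so $O(1)$ queries to $U$; $\mathrm{SWAP}_{\sfA,\sfB}$ is a product of $s$ two-qubit swaps, each realized by three \CNOT gates, hence $O(s)$ one- and two-qubit gates; and the circuit acts on $2s+a = O(a+s)$ qubits, of which $s+a = O(a+s)$ are block-encoding ancilla, with normalization $\alpha=1$ and error $0$ by the identity above.

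I do not expect a genuine obstacle here; this is essentially a textbook dilation. The one point that requires care is the register bookkeeping: the swap must act between $\sfA$ (the register carrying $\rho$) and the fresh register $\sfB$, and \emph{not} on the reference register $\sfR$ — swapping in $\sfR$ would instead block-encode the transpose $\rho^{\top}$. It is also worth recording explicitly that $a$ need not be comparable with $s$ (for instance $a$ may be much smaller), which is harmless for all of the stated $O(a+s)$ bounds.
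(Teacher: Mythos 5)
Your construction and verification are correct, and they reproduce exactly the dilation from \cite[Lemma 25]{GSLW19}, which the paper cites without re-proving. One small inaccuracy in your closing aside: if $a=s$ and one swapped $\sfR$ with $\sfB$ instead, the resulting top-left block would be $\Tr_{\sfA}(\ketbra{\rho}{\rho})$, the \emph{other} reduced density operator of the purification; writing $\ket{\rho}=\sum_{i,j}c_{i,j}\ket{i}_{\sfA}\ket{j}_{\sfR}$ with coefficient matrix $C$, this is $(C^{\dagger}C)^{\top}$, which has the same spectrum as $\rho=CC^{\dagger}$ but is not $\rho^{\top}$ unless $C$ is normal (e.g.\ a Schmidt basis). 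This does not affect the main argument, which is complete as written.
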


For a nonzero real vector $\bfy=(y_0,\ldots,y_{m-1})$, we say that
$P_{\abs{\bfy}}$ is an $\epsilon$-state preparation operator for $\bfy$ if $P_{\abs{\bfy}}\ket{\bar{0}} \coloneqq \sum_{i=0}^{m-1}\sqrt{\hat{y}'_i}\ket{i}$ for some $\hat{\bfy}'$ satisfying $ \|\abs{\bfy}/\|\bfy\|_1-\hat{\bfy}'\|_1 \leq \epsilon$. 

\begin{lemma}[Linear combinations of bitstring indexed encodings, adapted from~{\cite[Lemma 29]{GSLW19}} and~{\cite[Lemma 3.22]{LGLW23}}] 
    \label{lemma:space-efficient-LCU}
    Given a matrix $A=\sum_{i=0}^{m-1} y_i A_i$ such that each linear operator $A_i$~$(0 \leq i < m)$ acts on $s$ qubits with the corresponding $(1,a,\epsilon_1)$-bitstring indexed encoding $U_i$ acting on $s+a$ qubits associated with the same projections $\tilde{\Pi}$ and $\Pi$. Also each $y_i$~$(0 \leq i < m)$ can be expressed in $O(s(n))$ bits with an evaluation oracle $\Eval$ that returns $\hat{y}_i$ with precision $\varepsilon\coloneqq O(\epsilon_2^2/m)$. Then utilizing an $\epsilon_2$-state preparation operator $P_{\abs{\bfy}}$ for the nonnegative vector $(\abs*{y_0},\ldots,\abs*{y_{m-1}})$ acting on $O(\log m)$ qubits, the diagonal unitary $D_{\bfy}=\sum_{i=0}^{m-1}\sigma_i \ket{i}\bra{i}+\rbra*{ I-\sum_{i=0}^{m-1}\ket{i}\bra{i}}$, 
    where $\sigma_i\in\{\pm 1\}$ satisfies $y_i=\sigma_i\abs*{y_i}$,
    and a $(s+a+\lceil\log{m}\rceil)$-qubit unitary $W=\sum_{i=0}^{m-1} \ket{i}\bra{i} \otimes U_i + \big(I-\sum_{i=0}^{m-1} \ket{i}\bra{i}\big)\otimes I$, we can implement a $(\|\bfy\|_1,a+\lceil\log{m}\rceil,\epsilon_1 \|\bfy\|_1 + \epsilon_2  \|\bfy\|_1)$-bitstring indexed encoding of $A$ acting on $s + a + \lceil\log m\rceil$ qubits with a single use of $W$, $P_{\abs{\bfy}}$, $P_{\abs{\bfy}}^{\dagger}$, and $D_\bfy$. 
    In addition, the (classical) pre-processing can be implemented in deterministic time $\tilde{O}(m^2 \log(m/\epsilon_2))$ and space $O(\log(m/\epsilon_2^2))$, and $m^2$ oracle calls to $\Eval$ with precision $\varepsilon$. 
\end{lemma}

The final subroutine is a specific version of one-bit precision phase estimation~\cite{Kitaev95}, commonly known as the Hadamard test~\cite{AJL09}:
\begin{lemma} [Hadamard test for block-encodings, adapted from~{\cite[Lemma 9]{GP22}}]
\label{lemma:Hadamard-test}
    Let $U$ be a $(1,a,0)$-block encoding of an $s(n)$-qubit linear operator $A$. There exists an explicit quantum circuit acting on $O(a+s)$ qubits that takes an $s(n)$-qubit quantum state $\rho$ as input and outputs $0$ with probability $\frac{1+\operatorname{Re}(\Tr(A\rho))}{2}$. 
\end{lemma}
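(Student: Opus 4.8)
The plan is to realize the standard Hadamard test, with the given block-encoding playing the role of the controlled unitary. Concretely, alongside the $s$-qubit input register $\sfS$ carrying $\rho$, I would introduce the $a$-qubit block-encoding register $\sfB$ initialized to $\ket{\bar 0}$ and a single control qubit $\sfC$ initialized to $\ket{0}$. The circuit applies a Hadamard to $\sfC$; then the controlled operation that applies $U$ to $\sfB\sfS$ conditioned on $\sfC$ being $\ket{1}$ (one controlled copy of the circuit realizing $U$); then a Hadamard to $\sfC$; and finally measures $\sfC$ in the computational basis and returns the measured bit. This acts on $a+s+1 = O(a+s)$ qubits and is explicit, which gives the claimed form; note that only the control qubit is measured, and nothing is post-selected.

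To verify the output distribution, I would use linearity: since the probability of outcome $0$ is affine in $\rho$, it suffices to analyze a pure input $\ket{\psi}_{\sfS}$ and then average over a spectral decomposition $\rho = \sum_j p_j \ketbra{\psi_j}{\psi_j}$. For a pure input, after the first Hadamard the joint state is $\tfrac{1}{\sqrt 2}\rbra{\ket{0}+\ket{1}}_{\sfC}\ket{\bar 0}_{\sfB}\ket{\psi}_{\sfS}$; after the controlled operation it becomes $\tfrac{1}{\sqrt 2}\rbra[\big]{\ket{0}_{\sfC}\ket{\bar 0}_{\sfB}\ket{\psi}_{\sfS} + \ket{1}_{\sfC}\,U\ket{\bar 0}_{\sfB}\ket{\psi}_{\sfS}}$; and after the second Hadamard the $\ket{0}_{\sfC}$-component is $\tfrac12\rbra[\big]{\ket{\bar 0}_{\sfB}\ket{\psi}_{\sfS} + U\ket{\bar 0}_{\sfB}\ket{\psi}_{\sfS}}$. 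Hence, using that $U$ is unitary so both summands are unit vectors, $\Pr{\text{outcome }0} = \tfrac14\norm[\big]{\ket{\bar 0}_{\sfB}\ket{\psi}_{\sfS} + U\ket{\bar 0}_{\sfB}\ket{\psi}_{\sfS}}^2 = \tfrac12\rbra[\big]{1 + \operatorname{Re}\,\bra{\bar 0}_{\sfB}\bra{\psi}_{\sfS}\,U\,\ket{\bar 0}_{\sfB}\ket{\psi}_{\sfS}}$. Since $U$ is a $(1,a,0)$-block-encoding of $A$, the exact identity $\rbra{\bra{\bar 0}_{\sfB}\otimes I_{\sfS}}\,U\,\rbra{\ket{\bar 0}_{\sfB}\otimes I_{\sfS}} = A$ reduces the inner product to $\bra{\psi}A\ket{\psi}$, so $\Pr{\text{outcome }0} = \tfrac12\rbra{1 + \operatorname{Re}\bra{\psi}A\ket{\psi}}$. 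Averaging with the weights $p_j$ and invoking linearity of the trace then yields $\Pr{\text{outcome }0} = \tfrac12\rbra{1 + \operatorname{Re}\Tr(A\rho)}$, as claimed.

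I do not anticipate a genuine obstacle here: the computation is routine. The only points that need a little care are that the block-encoding is \emph{exact} (normalization $1$ and error $0$), so no approximation term contaminates the $\ket{0}_{\sfC}$-amplitude, and that the mixed-input statement should be obtained from the pure-input computation by convex-linearity rather than manipulated at the amplitude level. In short, the lemma is exactly the block-encoding form of the Hadamard test of~\cite{GP22}, and the calculation sketched above is its proof.
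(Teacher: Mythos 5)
Your proof is correct, and it is the standard Hadamard-test argument that the paper is implicitly invoking by citing \cite[Lemma 9]{GP22} without reproducing a proof: controlled-$U$ sandwiched between Hadamards, the explicit amplitude bookkeeping on a pure input, the exact-block-encoding identity $\rbra{\bra{\bar 0}\otimes I}U\rbra{\ket{\bar 0}\otimes I}=A$, and convex-linearity to pass to a general $\rho$. Nothing is missing; the only detail worth keeping explicit, as you did, is that linearity is applied to the output probability as a function of the input density matrix rather than at the amplitude level.
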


\subsection{Error reduction for \QIPtwo{} via parallel repetition}
\label{sbusec:QIP(2)-error-reduction}

We briefly recap error reduction for two-message quantum interactive proof systems based on parallel repetition, following~\cite[Section 3.2]{JUW09}: 
\begin{lemma}[Error reduction for \QIPtwo{}, adapted from~{\cite[Section 3.2]{JUW09}}]
    \label{lemma:QIPtwo-error-reduction}
    Let $\protocol{\calP}{\calV}$ be a two-message quantum interactive proof system  with completeness $c(n)$ and soundness $s(n)$, satisfying $c(n)-s(n) \geq 1/q(n)$ for some function $q(n)$ that is polynomial in $n$. For any efficiently computable function $l(n)$ that is polynomial in $n$, one can construct a two-message quantum interactive proof system $\protocol{\calP'}{\calV'}$ with completeness $c'(n) \geq 1-2^{-l(n)}$ and soundness $s'(n) \leq 2^{-l(n)}$ that follows the repetition procedure described in \Cref{algo:QIP(2)-error-reduction}. 
\end{lemma}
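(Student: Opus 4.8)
The plan is to prove the lemma by analyzing the parallel-repetition protocol of \Cref{algo:QIP(2)-error-reduction}: the new verifier $\calV'$ runs $k = k(n)$ fresh copies of $\protocol{\calP}{\calV}$ in parallel (sending all $k$ first messages in the first round and collecting all $k$ responses in the second), and accepts if and only if the number $N$ of accepting copies is at least $\tfrac{c(n)+s(n)}{2}\,k$. I take $k = \Theta\!\big(q(n)^2\,l(n)\big)$, which is polynomial in $n$ since both $q$ and $l$ are. First I would verify completeness. On a \emph{yes} instance the honest prover $\calP'$ simply runs the original honest prover $\calP$ in each copy on disjoint private workspace, so the $k$ per-copy verdicts are mutually independent and each copy accepts with probability at least $c(n)$; hence $N$ stochastically dominates the number of successes in $k$ independent Bernoulli trials each of success probability $\ge c(n)$, and a Hoeffding bound gives $\Pr{N < \tfrac{c+s}{2}k} \le \exp\!\big(-\tfrac12 (c-s)^2 k\big) \le \exp\!\big(-\tfrac{k}{2q^2}\big) \le 2^{-l(n)}$ once the constant hidden in $k = \Theta(q^2 l)$ is chosen large enough.

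The delicate step is soundness, and this is where I expect the main obstacle to lie. A cheating prover $\calP'$ is \emph{not} forced to act independently across the $k$ copies: it receives all of $\calV'$'s messages at once and may apply a single global channel that entangles them, so the per-copy acceptance indicators need not be independent and a naive Chernoff estimate does not directly apply. The structural fact to exploit, following~\cite{JUW09}, is that $\calV'$'s $k$ first messages together form a \emph{product} state $\ket{\psi}^{\otimes k}$ across the copies. Consequently, for any global strategy the operation it induces on a single copy --- obtained by regarding the other $k-1$ copies as an ancilla whose state is uncorrelated with the target copy --- is itself a legitimate single-copy prover strategy, so on a \emph{no} instance each individual copy accepts with probability at most $s(n)$ by the soundness of $\protocol{\calP}{\calV}$. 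Upgrading this per-copy bound to the required exponential tail bound $\Pr{N \ge \tfrac{c+s}{2}k} \le 2^{-l(n)}$ is the crux: invoking the characterization of the optimal acceptance probability of a two-message quantum interactive proof system and its behavior under parallel composition (as in~\cite{JUW09}), one shows that an entangled prover can do no better against the $k$-fold threshold verifier than the best product strategy, so the same Hoeffding estimate $\exp\!\big(-\tfrac12 (c-s)^2 k\big) \le 2^{-l(n)}$ applies again.

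It then remains to observe that $\protocol{\calP'}{\calV'}$ is still a two-message protocol --- all first messages are sent together, all responses are returned together, and $\calV'$'s decision is a classical post-processing of the $k$ per-copy verdicts --- and that the construction is efficient because $k = \poly(n)$. The one genuine difficulty is the entangled-prover soundness analysis of the second paragraph; everything else is routine amplification bookkeeping. I would therefore organize the write-up as: (i) describe \Cref{algo:QIP(2)-error-reduction} and fix $k = \Theta(q^2 l)$; (ii) completeness via a Hoeffding bound on independent copies; (iii) soundness via the product structure of the verifier's messages together with the control of the two-message \QIPtwo{} value under parallel repetition; and (iv) check that two messages and polynomial efficiency are preserved.
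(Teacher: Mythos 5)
Your plan deviates from the paper's protocol in a way that breaks the soundness argument. \Cref{algo:QIP(2)-error-reduction} is not a single threshold test on $k$ copies: it is a two-level construction, taking the AND over $t_0$ batches of a shifted majority vote within each batch of $t_1$ copies. You propose instead a single threshold over $k = \Theta(q^2 l)$ copies and claim that a Hoeffding-type tail bound survives a potentially entangled prover ``by invoking the characterization of the optimal acceptance probability \ldots and its behavior under parallel composition.'' That step does not go through. The parallel-repetition tool from~\cite{JUW09} is the multiplicativity of the \QIPtwo{} value (via the max-output-fidelity SDP) under tensor-product, i.e.\ \emph{AND}, composition: the optimal acceptance probability of ``all $k$ copies accept'' is the product of the single-copy values. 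It says nothing about a threshold verifier. The observation you do make correctly --- that the marginal acceptance probability of each copy is at most $s$ because the verifier's first messages are in product form --- only yields, via Markov's inequality, $\Pr[N \ge \tfrac{c+s}{2}k] \le \tfrac{2s}{c+s} \le 1 - \tfrac{c-s}{2}$, a constant gap from $1$ and nothing like $2^{-l}$. A cheating prover can freely correlate the copies, and a bound on the marginals alone cannot produce a Chernoff/Hoeffding tail; there is no quantum parallel repetition theorem that upgrades a threshold test to exponential soundness the way you assert.

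This is exactly why the paper's protocol has two levels. Within a batch, the shifted majority vote together with the marginal/Markov bound gives per-batch soundness bounded away from $1$ by roughly $1/(2q)$; the $t_0$-fold AND across batches is then amplified to $2^{-l}$ by the AND-multiplicativity of the \QIPtwo{} value, the one place where the entangled-prover issue is genuinely resolved. Your completeness analysis (honest prover acts independently, Hoeffding applies) and your observation that the composed protocol remains two-message are both fine, but the one-level threshold cannot deliver the claimed soundness.
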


The new proof system $\protocol{\calP'}{\calV'}$ performs $t_0$ parallel batches of repetitions of $\protocol{\calP}{\calV}$, where each batch consists of $t_1$ independent executions, as specified in \Cref{algo:QIP(2)-error-reduction}. Acceptance in $\protocol{\calP'}{\calV'}$ is determined by taking the logical AND of the $t_0$ batch outcomes, where each batch outcome is obtained by applying a (shifted) majority vote to the outcomes of the $t_1$ executions in that batch. 

\begin{algorithm}[!htp]
	\caption{Error reduction for two-message quantum interactive proof systems.}
	\label{algo:QIP(2)-error-reduction}
    \SetEndCharOfAlgoLine{.}
    \SetKwInOut{Parameter}{Parameters}
    \Parameter{$t_0 \coloneqq 2lq$, $t_1 \coloneqq 8lq^2 t_0$.}
        1. The verifier $\calV'$ executes the proof system $\protocol{\calP}{\calV}$ independently and in parallel for every pair $(i,j)$ with $i \in [t_0]$ and $j \in [t_1]$\;
        2. For each execution $(i,j)$, the verifier $\calV'$ measures the designated output qubit and records the measurement outcome as $y_{i,j}\in\binset$\;
        3. The verifier $\calV'$ accepts if $\wedge_{i=1}^{t_0} z_i =1$, and rejects otherwise. For each batch $i \in [t_0]$, 
        \[z_i \coloneqq \begin{cases}
            1, & \text{ if } \sum_{j=1}^{t_1} y_{i,j} \geq t_1 \cdot \frac{c+s}{2}.\\
            0, & \text{ otherwise.}
        \end{cases}\]
\end{algorithm}


\section{Algorithmic Holevo--Helstrom measurement and its implication}
\label{sec:algo-HH-meas}

In this section, we introduce an \textit{algorithmic} version of the Holevo--Helstrom measurement that nearly achieves the optimal probability for discriminating between quantum states $\rho_0$ and $\rho_1$. 
We start by defining the \textsc{Computational Quantum Hypothesis Testing Problem}, which assumes access to the descriptions of the corresponding state-preparation circuits: 
\begin{problem}[Computational Quantum Hypothesis Testing Problem]
    \label{prob:state-discrimination}
    Let $Q_0$ and $Q_1$ be two polynomial-size quantum circuits acting on $n$ qubits and having $r$ designated output qubits. Let $\rho_b$ denote the quantum state obtained by performing $Q_b$ on the initial state $\ket{0}^{\otimes n}$ and tracing out the non-output qubits for $b\in \binset$. Now, consider the following computational task: 
    \begin{itemize}
        \item \textbf{Input:} A quantum state $\rho$, either $\rho_0$ or $\rho_1$, is chosen uniformly at random. 
        \item \textbf{Output:} A bit $b$ indicates that $\rho=\rho_b$.  
    \end{itemize}
\end{problem}

The goal is to maximize the probability that the test in \Cref{prob:state-discrimination} succeeds, which can be achieved by performing an appropriate measurement on the given state $\rho$. 

\paragraph{Information-theoretic background.}
For the \textsc{Quantum Hypothesis Testing Problem} analogous to \Cref{prob:state-discrimination}, where $\rho_0$ and $\rho_1$ are not necessarily efficiently preparable, the maximum success probability to discriminate between quantum states $\rho_0$ and $\rho_1$ is given by the celebrated Holevo--Helstrom bound: 

\begin{theorem}[Holevo--Helstrom bound,~\cite{Holevo73TraceDist,Helstrom69}]
    \label{thm:HH-bound}
    Given a quantum state $\rho$, either $\rho_0$ or $\rho_1$, that is chosen uniformly at random, the maximum success probability to discriminate between quantum states $\rho_0$ and $\rho_1$ is given by $\frac{1}{2}+\frac{1}{2}\td(\rho_0,\rho_1)$.
\end{theorem}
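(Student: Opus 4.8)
The plan is to reduce the optimization over all admissible two-outcome measurements to a semidefinite optimization over a single effect operator, which is then solved by spectral calculus.

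First I would set up the general (information-theoretic) hypothesis-testing picture: a measurement strategy is a two-outcome POVM $\{E_0,E_1\}$ with $E_0,E_1\succeq 0$ and $E_0+E_1=I$, where obtaining outcome $b\in\binset$ is read as the guess ``$\rho=\rho_b$''. Since the state is drawn uniformly, the success probability is $\tfrac12\Tr(E_0\rho_0)+\tfrac12\Tr(E_1\rho_1)$. Substituting $E_1=I-E_0$ and using $\Tr(\rho_1)=1$ collapses this to
\[
\tfrac12+\tfrac12\Tr\big(E_0(\rho_0-\rho_1)\big),
\]
so the problem becomes: maximize $\Tr(E_0\Delta)$ over all Hermitian $E_0$ with $0\preceq E_0\preceq I$, where $\Delta\coloneqq\rho_0-\rho_1$ is Hermitian and traceless. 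A small point to verify here is that the constraint $0\preceq E_0\preceq I$ is exactly the feasible set: positivity of $E_1$ is equivalent to $E_0\preceq I$, so nothing is lost or gained by this reformulation.

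Next I would diagonalize $\Delta=\sum_i\lambda_i\ketbra{v_i}{v_i}$ and let $P\coloneqq\sum_{i:\lambda_i>0}\ketbra{v_i}{v_i}$ be the projector onto its positive eigenspace (equivalently, write $\Delta=\Delta_+-\Delta_-$ with $\Delta_\pm\succeq 0$ supported orthogonally, and take $P$ to be the support projector of $\Delta_+$). For the upper bound, any feasible $E_0$ satisfies $0\leq\bra{v_i}E_0\ket{v_i}\leq 1$, hence $\Tr(E_0\Delta)=\sum_i\lambda_i\bra{v_i}E_0\ket{v_i}\leq\sum_{i:\lambda_i>0}\lambda_i=\Tr(\Delta_+)$; choosing $E_0=P$ attains it, since $\Tr(P\Delta)=\Tr(\Delta_+)$. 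Finally, tracelessness of $\Delta$ gives $\Tr(\Delta_+)=\Tr(\Delta_-)$, so $\Tr(\Delta_+)=\tfrac12\big(\Tr\Delta_++\Tr\Delta_-\big)=\tfrac12\Tr\abs{\Delta}=\td(\rho_0,\rho_1)$. Combining, the maximum success probability equals $\tfrac12+\tfrac12\td(\rho_0,\rho_1)$, attained by the Holevo--Helstrom measurement $\{P,\,I-P\}$.

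I do not anticipate a genuine obstacle: the argument is elementary once the reduction to the effect operator is in place, and the only care needed is the feasible-set observation above together with the bookkeeping that the achievability direction uses a bona fide measurement. As an alternative route, the upper bound alone follows directly from the matrix Hölder inequality (\Cref{lemma:matrix-Holder-inequality}): writing $E_0=\tfrac12(I+M)$ with $\norm{M}_\infty\leq 1$, one gets $\Tr(E_0\Delta)=\tfrac12\Tr(M\Delta)\leq\tfrac12\norm{M}_\infty\norm{\Delta}_1\leq\td(\rho_0,\rho_1)$; this is slicker but does not exhibit the explicit optimal measurement, which is precisely what the subsequent algorithmic construction in \Cref{sec:algo-HH-meas} requires, so I would keep the spectral argument.
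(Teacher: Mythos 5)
Your proof is correct. The paper does not prove Theorem~\ref{thm:HH-bound}; it cites it to Holevo and Helstrom and treats it as known background, so there is no in-paper argument to compare against. Your derivation is the standard one: the reduction to maximizing $\Tr(E_0\Delta)$ over $0\preceq E_0\preceq I$ with $\Delta=\rho_0-\rho_1$, the feasible-set check that $E_1\succeq 0$ is equivalent to $E_0\preceq I$, the spectral upper bound $\Tr(E_0\Delta)\leq\Tr(\Delta_+)$, achievability by the projector $P$ onto the positive eigenspace, and the use of tracelessness to get $\Tr(\Delta_+)=\tfrac12\Tr\abs{\Delta}=\td(\rho_0,\rho_1)$. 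Everything checks out, as does the alternative H\"older-based upper bound.

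One small but relevant observation: the explicit optimizer you exhibit is $P$, whereas the paper's \Cref{prop:HH-meas} takes $\Pi_0=\tfrac12 I+\tfrac12\sign^{\SV}\!\big(\tfrac{\rho_0-\rho_1}{2}\big)$, which equals $P+\tfrac12 K$ where $K$ projects onto the kernel of $\Delta$. Both are optimal, since the kernel contributes nothing to $\Tr(E_0\Delta)$, but they are different operators; the paper's choice is the one naturally expressible as a sign-function singular value transform of a block-encoded Hermitian operator, which is precisely what the space-efficient QSVT construction in \Cref{sec:algo-HH-meas} implements. Keeping the spectral argument over the H\"older shortcut, as you note, is the right call for exactly this reason: the subsequent algorithmic content needs an explicit measurement, and the sign-function form is the one that is block-encodable.
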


Noting that the trace distance can be written as\footnote{Notably, \Cref{eq:trace-distance-decomposition} is fundamental in quantum algorithms for estimating the trace distance~\cite{WZ23}. An extension of this identity also underlies quantum algorithms for estimating the (powered) quantum $\ell_\alpha$ distance~\cite{LW25Lalpha}, which generalizes the trace distance via the (powered) Schatten ($\alpha$-)norm.}
\begin{equation}
    \label{eq:trace-distance-decomposition}
    \td(\rho_0,\rho_1) \coloneqq \frac{1}{2}\Tr|\rho_0-\rho_1| = \frac{1}{2} \rbra*{ \Tr\rbra[\Big]{\rho_0~\sign^\SV\rbra[\Big]{\frac{\rho_0-\rho_1}{2}}} - \Tr\rbra[\Big]{\rho_1~\sign^\SV\rbra[\Big]{\frac{\rho_0-\rho_1}{2}}} },
\end{equation}
one can directly obtain an explicit form of the optimal two-outcome measurement $\cbra*{\Pi_0,\Pi_1}$ that achieves \Cref{thm:HH-bound} and satisfies $\td(\rho_0,\rho_1) = \Tr(\Pi_0\rho_0)-\Tr(\Pi_0\rho_1)$:
\begin{proposition}[Explicit form of the Holevo--Helstrom measurement]
    \label{prop:HH-meas}
    \sloppy
    An optimal two-outcome measurement $\{\Pi_0,\Pi_1\}$ that maximizes the discrimination probability in quantum hypothesis testing and achieves the Holevo--Helstrom bound (\Cref{thm:HH-bound}) is given by
    \[ \Pi_0 = \frac{I}{2}+\frac{1}{2} \sign^{\SV}\Big(\frac{\rho_0-\rho_1}{2}\Big) \quad \text{and} \quad \Pi_1 = \frac{I}{2}-\frac{1}{2} \sign^{\SV}\Big(\frac{\rho_0-\rho_1}{2}\Big).  \]
\end{proposition}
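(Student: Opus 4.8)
The plan is to verify that the operators $\Pi_0$ and $\Pi_1$ as defined are (i) a legitimate two-outcome POVM (equivalently, a projective measurement), and (ii) optimal, i.e., they attain the Holevo--Helstrom bound of \Cref{thm:HH-bound}, with the stated identity $\td(\rho_0,\rho_1) = \Tr(\Pi_0\rho_0) - \Tr(\Pi_0\rho_1)$. Since $\rho_0 - \rho_1$ is Hermitian, $\Delta \coloneqq (\rho_0 - \rho_1)/2$ is Hermitian, so by \Cref{def:matrix-SV-function} (the last sentence there, specialized to Hermitian matrices) $\sign^{\SV}(\Delta) = \sign(\Delta)$, the ordinary matrix function obtained by applying $\sign$ to the eigenvalues of $\Delta$. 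The first step is therefore structural: diagonalize $\Delta = \sum_i \lambda_i \ketbra{v_i}{v_i}$, observe that $\sign(\Delta) = \sum_i \sign(\lambda_i)\ketbra{v_i}{v_i}$ has eigenvalues in $\{-1,0,1\}$, hence $\Pi_0 = I/2 + \sign(\Delta)/2$ and $\Pi_1 = I/2 - \sign(\Delta)/2$ are orthogonal projectors with $\Pi_0 + \Pi_1 = I$ and $0 \preceq \Pi_j \preceq I$; this gives (i).

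The second step establishes optimality. The standard argument: for any two-outcome POVM $\{E_0, E_1 = I - E_0\}$, the success probability of the hypothesis test is $\tfrac12\Tr(E_0\rho_0) + \tfrac12\Tr(E_1\rho_1) = \tfrac12 + \tfrac12\Tr\big(E_0(\rho_0 - \rho_1)\big) = \tfrac12 + \Tr(E_0 \Delta)$. Decomposing $\Delta = \Delta_+ - \Delta_-$ into its positive and negative parts (so $\Delta_\pm \succeq 0$ with orthogonal supports, and $\Tr\Delta_+ + \Tr\Delta_- = \Tr|\Delta| = \td(\rho_0,\rho_1)$ while $\Tr\Delta_+ - \Tr\Delta_- = \Tr\Delta = 0$, hence each equals $\tfrac12\td(\rho_0,\rho_1)$), we get $\Tr(E_0\Delta) = \Tr(E_0\Delta_+) - \Tr(E_0\Delta_-) \le \Tr(\Delta_+) = \tfrac12\td(\rho_0,\rho_1)$, using $0 \preceq E_0 \preceq I$. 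This reproves the upper bound of \Cref{thm:HH-bound}; then one checks the choice $E_0 = \Pi_0$ saturates it, since $\Pi_0$ is exactly the projector onto the (closure of the) positive eigenspace of $\Delta$ together with (half of) the kernel — more precisely $\Pi_0 \Delta_+ = \Delta_+$ and $\Pi_0 \Delta_- = 0$, so $\Tr(\Pi_0\Delta) = \Tr\Delta_+ = \tfrac12\td(\rho_0,\rho_1)$, giving success probability $\tfrac12 + \tfrac12\td(\rho_0,\rho_1)$.

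Finally, the claimed identity $\td(\rho_0,\rho_1) = \Tr(\Pi_0\rho_0) - \Tr(\Pi_0\rho_1)$ is immediate from $\Tr(\Pi_0(\rho_0 - \rho_1)) = 2\Tr(\Pi_0\Delta) = 2 \cdot \tfrac12\td(\rho_0,\rho_1)$; alternatively it is just \Cref{eq:trace-distance-decomposition} rewritten, noting $\Tr(\rho_0 \sign^{\SV}(\Delta)) - \Tr(\rho_1\sign^{\SV}(\Delta)) = 2\td(\rho_0,\rho_1)$ and that the $I/2$ terms in $\Pi_0$ cancel between $\rho_0$ and $\rho_1$. I do not anticipate a genuine obstacle here — the proposition is a repackaging of the textbook Holevo--Helstrom optimality proof; the only point requiring a little care is the behavior on $\ker\Delta$ (the eigenvalue-$0$ subspace), where $\sign$ is set to $0$ so $\Pi_0$ acts as $I/2$ there, which is harmless because $\Delta$ has no weight on that subspace, so it contributes nothing to any of the traces above. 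One should also note the measurement is not unique (any POVM agreeing with $\Pi_0$ on $\mathrm{supp}(\Delta_+) \oplus \mathrm{supp}(\Delta_-)$ works), which matches the wording "an optimal" rather than "the optimal."
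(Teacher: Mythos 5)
Your proof is correct and follows essentially the same line as the paper: the paper treats \Cref{prop:HH-meas} as an immediate consequence of the decomposition identity \Cref{eq:trace-distance-decomposition} (cited from Wang--Zhang) together with the Holevo--Helstrom bound \Cref{thm:HH-bound}, giving no further argument. You make this self-contained by re-deriving \Cref{thm:HH-bound} via the Jordan decomposition $\Delta = \Delta_+ - \Delta_-$ and the bound $\Tr(E_0\Delta) \le \Tr(\Delta_+)$ for $0 \preceq E_0 \preceq I$, and by explicitly checking $\Pi_0\Delta_+ = \Delta_+$, $\Pi_0\Delta_- = 0$; this buys nothing new logically but fills in exactly the computation the paper leaves to the reader.

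One small correction: you assert mid-proof that $\Pi_0$ and $\Pi_1$ ``are orthogonal projectors,'' which is false in general. On $\ker(\rho_0-\rho_1)$ the sign function returns $0$, so $\Pi_0$ acts there as $I/2$ and has eigenvalue $1/2$; thus $\Pi_0^2 \neq \Pi_0$ unless $\rho_0 - \rho_1$ has trivial kernel. You do notice this subtlety at the end, but it contradicts the earlier sentence. The fix is harmless: $\{\Pi_0,\Pi_1\}$ is a valid two-outcome POVM ($0 \preceq \Pi_j \preceq I$, $\Pi_0 + \Pi_1 = I$), which is all the proposition claims, and your optimality argument correctly uses only $0 \preceq E_0 \preceq I$, never projectivity.
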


\paragraph{Algorithmic implementation.}
Using the space-efficient quantum singular value transformation in~\cite[Section 3]{LGLW23}, the Holevo--Helstrom measurement specified in \Cref{prop:HH-meas} can be approximately implemented in quantum \textit{single-exponential} time and \textit{linear} space. We refer to this explicit implementation as the \textit{algorithmic Holevo--Helstrom measurement}:

\begin{theorem}[Algorithmic Holevo--Helstrom measurement]
    \label{thm:algo-HH-meas}
    Let $\rho_0$ and $\rho_1$ be quantum states prepared by $n$-qubit quantum circuits $Q_0$ and $Q_1$, respectively, as defined in \Cref{prob:state-discrimination}.
    An approximate version of the Holevo--Helstrom measurement $\Pi_0$ specified in \Cref{prop:HH-meas}, denoted as $\tilde{\Pi}_0$, can be implemented so that 
    \begin{equation}
        \label{eq:algo-HH-bound}
        \td(\rho_0,\rho_1) - 2^{-n} \leq \Tr(\tilde{\Pi}_0 \rho_0)-\Tr(\tilde{\Pi}_0 \rho_1) \leq \td(\rho_0,\rho_1).
    \end{equation}
    The quantum circuit implementation of $\tilde{\Pi}_0$, acting on $O(n)$ qubits, requires $2^{O(n)}$ queries to the quantum circuits $Q_0$ and $Q_1$, as well as $2^{O(n)}$ one- and two-qubit quantum gates. Moreover, the circuit description can be computed in deterministic time $2^{O(n)}$ and space $O(n)$.
\end{theorem}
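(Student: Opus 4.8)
The plan is to reduce \eqref{eq:algo-HH-bound} to building a good block-encoding of $\sign^{\SV}(\Delta)$ with $\Delta\coloneqq(\rho_0-\rho_1)/2$ and then reading off its statistics on $\rho_0,\rho_1$ via the Hadamard test. The first observation is that the \emph{right} inequality in \eqref{eq:algo-HH-bound} is essentially free: for any operator with $0\leq\tilde\Pi_0\leq I$ (as operators) we have $\Tr(\tilde\Pi_0\rho_0)-\Tr(\tilde\Pi_0\rho_1)=\Tr(\tilde\Pi_0(\rho_0-\rho_1))\leq\Tr\big((\rho_0-\rho_1)_+\big)=\td(\rho_0,\rho_1)$, where $(\cdot)_+$ denotes the positive part and we used $\Tr(\rho_0-\rho_1)=0$; this is just the Holevo--Helstrom bound (\Cref{thm:HH-bound}). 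So once $\tilde\Pi_0$ is realized as a genuine quantum measurement outcome the upper bound holds automatically, and all the real work is the lower bound.

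For the construction, apply \Cref{lemma:purified-density-matrix} to the state-preparation circuits $Q_0,Q_1$ (which act on $n$ qubits with the same $n-r$ non-output qubits) to obtain exact $(1,O(n),0)$-block-encodings of $\rho_0$ and $\rho_1$ on a common $O(n)$-qubit register. Feed these into \Cref{lemma:space-efficient-LCU} with coefficient vector $\bfy=(\tfrac12,-\tfrac12)$: here $\norm{\bfy}_1=1$ and the (normalized) state-preparation operator is a single Hadamard gate, so the LCU step is exact (or carries an exponentially small error absorbed below), yielding an exact $(1,O(n),0)$-block-encoding $U_\Delta$ of $\Delta$. Now invoke the space-efficient sign QSVT of \Cref{lemma:sign-polynomial-implementation} on $U_\Delta$ with parameters $\delta,\epsilon,\epsilon_2$ (fixed below); since the input is exact ($\epsilon_1=0$) this produces a $(1,O(n),\epsilon')$-block-encoding $V$ of $P_{\sign,d'}^{\SV}(\Delta)$ with $\epsilon'=(36\hat C_{\sign}+37)\epsilon_2$ and degree $d'\leq\tilde C_{\sign}\cdot\delta^{-1}\log(1/\epsilon)$. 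Finally, define $\tilde\Pi_0$ to be the ``output $0$'' POVM element of the Hadamard test of \Cref{lemma:Hadamard-test} applied to $V$ viewed as an exact block-encoding of $\tilde S\coloneqq\bra{\bar{0}}V\ket{\bar{0}}$, and set $\tilde\Pi_1\coloneqq I-\tilde\Pi_0$; writing $\tilde S_h$ for the Hermitian part of $\tilde S$, the test accepts $\rho_b$ with probability $\tfrac12(1+\Tr(\tilde S_h\rho_b))$, hence $\Tr(\tilde\Pi_0\rho_0)-\Tr(\tilde\Pi_0\rho_1)=\Tr(\tilde S_h\Delta)$, and $\norm{\tilde S_h}\leq\norm{V}=1$ ensures $0\leq\tilde\Pi_0\leq I$.

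The core is the error bound for the lower inequality. Diagonalize $\Delta=\sum_i\lambda_i\ketbra{v_i}{v_i}$ with $\lambda_i\in[-\tfrac12,\tfrac12]$; then $\td(\rho_0,\rho_1)=\sum_i|\lambda_i|=\Tr(\sign^{\SV}(\Delta)\Delta)$, and by the matrix H\"older inequality (\Cref{lemma:matrix-Holder-inequality}) together with $\norm{\tilde S_h-P_{\sign,d'}^{\SV}(\Delta)}\leq\epsilon'$ and $\norm{\Delta}_1=\td(\rho_0,\rho_1)\leq1$,
\[\Tr(\tilde S_h\Delta)\;\geq\;\Tr\big(P_{\sign,d'}^{\SV}(\Delta)\,\Delta\big)-\epsilon'.\]
It remains to bound
\[\td(\rho_0,\rho_1)-\Tr\big(P_{\sign,d'}^{\SV}(\Delta)\,\Delta\big)=\sum_i\big(\sign(\lambda_i)-P_{d'}^{\sign}(\lambda_i)\big)\lambda_i,\]
which I split according to $|\lambda_i|\geq\delta$ versus $|\lambda_i|<\delta$: the former contributes at most $5\epsilon\sum_i|\lambda_i|\leq5\epsilon$ by the uniform bound of \Cref{lemma:space-efficient-sign} outside $[-\delta,\delta]$ (dimension-free), while the latter contributes at most $2\sum_{|\lambda_i|<\delta}|\lambda_i|\leq2\cdot2^{r}\delta\leq2^{n+1}\delta$ using $|{\sign}|,|P_{d'}^{\sign}|\leq1$ and the crude count of at most $2^r\leq2^n$ eigenvalues. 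Choosing $\delta=\epsilon=2^{-2n-4}$ and $\epsilon_2=2^{-2n-4}/(36\hat C_{\sign}+37)$ makes $5\epsilon+2^{n+1}\delta+\epsilon'\leq2^{-n}$, establishing \eqref{eq:algo-HH-bound}. With these parameters $d'=2^{O(n)}$, so \Cref{lemma:sign-polynomial-implementation} uses $O(d'^2)=2^{O(n)}$ calls to $U_\Delta$ — hence $2^{O(n)}$ queries to $Q_0,Q_1$ — and $2^{O(n)}$ elementary gates on $O(n)$ qubits, while the classical descriptions (sign-polynomial coefficients, LCU, QSVT circuit) are computable in deterministic time $2^{O(n)}$ and space $O(n)$ by \Cref{lemma:space-efficient-sign,lemma:space-efficient-LCU,lemma:sign-polynomial-implementation}.

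The main obstacle is exactly the eigenvalues of $\Delta$ lying in the ``bad'' window $[-\delta,\delta]$, where $P_{d'}^{\sign}$ carries no approximation guarantee and an operator-norm bound on $\tilde S_h-\sign^{\SV}(\Delta)$ is worthless. The resolution is that each such eigenvalue contributes at most $|\lambda_i|<\delta$ to the relevant trace functional, so pushing $\delta$ down to inverse-exponential size annihilates their aggregate (at most $2^r$-fold) contribution — and this only inflates the degree to $d'=2^{O(n)}$, which is harmless in the space-bounded regime. Everything else is bookkeeping: driving the three error sources ($\delta$, the $5\epsilon$ from \Cref{lemma:space-efficient-sign}, and $\epsilon'$ from \Cref{lemma:sign-polynomial-implementation}) each below $2^{-n}$ while keeping every ancilla register of size $O(n)$.
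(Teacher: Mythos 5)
Your proposal is correct and follows essentially the same route as the paper's proof: exact block-encoding of $(\rho_0-\rho_1)/2$ via purification plus LCU, space-efficient sign QSVT, Hadamard test readout, and an error decomposition into a polynomial-approximation term (split by whether $|\lambda_i|\geq\delta$) plus a QSVT-implementation term controlled by H\"older. The only differences are cosmetic — your parameters $\delta=\epsilon=2^{-2n-4}$ and the crude rank bound $2^r\leq 2^n$ versus the paper's $\delta=2^{-n}/2^{r+2}$ and $\rank\leq 2^r$ — and your observation that the upper bound in \eqref{eq:algo-HH-bound} is automatic once $0\leq\tilde\Pi_0\leq I$ matches the paper's POVM-verification step.
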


Additionally, we demonstrate an implication of our algorithmic Holevo--Helstrom measurement in \Cref{{thm:algo-HH-meas}}. By inspecting the (honest-verifier) quantum statistical zero-knowledge protocol (``distance test'') for $\QSD[\alpha,\beta]$ with constants $\alpha^2>\beta$ in~\cite[Section 4.2]{Watrous02}, we obtain the \QIPtwo{} part in \Cref{corr:QSZK-upper-bound}, since \GapQSD{} is \QSZK{}-hard:
\begin{theorem}[\GapQSD{} is in \QIPtwo{} with a quantum linear-space honest prover]
    \label{thm:GapQSD-in-QIP(2)-bounded-prover}
    There exists a two-message quantum interactive proof system for $\QSD[\alpha(n),\beta(n)]$ with completeness $c(n)=\big(1+\alpha(n)-2^{-n}\big)/2$ and soundness $s(n)=(1+\beta(n))/2$. Moreover, a near-optimal prover strategy for this proof system can be implemented in quantum single-exponential time and linear space. 
    Consequently, for any $\alpha(n)$ and $\beta(n)$ satisfying $\alpha(n) - \beta(n) \geq 1/\poly(n)$, 
    \[\QSD[\alpha(n),\beta(n)] \text{ is in } \QIPtwo{} \text{ with a  quantum } O(n') \text{ space honest prover},\] where $n'$ is the total input length of the quantum circuits that prepare the corresponding tuple of quantum states.\footnote{This tuple of quantum states arises from a standard parallel repetition of the two-message quantum interactive proof system for $\GapQSD[\alpha(n),\beta(n)]$ with $c(n)-s(n) \geq 1/\poly(n)$. See \Cref{sbusec:QIP(2)-error-reduction} for details.}
\end{theorem}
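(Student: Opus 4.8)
The plan is to take Watrous's ``distance test'' two-message proof system for $\QSD[\alpha,\beta]$~\cite[Section 4.2]{Watrous02} and replace its computationally unbounded honest prover by the algorithmic Holevo--Helstrom measurement of \Cref{thm:algo-HH-meas}. In the protocol the verifier picks $a\in\binset$ uniformly at random, runs $Q_a$ on $\ket{0}^{\otimes n}$, sends the $r$ designated output qubits (whose reduced state is exactly $\rho_a$) to the prover and discards the $n-r$ purifying qubits, receives a classical bit $b$, and accepts if and only if $b=a$. This is a syntactically valid two-message \QIPtwo{} protocol with a polynomial-time verifier, since $Q_a$ has polynomial size and the final comparison is trivial. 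For soundness, I would observe that an arbitrary (possibly unbounded) prover's action on the received register amounts to applying some two-outcome POVM $\{M_0,M_1\}$ to $\rho_a$ and returning the outcome, so its acceptance probability is $\tfrac12\Tr(M_0\rho_0)+\tfrac12\Tr(M_1\rho_1)\leq\tfrac12+\tfrac12\td(\rho_0,\rho_1)$ by the Holevo--Helstrom bound (\Cref{thm:HH-bound}); on \emph{no} instances this is at most $(1+\beta(n))/2=s(n)$.

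For completeness with an efficient honest prover, I would instruct the prover to apply the two-outcome measurement $\{\tilde{\Pi}_0,\,I-\tilde{\Pi}_0\}$ supplied by \Cref{thm:algo-HH-meas}, realized via the Hadamard test of \Cref{lemma:Hadamard-test} applied to the block-encoding produced by the space-efficient QSVT of \Cref{lemma:sign-polynomial-implementation}. Here $\tilde{\Pi}_0$ is Hermitian and satisfies $0\leq\tilde{\Pi}_0\leq I$ (it has the form $\tfrac{I}{2}+\tfrac12\tilde{S}$ with $\norm{\tilde{S}}\leq1$, using that the sign polynomial of \Cref{lemma:space-efficient-sign} is bounded by $1$ on $[-1,1]$), so it is a genuine prover strategy. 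Its acceptance probability equals $\tfrac12+\tfrac12\bigl(\Tr(\tilde{\Pi}_0\rho_0)-\Tr(\tilde{\Pi}_0\rho_1)\bigr)\geq\tfrac12+\tfrac12\bigl(\td(\rho_0,\rho_1)-2^{-n}\bigr)$ by \Cref{eq:algo-HH-bound}, which on \emph{yes} instances is at least $(1+\alpha(n)-2^{-n})/2=c(n)$. By \Cref{thm:algo-HH-meas}, this honest-prover strategy acts on $O(n)$ qubits and is implementable in time $2^{O(n)}$ and space $O(n)$, which gives the ``moreover'' part of the statement.

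For the ``consequently'' part, given $\alpha(n)-\beta(n)\geq 1/\poly(n)$ I would note that $c(n)-s(n)=\tfrac12\bigl(\alpha(n)-\beta(n)-2^{-n}\bigr)\geq 1/\poly(n)$ for all sufficiently large $n$, and then invoke the parallel-repetition error reduction of \Cref{lemma:QIPtwo-error-reduction} (say with $l(n)=n$) to boost completeness above $1-2^{-n}$ and soundness below $2^{-n}$. The honest prover of the amplified protocol runs the $t_0 t_1=\poly(n)$ base honest-prover executions in parallel; since each acts on $O(n)$ qubits and the amplified state-preparation circuits act on $n'=\Theta(t_0 t_1\, n)$ qubits altogether, the amplified honest prover uses $O(t_0 t_1\, n)=O(n')$ quantum space (and $2^{O(n)}$ time). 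Since \GapQSD{} is \QSZK{}-hard, this also delivers the \QIPtwo{} half of \Cref{corr:QSZK-upper-bound}.

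I do not expect a genuine obstacle here, since \Cref{thm:algo-HH-meas} carries the technical load; the two points needing care are (i) confirming that the \emph{approximate} operator $\tilde{\Pi}_0$ still defines a legitimate two-outcome measurement rather than merely something $2^{-n}$-close to one, which follows from $0\leq\tilde{\Pi}_0\leq I$ as noted above, and (ii) checking that the quantum-linear-space guarantee for the honest prover survives the error-reduction step, i.e., that amplifying the completeness--soundness gap inflates the prover's workspace only by a polynomial factor and thus keeps it linear in the enlarged input length $n'$.
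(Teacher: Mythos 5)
Your proposal is correct and follows essentially the same approach as the paper: same ``distance test'' protocol, same use of \Cref{thm:algo-HH-meas} for the honest prover, same Holevo--Helstrom soundness bound, and same invocation of \Cref{lemma:QIPtwo-error-reduction} to reach the final statement about quantum linear space in the enlarged input length $n'$. The only slight imprecision is in your justification that $\norm{\tilde{S}}\leq 1$: you cite the boundedness of the sign polynomial, which bounds the ideal operator $P^\sign_{d'}((\rho_0-\rho_1)/2)$, but $\tilde{S}$ is the (Hermitian part of the) top-left block of the QSVT unitary $\UHH$, and $\norm{\tilde{S}}\leq 1$ actually follows because any block of a unitary has operator norm at most $1$ — the paper sidesteps this by directly arguing that $\Tr(\tilde{\Pi}_0\rho)$ is the acceptance probability of the Hadamard-test circuit, hence lies in $[0,1]$.
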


\vspace{1em}
In the rest of this section, we provide the proof of \Cref{thm:algo-HH-meas} and the proof of \Cref{thm:GapQSD-in-QIP(2)-bounded-prover} in \Cref{subsec:algo-HHmeas-proof} and \Cref{subsec:GapQSD-in-QIP(2)-proof}, respectively.

\subsection{Algorithmic Holevo--Helstrom measurement: Proof of \texorpdfstring{\Cref{thm:algo-HH-meas}}{Theorem 3.3}}
\label{subsec:algo-HHmeas-proof}

To implement our algorithmic Holevo--Helstrom measurement, we adopt the one-bit precision phase estimation (often referred to as the Hadamard test, \Cref{lemma:Hadamard-test}), which reduces the task to implementing the corresponding unitary. The starting point is the space-efficient polynomial approximation $P^{\sign}_{d'}$ of the sign function (\Cref{lemma:space-efficient-sign}), which yields the two-outcome measurement $\{\hat{\Pi}_0,\hat{\Pi}_1\}$ defined by:
\begin{equation*}
    \label{eq:approx-HH-meas}
    \hat{\Pi}_0 = \frac{I}{2}+\frac{1}{2} P^{\sign}_{d'}\Big(\frac{\rho_0-\rho_1}{2}\Big) \quad \text{and} \quad \hat{\Pi}_1 = \frac{I}{2}-\frac{1}{2} P^{\sign}_{d'}\Big(\frac{\rho_0-\rho_1}{2}\Big). 
\end{equation*}

By applying the space-efficient QSVT associated with the polynomial $P^{\sign}_{d'}$ to the block-encoding of $(\rho_0-\rho_1)/2$ (\Cref{lemma:sign-polynomial-implementation}),
we obtain the unitary $\UHH$, which is a block-encoding of $\AHH \colonapprox P^{\sign}_{d'}\big(\frac{\rho_0-\rho_1}{2}\big)$. 
We therefore implement the two-outcome measurement $\{\tilde{\Pi}_0,\tilde{\Pi}_1\}$ where $\tilde{\Pi}_0= (I+\AHH)/2$, and the difference between $\{\hat{\Pi}_0,\hat{\Pi}_1\}$ and $\{\tilde{\Pi}_0,\tilde{\Pi}_1\}$ is caused by the implementation error of our space-efficient QSVT. We then proceed to the proof. 

\begin{proof}[Proof of \Cref{thm:algo-HH-meas}]
    Our algorithmic Holevo--Helstrom measurement is inspired by the \BQP{} containment of the low-rank variant of \GapQSD{}~\cite[Section III.A]{WZ23} and the \BQL{} containment of \GapQSDlog{}~\cite[Section 4.2]{LGLW23}, as presented in \Cref{fig:algo-HH-measurement}. 

\begin{figure}[!htp]
    \centering
    \begin{quantikz}[wire types={q,b,b}, classical gap=0.06cm, row sep=0.75em]
        \lstick{$\ket{0}^{\sfF}$} & \gate{H} & \ctrl{1} & \gate{H} & \meter{} & \setwiretype{c} \rstick{$b$}\\
        \lstick{$\ket{\bar 0}^{\sfE}$} &  & \gate[2]{\UHH \colonapprox U_{P^{\sign}_{d'}\big(\frac{\rho_0-\rho_1}{2}\big)}}  &  &  \\
        \lstick{$\rho^\sfA$} &  &  &  &  \\
    \end{quantikz}
    \caption{Algorithmic Holevo--Helstrom measurement.}
    \label{fig:algo-HH-measurement}
\end{figure}
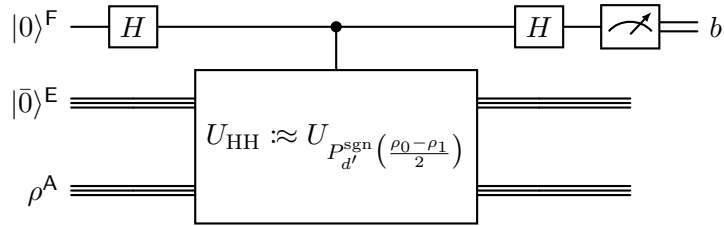

    Note that the input state $\rho$ in register $\sfA$ to the circuit in \Cref{fig:algo-HH-measurement} is an $r(n)$-qubit quantum state, either $\rho_0$ or $\rho_1$. This state is obtained by preparing the corresponding $n$-qubit purification on registers $(\sfA,\sfR)$ using the polynomial-size quantum circuit $Q_0$ or $Q_1$, and then tracing out the non-output qubits in register $\sfR$, as described in \Cref{prob:state-discrimination}. 
    Since the Hadamard test (\Cref{lemma:Hadamard-test}) reduces the task to implementing an appropriate unitary acting on register $\sfA$ and the ancillary register $\sfE$, specifically $\UHH$, we construct it as follows: 
    
\begin{enumerate}[label={\upshape(\arabic*)}]
    \item Applying \Cref{lemma:purified-density-matrix}, we can construct $O(n)$-qubit quantum circuits $U_{\rho_0}$ and $U_{\rho_1}$ that encode $\rho_0$ and $\rho_1$ as $(1,n,0)$-block-encodings, using $O(1)$ queries to $Q_0$ and $Q_1$, as well as $O(n)$ one- and two-qubit quantum gates. 
    \item Applying \Cref{lemma:space-efficient-LCU}, we can construct a $(1,n+1,0)$-block-encoding $U_{\frac{\rho_0-\rho_1}{2}}$ of $\frac{\rho_0 - \rho_1}{2}$, using $O(1)$ queries to $Q_0$ and $Q_1$, as well as $O(n)$ one- and two-qubit quantum gates. 
    \item Let $P_{d'}^{\sign} \in \mathbb{R}[x]$ be the degree-${d'}$ polynomial obtained from some degree-$d$ averaged Chebyshev truncation, with $d'=2d-1$, as specified in \Cref{lemma:space-efficient-sign}. 
    We choose parameters $\varepsilon \coloneqq 2^{-n}$, $\delta \coloneqq \frac{\varepsilon}{2^{r+2}}$, $\epsilon \coloneqq \frac{\varepsilon}{64n\max\{1,36\hat{C}_{\sign},2C_{\sign}+37,\tilde C_{\sign}\}}$. Consequently, the degree $d' \coloneqq \tilde C_{\sign} \cdot \frac{1}{\delta} \log\frac{1}{\epsilon}$ satisfies $d'=2^{O(n)}$, where $\tilde C_{\sign}$ comes from \Cref{lemma:space-efficient-sign}. This degree is fixed before choosing the numerical precision used to compute the coefficients.\footnote{Indeed, since $r\leq n$, we have $\log(1/\delta)=O(n)$, and the explicit choice of $\epsilon$ gives $\log(1/\epsilon)=O(n)$. Hence $\log d'=O(n)$. Since the constants $\hat C_{\sign},C_{\sign},\tilde C_{\sign}$ are fixed, increasing the numerical constant $64$ if necessary gives, for all sufficiently large $n$, $36\hat C_{\sign}\epsilon\log d'\leq \varepsilon/4$ and $(2C_{\sign}+37)\epsilon\leq \varepsilon/4$.}
    Applying the space-efficient QSVT associated with the sign function (\Cref{lemma:sign-polynomial-implementation} with $\epsilon_1 \coloneqq 0$ and $\epsilon_2 \coloneqq \epsilon$), we obtain the unitary $\UHH$. 
\end{enumerate}

    \paragraph{Error analysis.} We first prove that $\cbra[\big]{\tilde{\Pi}_0,\tilde{\Pi}_1}$ forms a valid POVM. Once this is established, the upper bound in \Cref{eq:algo-HH-bound} follows directly from \Cref{thm:HH-bound}. By \Cref{lemma:Hadamard-test}, we have $\Pr{b=0} = \Tr(\tilde{\Pi}_0 \rho) \geq 0$ for every quantum state $\rho$. On the other hand, for any state $\rho$, let $\ket{\psi}$ denote its purification. Then, 
    \begin{align*}
        \Pr{b=0} = \Tr(\tilde{\Pi}_0 \rho) &= \frac{1}{2} + \frac{1}{2} \Tr\rbra*{\bra{\bar{0}}^{\sfE} \UHH \ket{\bar{0}}^\sfE \rho}\\
        &= \frac{1}{2} + \frac{1}{2} \Tr\rbra*{\bra{\psi}^{\sfA\sfR} \bra{\bar{0}}^{\sfE} \rbra[\big]{\UHH \!\otimes\! I^\sfR} \ket{\psi}^{\sfA\sfR} \ket{\bar{0}}^{\sfE}} \leq 1.
    \end{align*}
    Consequently, we conclude that $0 \leq \Tr(\tilde{\Pi}_0 \rho) \leq 1$ for all $\rho$, which confirms that $\cbra[\big]{\tilde{\Pi}_0,\tilde{\Pi}_1}$ is indeed a POVM.
    Next, it suffices to prove the following weaker bound: 
    \[|\td(\rho_0,\rho_1) - \big(\Tr(\tilde{\Pi}_0 \rho_0)-\Tr(\tilde{\Pi}_0 \rho_1)\big)| \leq 2^{-n}.\]    
    To this end, we first bound the error caused by space-efficient polynomial approximation in \Cref{lemma:space-efficient-sign}. 
    Consider the spectral decomposition $\frac{\rho_0-\rho_1}{2} = \sum_j \lambda_j \ket{\psi_j}\bra{\psi_j}$, where $\{\ket{\psi_j}\}$ is an orthonormal basis. We can define index sets $\Lambda_{-} \coloneqq \{j \colon \lambda_j < -\delta\}$, $\Lambda_{0} \coloneqq \{j \colon -\delta \leq \lambda_j \leq \delta\}$, and $\Lambda_{+} \coloneqq \{j \colon \lambda_j > \delta\}$. Next, we have derived that:        
    \begin{subequations}
    \label{eq:algo-HH-polyError}
    \begin{align}
        & \left|\td(\rho_0,\rho_1) - \big(\Tr(\hat{\Pi}_0 \rho_0)-\Tr(\hat{\Pi}_0 \rho_1)\big) \right| \\
        =~& \left| \Tr\left(\sign\Big(\frac{\rho_0-\rho_1}{2}\Big) \frac{\rho_0-\rho_1}{2}\right) - \Tr\left( P^{\sign}_{d'}\Big(\frac{\rho_0-\rho_1}{2}\Big) \frac{\rho_0-\rho_1}{2}\right) \right| \\
        \leq~& \sum_{j \in \Lambda_-} \left| \lambda_j \sign(\lambda_j) - \lambda_j P^{\sign}_{d'}(\lambda_j) \right| + \sum_{j \in \Lambda_0} \left| \lambda_j \sign(\lambda_j) - \lambda_j P^{\sign}_{d'}(\lambda_j) \right|\\
        &\quad + \sum_{j \in \Lambda_+} \left| \lambda_j \sign(\lambda_j) - \lambda_j P^{\sign}_{d'}(\lambda_j) \right|\\
        \leq~& \sum_{j \in \Lambda_-} |\lambda_j| \cdot |-1 - P^{\sign}_{d'}(\lambda_j) | + \sum_{j \in \Lambda_0} \left| \lambda_j \sign(\lambda_j) - \lambda_j P^{\sign}_{d'}(\lambda_j) \right|\\
        & \quad + \sum_{j \in \Lambda_+} |\lambda_j| \cdot |1 - P^{\sign}_{d'}(\lambda_j)|\\
        \leq~& \sum_{j \in \Lambda_-} |\lambda_j| C_{\sign} \epsilon + \sum_{j \in \Lambda_0} 2|\lambda_j| + \sum_{j \in \Lambda_+} |\lambda_j| C_{\sign} \epsilon \\
        \leq~& 2 C_{\sign} \epsilon + 2^{r+1}\delta.
    \end{align}
    \end{subequations}
    Here, the third line owes to the triangle inequality, the fourth line applies the sign function, the fifth line is guaranteed by \Cref{lemma:space-efficient-sign}, and the last line is because $\sum_j |\lambda_j| = \td(\rho_0,\rho_1) \leq 1$ and $\rank\big(\frac{\rho_0-\rho_1}{2}\big)$ is at most $2^r$. 
    
    We then bound the error caused by space-efficient QSVT implementation in \Cref{lemma:sign-polynomial-implementation}: 
    \begin{subequations}
    \label{eq:algo-HH-implementError}
    \begin{align}
        &\left|\big(\Tr(\hat{\Pi}_0 \rho_0) - \Tr(\hat{\Pi}_0 \rho_1)\big) - \big(\Tr(\tilde{\Pi}_0 \rho_0) -\Tr(\tilde{\Pi}_0 \rho_1)\big) \right|\\
        =~& \left| \Tr\left( P^{\sign}_{d'}\Big(\frac{\rho_0-\rho_1}{2}\Big) \frac{\rho_0-\rho_1}{2}\right) - \Tr\Big( \rbra[\big]{\bra{\bar{0}}^{\sfE} \otimes I^{\sfA}} \UHH \rbra[\big]{\ket{\bar{0}}^{\sfE} \otimes I^{\sfA} } \frac{\rho_0-\rho_1}{2} \Big)  \right|\\
        \leq~& \left\| P^{\sign}_{d'}\Big(\frac{\rho_0-\rho_1}{2}\Big) - \rbra[\big]{\bra{\bar{0}}^{\sfE} \otimes I^{\sfA}} \UHH \rbra[\big]{\ket{\bar{0}}^{\sfE} \otimes I^{\sfA}} \right\| \cdot \td(\rho_0,\rho_1)\\
        \leq~& (36\hat{C}_{\sign}\log{d'} + 37) \epsilon \cdot 1.
    \end{align}
    \end{subequations}
    Here, the third line follows from the H\"older inequality for Schatten norms (\Cref{lemma:matrix-Holder-inequality}), and the last line is guaranteed by \Cref{lemma:sign-polynomial-implementation}. 
    
    Combining error bounds in \Cref{eq:algo-HH-polyError,eq:algo-HH-implementError}, we obtain the following under the aforementioned choice of parameters: 
    \begin{align*}
        &\left|\td(\rho_0,\rho_1) - \big(\Tr(\tilde{\Pi}_0 \rho_0)-\Tr(\tilde{\Pi}_0 \rho_1)\big) \right| \\
        \leq~& \left|\td(\rho_0,\rho_1) - \big(\Tr(\hat{\Pi}_0 \rho_0) - \Tr(\hat{\Pi}_0 \rho_1)\big) \right| + \left|\big(\Tr(\hat{\Pi}_0 \rho_0) - \Tr(\hat{\Pi}_0 \rho_1)\big) - \big(\Tr(\tilde{\Pi}_0 \rho_0) -\Tr(\tilde{\Pi}_0 \rho_1)\big) \right|\\
        \leq~& 2 C_{\sign} \epsilon + 2^{r+1}\delta + (36\hat{C}_{\sign}\log{d'} + 37) \epsilon\\
        \leq~& \varepsilon.
    \end{align*}

    \paragraph{Complexity analysis.} We complete the proof by analyzing the computational complexity of our algorithm. According to \Cref{lemma:sign-polynomial-implementation}, our algorithm specified in \Cref{fig:algo-HH-measurement} requires $O(n)$ qubits and $O(d^2\log{d}) \leq \widetilde{O}(2^{2r}/\varepsilon^{2}) \leq 2^{O(n)}$ queries to $Q_0$ and $Q_1$. In addition, the circuit description of our algorithm can be computed in deterministic time $\widetilde{O}(d^{9/2}/\varepsilon) = \widetilde{O}(2^{4.5r}/\varepsilon^{5.5}) \leq 2^{O(n)}$. 
\end{proof}

\subsection{A slightly improved upper bound for \GapQSD{}: Proof of \texorpdfstring{\Cref{thm:GapQSD-in-QIP(2)-bounded-prover}}{Theorem 3.4}}
\label{subsec:GapQSD-in-QIP(2)-proof}

We start by presenting the quantum interactive proof system used in \Cref{thm:GapQSD-in-QIP(2)-bounded-prover}, as shown in \Cref{algo:GapQSD-in-QIP(2)-bounded-prover}. This proof system aligns with \cite[Figure 2]{Watrous02}, with the new component  being the honest prover's behavior. In particular, the honest prover now employs the algorithmic Holevo--Helstrom measurement $\{\tilde{\Pi}_0,\tilde{\Pi}_1\}$ from \Cref{thm:algo-HH-meas}, rather than the optimal measurement $\{\Pi_0,\Pi_1\}$ in \Cref{prop:HH-meas} as per \Cref{thm:HH-bound}.

\begin{algorithm}[!htp]
	\caption{Two-message proof system for \GapQSD{} (quantum linear-space prover).}
	\label{algo:GapQSD-in-QIP(2)-bounded-prover}
    \SetEndCharOfAlgoLine{.}
    1. The verifier $\calV$ first chooses $b\in\binset$ uniformly at random. Subsequently, $\calV$ applies $Q_b$ to $\ket{0}^{\otimes n}$, traces out all non-output qubits, and sends the resulting state $\rho_b$\;
    2. The verifier $\calV$ receives a single-qubit state $\hat{\sigma}$, measures the state in the computational basis, and denotes the outcome by $\hat{b}\in\binset$\;
    \begin{tcolorbox}[colback=gray!10, colframe=gray!40, boxrule=0pt, sharp corners, width=0.9\linewidth]
        The \emph{honest} prover $\calP$ measures the received state $\rho$ using the algorithmic Holevo--Helstrom measurement $\{\tilde{\Pi}_0,\tilde{\Pi}_1\}$ (\Cref{thm:algo-HH-meas}), and sends the outcome $\hat{b}$. In particular, the outcome equals $\hat{b}$ if the measurement indicates $\rho$ is $\rho_{\hat{b}}$.
    \end{tcolorbox}
    3. The verifier $\calV$ accepts if $b=\hat{b}$; otherwise $\calV$ rejects. 
    \BlankLine
\end{algorithm}

Following that, we delve into the analysis of Protocol \ref{algo:GapQSD-in-QIP(2)-bounded-prover}:
\begin{proof}[Proof of \Cref{thm:GapQSD-in-QIP(2)-bounded-prover}]
    Note that $\Pr{\hat{b}=a'|b=a}$ denotes the probability that the prover $\calP$ uses a two-outcome measurement $\{\Pi'_0,\Pi'_1\}$, which is arbitrary in general, to measure the state $\rho_a$, resulting in the measurement outcome $a'$ for $a,a' \in \binset$. We then derive the corresponding acceptance probability of Protocol \ref{algo:GapQSD-in-QIP(2)-bounded-prover}:
    \begin{subequations}
    \label{eq:QIP(2)-protocol-pacc}
    \begin{align}
        \Pr{b=\hat{b}} &= \Pr{b=0}\cdot\Pr{\hat{b}=0|b=0} + \Pr{b=1} \cdot \Pr{\hat{b}=1|b=1}\\
        &= \frac{1}{2}\Pr{\hat{b}=0|b=0} + \frac{1}{2} \rbra*{1 - \Pr{\hat{b}=0|b=1}}\\
        &= \frac{1}{2}+\frac{1}{2} \big( \Tr(\Pi'_0\rho_0)-\Tr(\Pi'_0\rho_1) \big).
    \end{align}
    \end{subequations}

    \vspace{1em}
    For \textit{yes} instances where $\td(\rho_0,\rho_1) \geq \alpha(n)$, considering that the prover $\calP$ is honest, we have 
    \begin{align*}
        \Pr{b=\hat{b}} &= \frac{1}{2}+\frac{1}{2}\big( \Tr(\tilde{\Pi}_0\rho_0)-\Tr(\tilde{\Pi}_0\rho_1) \big) \\
        &\geq \frac{1}{2} + \frac{1}{2}\rbra*{ \td(\rho_0,\rho_1) - 2^{-n} }
        \geq \frac{1}{2}+\frac{1}{2} \big( \alpha(n)-2^{-n} \big).
    \end{align*}
    Here, the first line follows \Cref{eq:QIP(2)-protocol-pacc}, the second line follows from the lower bound in \Cref{thm:algo-HH-meas}. 
    Therefore, we have the completeness $c(n) = \frac{1}{2}+\frac{1}{2} \big( \alpha(n)- 2^{-n} \big)$ and the (honest) prover strategy described in \Cref{algo:GapQSD-in-QIP(2)-bounded-prover} is indeed implementable in quantum single-exponential time and linear space due to \Cref{thm:algo-HH-meas}. 

    \vspace{1em}
    For \textit{no} instances where $\td(\rho_0,\rho_1) \leq \beta(n)$, it suffices to consider a prover that returns a single qubit.\footnote{See \Cref{footnote:single-bit-response} for further details.} An arbitrary channel implementing the prover's strategy, followed by the verifier's computational-basis measurement of that qubit, induces a two-outcome POVM $\cbra*{\Pi'_0,\Pi'_1}$ on the received state. Therefore, we obtain the following from \Cref{eq:QIP(2)-protocol-pacc}: 
    $$\Pr{b=\hat{b}} = \frac{1}{2}+\frac{1}{2} \big( \Tr(\Pi'_0\rho_0)-\Tr(\Pi'_0\rho_1) \big) \leq \frac{1}{2}+\frac{1}{2} \td(\rho_0,\rho_1) \leq \frac{1}{2}\big( 1+\beta(n) \big) \coloneqq s(n).$$
    Here, the first inequality follows from the Holevo--Helstrom bound (\Cref{thm:HH-bound}). 

    \paragraph{Error reduction for \Cref{algo:GapQSD-in-QIP(2)-bounded-prover}.}
    To reduce the completeness and soundness errors, we apply error reduction for \QIPtwo{} (\Cref{lemma:QIPtwo-error-reduction}) to \Cref{algo:GapQSD-in-QIP(2)-bounded-prover} with $l(n)=n$. The resulting proof system $\protocol{\calP'}{\calV'}$ is obtained by substituting \Cref{algo:GapQSD-in-QIP(2)-bounded-prover} into \Cref{algo:QIP(2)-error-reduction}. 
    
    We now analyze the complexity of the honest prover $\calP'$. The resulting proof system $\protocol{\calP'}{\calV'}$, which consists of $t_0t_1$ independent and parallel executions of \Cref{algo:GapQSD-in-QIP(2)-bounded-prover}, can be viewed as discriminating $t_0t_1$ pairs of quantum states $(\rho_0^{(j)},\rho_1^{(j)})$ for $1 \leq j \leq t_0t_1$. The total input length of the quantum circuits to prepare the states $\rho_b^{(1)},\cdots,\rho_b^{(t_0t_1)}$ for $b\in\binset$ is 
    $n\cdot t_0t_1 = 32n^4 q^4(n) \leq O(n^\tau) \coloneqq n'$ for some positive constant $\tau$. After replacing $n$ with $n'$, the space complexity of the honest prover $\calP'$ remains $O(n')$. 
    Finally, the desired completeness and soundness errors follow from the fact that $2^{-l\rbra[\big]{(n')^{1/\tau}}} \leq 1/3$ whenever $n'\geq \rbra*{\log{3}}^\tau$. 
\end{proof}


\section{Algorithmic Uhlmann transform and its implications}
\label{sec:algo-Uhlmann}

In this section, we introduce an \textit{algorithmic} version of the Uhlmann transform that approximately attains the maximum overlap between purifications of the quantum states $\rho_0$ and $\rho_1$. We begin by defining the \textsc{Computational Uhlmann Fidelity Test Problem}, which assumes access to the descriptions of the corresponding state-preparation circuits: 
\begin{problem}[Computational Uhlmann Fidelity Test Problem]
    \label{prob:fidelity-test}
    Let $Q_0$ and $Q_1$ be two known polynomial-size quantum circuits acting on $n$ qubits in registers $(\sfA, \sfR)$, each with $r$ designated output qubits in register $\sfA$. 
    Let $\ket{\psi_b}$ be the pure state produced by applying $Q_b$ to the initial state $\ket{0}^{\otimes n}$ for $b\in \binset$, and let $\rho_b$ be the quantum state obtained by tracing out all non-output qubits in register $\sfR$. 
    \begin{itemize}
        \item \textbf{Input:} The qubits of the purification $\ket{\psi_1}$ in the reference register $\sfR$, while all qubits in the output register $\sfA$ are fixed and cannot be modified.         
        \item \textbf{Output:} A bit $z$ obtained from the two-outcome measurement $\cbra{\Pi, I-\Pi}$, where $\Pi \coloneqq \ketbra{\psi_0}{\psi_0}$. 
    \end{itemize}
\end{problem}

The goal is to maximize the probability that the test in \Cref{prob:fidelity-test} succeeds (i.e., obtaining the first outcome), which can be accomplished by applying an appropriate dimension-preserving quantum channel to register $\sfR$.

\paragraph{Information-theoretic background.}
\Cref{prob:fidelity-test} naturally generalizes the (information-theoretic) Uhlmann fidelity test between a quantum state $\rho$ and a pure state $\ketbra{\phi}{\phi}$, as stated in~\cite[Exercise 9.2.2]{Wilde13}. In that setting, the reference register $\sfR$ does not appear, and the two-outcome measurement is $\cbra*{\ketbra{\phi}{\phi}, I-\ketbra{\phi}{\phi}}$. 
The success probability of the test is $\Tr(\ketbra{\phi}{\phi}\rho)$, which coincides exactly with the squared fidelity $\F^2(\ketbra{\phi}{\phi},\rho)$. 

For two quantum states that may be mixed, the probability of obtaining the first outcome in the Uhlmann fidelity test (i.e., the information-theoretic counterpart of \Cref{prob:fidelity-test}) is characterized by the squared fidelity~\cite{Uhlmann76}, as will be seen later in \Cref{corr:stronger-Uhlmann}. A refined formulation with an elementary proof appears in~\cite{Jozsa94} and is stated below: 
\begin{theorem}[Uhlmann's theorem, adapted from~{\cite[Theorem 2]{Jozsa94}}]
    \label{thm:Uhlmann}
    Let $\rho_0$ and $\rho_1$ be quantum states on register $\sfA$. For any fixed purification $\ket{\psi_0}$ of $\rho_0$ on registers $(\sfA,\sfR)$, the maximum squared overlap between $\ket{\psi_0}$ and any purification $\ket{\psi_1}$ of $\rho_1$ on the same registers is given by the squared (Uhlmann) fidelity:\footnote{The last equality in \Cref{eq:Uhlmann-identity} follows from the freedom in purifications (see, e.g.,~\cite[Exercise 2.81]{NC10}).} 
    \begin{equation}
        \label{eq:Uhlmann-identity}
        \F^2(\rho_0,\rho_1) \coloneqq \Tr\rbra*{\abs{\sqrt{\rho_0}\sqrt{\rho_1}}}^2 = \max_{\ket{\psi_1}} \abs*{\innerprod{\psi_0}{\psi_1}}^2 = \max_{U} \abs*{\bra{\psi_0} \rbra[\big]{I^{\sfA}\!\otimes\! U^{\sfR}} \ket{\psi'_1}}^2.
    \end{equation}
    Here, the last identity transfers the freedom in choosing $\ket{\psi_1}$ to the freedom in choosing a unitary $U^\sfR$ while keeping the purification $\ket{\psi'_1}$ fixed.
\end{theorem}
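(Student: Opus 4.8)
The plan is to establish the three equalities in \Cref{eq:Uhlmann-identity} by passing to the operator picture, in which purifications become matrices and state overlaps become Hilbert--Schmidt inner products, and then reducing the resulting optimization to the von Neumann trace inequality; this is essentially the route of~\cite{Jozsa94}. First I would fix an orthonormal basis $\{\ket{j}\}$ of a $\dim(\calH_\sfA)$-dimensional subspace of $\calH_\sfR$ and set $\ket{\Omega}\coloneqq\sum_j\ket{j}^\sfA\ket{j}^\sfR$, so that every vector on $(\sfA,\sfR)$ supported on this subspace equals $\ket{M}\coloneqq\rbra{M^\sfA\otimes I^\sfR}\ket{\Omega}$ for a unique operator $M$ on $\calH_\sfA$. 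The three identities I will use are: $(\mathrm{i})$ $\innerprod{M}{N}=\Tr\rbra{M^\dagger N}$; $(\mathrm{ii})$ $\Tr_\sfR\rbra{\ketbra{M}{M}}=MM^\dagger$, so $\ket{M}$ purifies $\rho$ exactly when $MM^\dagger=\rho$; and $(\mathrm{iii})$ the transpose trick $\rbra{I^\sfA\otimes U^\sfR}\ket{M}=\ket{MU^\top}$ for every unitary $U$, where $U^\top$ is the transpose in the chosen basis. Identity $(\mathrm{ii})$ singles out the canonical purification $\ket{\sqrt{\rho}}$, and since any two purifications of the same state on the same registers are related by a unitary on $\sfR$ (the freedom in purifications, valid once $\calH_\sfR$ is large enough to host both), the last equality in \Cref{eq:Uhlmann-identity} follows immediately; moreover $\max_{\ket{\psi_1}}\abs{\innerprod{\psi_0}{\psi_1}}^2$ does not depend on the choice of $\ket{\psi_0}$, so I may take $\ket{\psi_0}=\ket{\sqrt{\rho_0}}$.

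Next I would carry out the optimization explicitly. With $\ket{\psi_0}=\ket{\sqrt{\rho_0}}$ and any fixed purification $\ket{\psi_1'}=\ket{M_1}$ of $\rho_1$ (so $M_1M_1^\dagger=\rho_1$; for concreteness $M_1=\sqrt{\rho_1}$), identities $(\mathrm{i})$ and $(\mathrm{iii})$ give $\bra{\psi_0}\rbra{I^\sfA\otimes U^\sfR}\ket{\psi_1'}=\Tr\rbra{\sqrt{\rho_0}\,M_1\,U^\top}$, so the quantity to be maximized is $\max_U\abs{\Tr\rbra{\sqrt{\rho_0}\,M_1\,W}}$ over unitaries $W=U^\top$. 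Setting $X\coloneqq\sqrt{\rho_0}\,M_1$ and using its singular value decomposition together with the von Neumann trace inequality, one has $\abs{\Tr(XW)}\le\norm{X}_1$ for every unitary $W$, with equality attained by the unitary built from the left and right singular vectors of $X$ (completed to orthonormal bases when $X$ is singular); hence the maximum equals $\norm{X}_1$. Taking $M_1=\sqrt{\rho_1}$ gives $\norm{\sqrt{\rho_0}\sqrt{\rho_1}}_1=\Tr\abs{\sqrt{\rho_0}\sqrt{\rho_1}}=\F(\rho_0,\rho_1)$, and squaring establishes all three equalities in \Cref{eq:Uhlmann-identity}.

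The main point requiring care is the bookkeeping around the reference register: one must check that $\calH_\sfR$ is large enough to embed both $\ket{\psi_0}$ and the canonical square-root purification $\ket{\sqrt{\rho_1}}$, and that \emph{every} purification of $\rho_1$ on $(\sfA,\sfR)$ is reachable from a fixed one by a unitary acting only on $\sfR$ --- the freedom-in-purifications lemma, which needs $\dim(\calH_\sfR)\ge\rank(\rho_1)$ plus a short padding argument when $\rho_1$ is rank-deficient. The optimization step is routine given the trace inequality; the only subtlety there is checking that the optimal $W$ is a genuine unitary rather than merely a partial isometry, which is handled by the completion of singular vectors just mentioned. Everything else is a direct computation in the operator picture.
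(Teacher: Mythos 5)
Your proof is correct. The paper does not reproduce a proof of this theorem---it cites it as adapted from Jozsa's elementary argument---and your operator-picture derivation (vectorization via $\ket{M}=(M\otimes I)\ket{\Omega}$, the identities $\innerprod{M}{N}=\Tr(M^\dagger N)$ and $\Tr_{\sfR}\ketbra{M}{M}=MM^\dagger$, the transpose trick to move $U^\sfR$ onto the operator side, reduction to $\max_W\abs{\Tr(\sqrt{\rho_0}\sqrt{\rho_1}\,W)}$, and the von Neumann trace inequality with equality achieved by the unitary assembled from the singular vectors of $\sqrt{\rho_0}\sqrt{\rho_1}$) is precisely the argument of that reference, including your correct flagging of the two small bookkeeping points about $\dim(\calH_\sfR)$ and the completion of the partial isometry to a full unitary.
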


By inspecting the soundness analysis in the proof of~\cite[Theorem 11]{Watrous02}, which uses the monotonicity $\F^2(\rho_0,\rho_1) \leq \F^2(\calE(\rho_0),\calE(\rho_1))$ for every quantum channel $\calE$, see e.g.,~\cite[Theorem 9.6]{NC10}, one obtains a stronger form of \Cref{thm:Uhlmann}:\footnote{Noting that although an arbitrary quantum channel $\Phi(\cdot)$ may act on register $\sfR$, the reduced density matrix on register $\sfA$ remains $\rho_1$. Let $\sigma \coloneqq \rbra[\big]{I^{\sfA}\!\otimes\!\Phi^{\sfR}}(\ketbra{\psi_1}{\psi_1})$ be the resulting state on registers $(\sfA,\sfR)$ such that $\Tr_{\sfR}(\sigma)=\rho_1$, then $\bra{\psi_0} \sigma \ket{\psi_0} = \F^2(\ketbra{\psi_0}{\psi_0},\sigma) \leq \F^2\rbra*{\Tr_{\sfR}\rbra*{\ketbra{\psi_0}{\psi_0}},\Tr_{\sfR}\rbra*{\sigma}} = \F^2(\rho_0,\rho_1)$. The same argument applies when the roles of $\rho_0$ and $\rho_1$ are exchanged. \label{footnote:stronger-Uhlmann}} 

\begin{corollary}[A stronger form of Uhlmann's theorem, implicit in~{\cite[Theorem 11]{Watrous02}}]
    \label{corr:stronger-Uhlmann}
    Let $\rho_0$ and $\rho_1$ be quantum states on register $\sfA$. For any fixed purifications $\ket{\psi_0}$ and $\ket{\psi_1}$ of $\rho_0$ and $\rho_1$, respectively, on registers $(\sfA,\sfR)$, the squared (Uhlmann) fidelity satisfies
    \[ \F^2(\rho_0,\rho_1) = \max_{\Phi} \bra{\psi_0} \rbra[\big]{I^{\sfA}\!\otimes\!\Phi^{\sfR}}(\ketbra{\psi_1}{\psi_1}) \ket{\psi_0}, \]
    where the maximization ranges over all quantum channels $\Phi$ acting on the register $\sfR$ and preserving its dimension. 
\end{corollary}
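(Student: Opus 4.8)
The plan is to prove the inequalities ``$\le$'' and ``$\ge$'' separately, making explicit the argument sketched in the footnote attached to the statement. For the ``$\le$'' direction I want to show that \emph{every} dimension-preserving channel $\Phi$ on $\sfR$ satisfies $\bra{\psi_0}(I^{\sfA}\otimes\Phi^{\sfR})(\ketbra{\psi_1}{\psi_1})\ket{\psi_0}\le\F^2(\rho_0,\rho_1)$. First I would set $\sigma\coloneqq(I^{\sfA}\otimes\Phi^{\sfR})(\ketbra{\psi_1}{\psi_1})$ and note that, since $\Phi$ acts trivially on $\sfA$, tracing out $\sfR$ leaves the $\sfA$-marginal unchanged, so $\Tr_{\sfR}(\sigma)=\Tr_{\sfR}(\ketbra{\psi_1}{\psi_1})=\rho_1$. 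Since $\ket{\psi_0}$ is pure, I would use the identity $\F^2(\ketbra{\phi}{\phi},\tau)=\bra{\phi}\tau\ket{\phi}$ (immediate from the paper's definition of $\F$: $\sqrt{\ketbra{\phi}{\phi}}=\ketbra{\phi}{\phi}$, and $\ketbra{\phi}{\phi}\sqrt{\tau}$ has the single nonzero singular value $\sqrt{\bra{\phi}\tau\ket{\phi}}$) to rewrite $\bra{\psi_0}\sigma\ket{\psi_0}=\F^2(\ketbra{\psi_0}{\psi_0},\sigma)$. Then applying monotonicity of fidelity under the partial-trace channel $\Tr_{\sfR}$ (see, e.g.,~\cite[Theorem 9.6]{NC10}), together with $\Tr_{\sfR}(\ketbra{\psi_0}{\psi_0})=\rho_0$ and the computation of $\Tr_{\sfR}(\sigma)$ above, yields $\F^2(\ketbra{\psi_0}{\psi_0},\sigma)\le\F^2(\rho_0,\rho_1)$, as desired.

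For the ``$\ge$'' direction it suffices to exhibit a single channel attaining the bound. I would invoke \Cref{thm:Uhlmann} with the given purification $\ket{\psi_1}$ playing the role of the fixed purification: there is a unitary $U_\star$ acting on $\sfR$ with $\abs{\bra{\psi_0}(I^{\sfA}\otimes U_\star^{\sfR})\ket{\psi_1}}^2=\F^2(\rho_0,\rho_1)$. The conjugation $\Phi_\star(\cdot)\coloneqq U_\star(\cdot)U_\star^{\dagger}$ is a dimension-preserving quantum channel on $\sfR$, and a direct computation gives $\bra{\psi_0}(I^{\sfA}\otimes\Phi_\star^{\sfR})(\ketbra{\psi_1}{\psi_1})\ket{\psi_0}=\abs{\bra{\psi_0}(I^{\sfA}\otimes U_\star^{\sfR})\ket{\psi_1}}^2=\F^2(\rho_0,\rho_1)$. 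Combining the two directions shows that the maximum over $\Phi$ is attained and equals $\F^2(\rho_0,\rho_1)$.

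I do not expect any real obstacle here: the corollary is essentially Uhlmann's theorem (\Cref{thm:Uhlmann}) plus the observation that enlarging the optimization domain from unitaries on $\sfR$ to dimension-preserving channels on $\sfR$ cannot help, by data processing. The only points meriting a moment's care are (i) verifying that $\Phi$ leaving $\sfA$ untouched indeed forces $\Tr_{\sfR}(\sigma)=\rho_1$, so that fidelity monotonicity is applied to the correct pair of marginals, and (ii) bookkeeping the squared-versus-unsquared fidelity convention when invoking the pure-state identity; neither is delicate.
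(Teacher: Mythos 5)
Your proposal is correct and follows essentially the same argument as the paper: the paper establishes the ``$\le$'' direction in a footnote attached to the corollary, via exactly the same chain $\bra{\psi_0}\sigma\ket{\psi_0}=\F^2(\ketbra{\psi_0}{\psi_0},\sigma)\le\F^2(\Tr_{\sfR}\ketbra{\psi_0}{\psi_0},\Tr_{\sfR}\sigma)=\F^2(\rho_0,\rho_1)$ using monotonicity of fidelity under partial trace, and the ``$\ge$'' direction follows from \Cref{thm:Uhlmann} since the optimal unitary conjugation is a special case of a dimension-preserving channel. Incidentally, you correctly write $\Tr_{\sfR}(\sigma)=\rho_1$, whereas the paper's footnote contains a harmless typo reading $\rho_0$.
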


By examining the proof of \cite[Theorem 2]{Jozsa94} together with \cite[Lemma 6]{Jozsa94}, one can extract an explicit expression for the optimal unitary, later referred to as the \emph{Uhlmann transform}, that achieves the maximum in \Cref{eq:Uhlmann-identity}, as stated in \Cref{lemma:Uhlmann-transform}. A self-contained proof can be found in~\cite[Appendix F]{UNWT25} (derived from Jozsa's lemma) or Lemma 7.6 in the arXiv version of~\cite{MY23}. 

\begin{lemma}[Explicit form of the Uhlmann transform, implicit in~{\cite[Lemma 6]{Jozsa94}}]
    \label{lemma:Uhlmann-transform}
    Let $\ket{\psi_0}$ and $\ket{\psi_1}$ be purifications of quantum states $\rho_0$ and $\rho_1$ on register $\sfA$, defined on registers $(\sfA,\sfR)$, where $\sfR$ is the reference register. A unitary $U_{\star}$ on register $\sfR$ that attains the maximum in Uhlmann's theorem (\Cref{thm:Uhlmann}) is given by
    \[ U_{\star} = \sign^\SV \rbra*{\Tr_{\sfA} \rbra*{ \ketbra{\psi_0}{\psi_1}^{\sfA\sfR} }}.\]
\end{lemma}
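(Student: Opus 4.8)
The plan is to reduce the general mixed-state statement to the structural fact in Jozsa's lemma \cite[Lemma 6]{Jozsa94}, which handles the decomposition of an arbitrary (non-negative, in Jozsa's formulation) matrix into a product of a unitary and a positive semidefinite matrix via its polar/singular-value structure. First I would fix the two purifications $\ket{\psi_0},\ket{\psi_1}$ of $\rho_0,\rho_1$ on $(\sfA,\sfR)$ and introduce the partial-trace operator $\XUhl \coloneqq \Tr_{\sfA}(\ketbra{\psi_0}{\psi_1})$ acting on the reference register $\sfR$. The key computation is the identity $\bra{\psi_0}(I^\sfA \otimes U^\sfR)\ket{\psi_1} = \Tr_\sfR\!\big(U\,\Tr_\sfA(\ketbra{\psi_1}{\psi_0})\big) = \Tr(U\, \XUhl^{\dagger})$ — i.e., the overlap, as a function of the free unitary $U$ on $\sfR$, is exactly the trace inner product of $U$ with $\XUhl^\dagger$. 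This is the step that converts the geometric quantity in \Cref{eq:Uhlmann-identity} into an optimization over $U$ of $\abs{\Tr(U \XUhl^\dagger)}$.

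Next I would invoke the matrix Hölder / von Neumann trace inequality: for any contraction $U$ (here unitary), $\abs{\Tr(U \XUhl^\dagger)} \leq \norm{\XUhl}_1 = \sum_i \sigma_i(\XUhl)$, with equality when $U$ is chosen so that $U \XUhl^\dagger$ is positive semidefinite with eigenvalues $\sigma_i(\XUhl)$. Writing the singular value decomposition $\XUhl = \sum_i \sigma_i \ket{u_i}\bra{v_i}$, the optimal choice is $U_\star = \sum_i \ket{v_i}\bra{u_i}$ restricted to the support, extended by (say) the identity on the kernel; this is precisely $\sign^\SV(\XUhl)$ in the sense of \Cref{def:matrix-SV-function} applied to the odd function $\sign$, since $\sign(\sigma_i) = 1$ for every nonzero singular value $\sigma_i > 0$, so $\sign^\SV(\XUhl) = \sum_i \ket{v_i}\bra{u_i}$. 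The remaining bookkeeping is to confirm that this maximal value $\norm{\XUhl}_1$ equals $\F(\rho_0,\rho_1)$, i.e.\ that $\norm{\Tr_\sfA(\ketbra{\psi_0}{\psi_1})}_1 = \Tr\abs{\sqrt{\rho_0}\sqrt{\rho_1}}$; this is exactly the content of \Cref{thm:Uhlmann} (the middle equality of \Cref{eq:Uhlmann-identity}), so once the previous steps identify the maximizer, consistency with \Cref{thm:Uhlmann} is automatic and no independent verification of the fidelity formula is needed.

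The main obstacle I anticipate is purely a matter of conventions rather than mathematical depth: making the singular-value-transformation definition line up cleanly with the polar-decomposition language of \cite{Jozsa94}, and in particular being careful about (i) the kernel of $\XUhl$ — on which $\sign^\SV$ is defined to act as $0$ rather than as a partial isometry, so strictly speaking the displayed $U_\star$ is only a partial isometry and one must either note that any unitary extension is equally optimal or argue the reference register can be taken so that $\XUhl$ is full-rank — and (ii) the choice of which register is traced out and the resulting transpose/adjoint on $\XUhl$, which affects whether one writes $\sign^\SV(\XUhl)$ or $\sign^\SV(\XUhl)^\dagger$. I would handle (i) by remarking that the maximum in \Cref{thm:Uhlmann} is attained by $\sign^\SV(\XUhl)$ together with an arbitrary unitary completion on $\ker(\XUhl)$, which does not affect the overlap, and (ii) by fixing the convention in the statement so that $\XUhl = \Tr_\sfA(\ketbra{\psi_0}{\psi_1})$ and tracking the adjoint explicitly through the overlap identity above. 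With those conventions pinned down the proof is a two-line consequence of the trace inequality plus \Cref{thm:Uhlmann}.
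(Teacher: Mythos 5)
Your overall approach is correct and matches what the paper itself cites (the paper does not prove \Cref{lemma:Uhlmann-transform} directly, but points to Jozsa's Lemma 6, Appendix F of \cite{UNWT25}, and the arXiv version of \cite{MY23}; all of these run on exactly your argument: reduce the overlap to $\Tr(U\,\XUhl^\dagger)$ and apply the von Neumann/H\"older trace inequality to pick out the polar-isometry factor). The reduction $\bra{\psi_0}(I^\sfA\!\otimes U^\sfR)\ket{\psi_1} = \Tr(U\,\XUhl^\dagger)$ is right, and your handling of the kernel (any unitary completion of $\sign^\SV(\XUhl)$ works, since the overlap is insensitive to $U$'s action off the support) is exactly the caveat the paper implicitly acknowledges when it notes later that $\tilde{U}_\star$ need not be unitary.

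One concrete slip, of the kind you yourself flagged under item (ii): with $\XUhl = \sum_i \sigma_i \ketbra{u_i}{v_i}$ we have $\XUhl^\dagger = \sum_i \sigma_i \ketbra{v_i}{u_i}$ and $\Tr(U\,\XUhl^\dagger) = \sum_i \sigma_i \bra{u_i}U\ket{v_i}$, so the maximizer is $U_\star = \sum_i \ketbra{u_i}{v_i}$ (which makes $U_\star\XUhl^\dagger = \sum_i \sigma_i \ketbra{u_i}{u_i}$ PSD), not $\sum_i\ketbra{v_i}{u_i}$ as you wrote. Meanwhile, per \Cref{def:matrix-SV-function} the odd singular value transform keeps the left/right factors in place, so $\sign^\SV(\XUhl) = \sum_{\sigma_i>0}\ketbra{u_i}{v_i}$, not $\sum_i\ketbra{v_i}{u_i}$. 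You wrote both expressions with the factors swapped, so the two errors cancel and the final identity $U_\star = \sign^\SV(\XUhl)$ is correct, but a clean write-up should fix the intermediate lines so each step is individually true.
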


\paragraph{Algorithmic implementation.}
Unlike the Holevo--Helstrom measurement in~\Cref{prop:HH-meas}, for which one can directly obtain an \textit{exact} block-encoding of $X_\mathrm{HH}\coloneqq (\rho_0-\rho_1)/2$, constructing a block-encoding of $X_\mathrm{Uhl}\coloneqq \Tr_{\sfA} \rbra*{ \ketbra{\psi_0}{\psi_1}^{\sfA\sfR} }$ is more involved.
A straightforward approach introduces a normalization factor of $\dim(\calH_\sfA)$ and results only in an exact encoding of $X_\mathrm{Uhl}/\!\dim(\calH_\sfA)$~\cite{MY23}.\footnote{See Section 7 in the arXiv version of~\cite{MY23}.} Instead, an exact block-encoding of $X_\mathrm{Uhl}$ was recently proposed in~\cite[Section 5.1]{UNWT25}. By combining this key ingredient with the space-efficient quantum singular value transformation in~\cite[Section 3]{LGLW23}, one can implement the unitary in \Cref{lemma:Uhlmann-transform} in a natural manner using quantum \textit{single-exponential} time and \textit{linear} space. We refer to this explicit implementation as the \textit{algorithmic Uhlmann transform}: 

\begin{theorem}[Algorithmic Uhlmann transform]
    \label{thm:algorithmic-Uhlmann-transform}
    Let $\rho_0$ and $\rho_1$ be quantum states prepared by $n$-qubit quantum circuits $Q_0$ and $Q_1$, and let $\ket{\psi_0}$ and $\ket{\psi_1}$ denote their purifications before tracing out the non-output qubits, as in \Cref{prob:fidelity-test}. An approximate version of the Uhlmann transform $U_{\star}$ specified in \Cref{lemma:Uhlmann-transform}, denoted by $\tilde{U}_{\star}$, can be implemented so that
    \begin{equation}
        \label{eq:algo-Uhlmann-bound}
        \F^2(\rho_0,\rho_1) - 2^{-n} \leq \abs*{\bra{\psi_0}^{\sfA\sfR} \rbra*{I^\sfA \!\otimes\! \tilde{U}_{\star}^{\sfR}} \ket{\psi_1}^{\sfA\sfR}}^2 \leq \F^2(\rho_0,\rho_1).
    \end{equation}    
    The quantum circuit implementation of $\tilde{U}_{\star}$, acting on $O(n)$ qubits, requires $2^{O(n)}$ queries to the quantum circuits $Q_0$ and $Q_1$, as well as $2^{O(n)}$ one- and two-qubit quantum gates. Moreover, the circuit description can be computed in deterministic time $2^{O(n)}$ and space $O(n)$. 
\end{theorem}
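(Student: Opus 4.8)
The plan is to follow the template of the proof of \Cref{thm:algo-HH-meas}, replacing the exact block-encoding of the Hermitian operator $(\rho_0-\rho_1)/2$ by an exact block-encoding of the \emph{non-Hermitian} operator $\XUhl \coloneqq \Tr_{\sfA}\rbra*{\ketbra{\psi_0}{\psi_1}^{\sfA\sfR}}$, and then invoking the singular-value branch of the space-efficient sign QSVT (the ``furthermore'' part of \Cref{lemma:sign-polynomial-implementation}). Concretely, I would first build a $(1,O(n),0)$-block-encoding $W$ of $\XUhl$ via \Cref{lemma:Uhlmann-transform-block-encoding} (following~\cite[Section 5.1]{UNWT25}), which uses $O(1)$ queries to $Q_0,Q_1$ and $O(n)$ auxiliary gates; this target is legitimate since $\norm{\XUhl}_{\infty}\le\norm{\XUhl}_{1}=\F(\rho_0,\rho_1)\le 1$. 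I would then fix $\varepsilon\coloneqq 2^{-n}$, set $\epsilon_1\coloneqq 0$ and $\epsilon_2\coloneqq\epsilon$, and choose $\delta\coloneqq\varepsilon\cdot 2^{-r-3}$ and $\epsilon\coloneqq\Theta(\varepsilon)$ exactly as in the proof of \Cref{thm:algo-HH-meas}, inflating the constants by a factor of $2$ to absorb the squaring step below; feeding $W$ into \Cref{lemma:sign-polynomial-implementation} then yields a unitary $V$ that block-encodes an operator $\tilde U_{\star}$ with $\norm{\tilde U_{\star}-P_{\sign,d'}^{\SV}(\XUhl)}\le(36\hat C_{\sign}+37)\epsilon_2$, where $P^{\sign}_{d'}$ is the space-efficient polynomial of \Cref{lemma:space-efficient-sign} and $d'=2^{O(n)}$. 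The approximate Uhlmann transform is $\tilde U_{\star}\coloneqq\rbra*{\bra{\bar 0}\otimes I^{\sfR}}V\rbra*{\ket{\bar 0}\otimes I^{\sfR}}$, a contraction on $\sfR$ approximating $U_{\star}=\sign^{\SV}(\XUhl)$; in the downstream proof system the honest prover simply applies $V$ to $\sfR$ together with fresh $\ket{\bar 0}$ ancillas and returns $\sfR$, so that $\tilde U_{\star}$ appears among the Kraus operators of the resulting dimension-preserving channel.

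The error analysis would hinge on the identity $\bra{\psi_0}\rbra*{I^{\sfA}\!\otimes\! M^{\sfR}}\ket{\psi_1}=\Tr\rbra*{M\,\XUhl^{\dagger}}$, valid for every operator $M$ on $\sfR$. The upper bound in \Cref{eq:algo-Uhlmann-bound} is then immediate from the H\"older inequality (\Cref{lemma:matrix-Holder-inequality}): since $\norm{\tilde U_{\star}}_{\infty}\le\norm{V}_{\infty}=1$, one has $\abs*{\Tr\rbra*{\tilde U_{\star}\XUhl^{\dagger}}}\le\norm{\XUhl}_{1}=\F(\rho_0,\rho_1)$, and squaring gives the claim. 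For the lower bound, mirroring \Cref{eq:algo-HH-polyError,eq:algo-HH-implementError}, I would split the deficit $\abs*{\F(\rho_0,\rho_1)-\abs*{\Tr\rbra*{\tilde U_{\star}\XUhl^{\dagger}}}}$ into a polynomial-approximation part and a QSVT-implementation part. Writing the singular value decomposition $\XUhl=\sum_k\sigma_k\ket{\tilde u_k}\bra{u_k}$, so that $\Tr\rbra*{U_{\star}\XUhl^{\dagger}}=\sum_k\sigma_k=\F(\rho_0,\rho_1)$ while $\Tr\rbra*{P_{\sign,d'}^{\SV}(\XUhl)\XUhl^{\dagger}}=\sum_k P^{\sign}_{d'}(\sigma_k)\sigma_k$, the first part is bounded by
\[\sum_{\sigma_k>\delta}\sigma_k\,\abs*{1-P^{\sign}_{d'}(\sigma_k)}+\sum_{0<\sigma_k\le\delta}\sigma_k\,\abs*{1-P^{\sign}_{d'}(\sigma_k)}\;\le\;C_{\sign}\epsilon+2^{r+1}\delta,\]
using $\sum_k\sigma_k=\F(\rho_0,\rho_1)\le 1$, $\abs{P^{\sign}_{d'}}\le 1$, and $\rank(\XUhl)\le 2^{r}$ (since $\XUhl$ is the partial trace over the $r$-qubit register $\sfA$ of a rank-one operator); the second part is at most $(36\hat C_{\sign}+37)\epsilon_2\cdot\F(\rho_0,\rho_1)$ by H\"older again. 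Finally I would pass from these bounds to bounds on the squared quantities via $\abs{a^{2}-b^{2}}\le 2\abs{a-b}$ for $a,b\in[0,1]$ (the source of the extra factor of $2$ in the parameter choice), concluding $\abs*{\F^{2}(\rho_0,\rho_1)-\abs*{\Tr\rbra*{\tilde U_{\star}\XUhl^{\dagger}}}^{2}}\le\varepsilon=2^{-n}$.

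The complexity bookkeeping would follow \Cref{lemma:sign-polynomial-implementation} verbatim: with $d=\Theta(d')=2^{O(n)}$ one checks the side conditions ($s(n)=\Theta(n)\ge\Omega(\log n)$, $d'\le 2^{O(s(n))}$, $\epsilon_2\ge 2^{-O(s(n))}$), obtains a circuit for $V$ acting on $O(n)+\lceil\log d'\rceil+O(1)=O(n)$ qubits that makes $O(d^{2})=2^{O(n)}$ uses of $W$ and $W^{\dagger}$ (hence $2^{O(n)}$ queries to $Q_0,Q_1$ and $2^{O(n)}$ elementary gates), whose description is computable in deterministic time $\widetilde O(d^{9/2}/\epsilon_2)=2^{O(n)}$ and space $O(n)$.

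I expect the main obstacle to be the \emph{error analysis at the level of the overlap rather than the operator}: the space-efficient sign polynomial approximates $\sign$ only on $[-1,1]\setminus[-\delta,\delta]$, so $P_{\sign,d'}^{\SV}(\XUhl)$ is \emph{not} close to $U_{\star}=\sign^{\SV}(\XUhl)$ in operator norm and a naive perturbation bound fails. The resolution is the singular-basis expansion above, in which the contribution of singular values below $\delta$ is damped by the normalization $\sum_k\sigma_k=\F(\rho_0,\rho_1)\le 1$ together with the rank bound $\rank(\XUhl)\le 2^{r}$, precisely the analogue of the $\Lambda_0$ term in \Cref{eq:algo-HH-polyError}. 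A secondary point requiring care is that $\tilde U_{\star}$ is only a contraction, not a unitary; this is harmless because the H\"older bound already yields the upper half of \Cref{eq:algo-Uhlmann-bound}, and the channel realized by $V$ has $\tilde U_{\star}$ as one of its Kraus operators, so, together with \Cref{corr:stronger-Uhlmann}, \Cref{eq:algo-Uhlmann-bound} is exactly what the downstream two-message proof system needs.
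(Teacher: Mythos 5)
Your proposal matches the paper's proof essentially step for step: exact block‑encoding of $X_{\mathrm{Uhl}}$ via \Cref{lemma:Uhlmann-transform-block-encoding}, the singular‑value branch of the space‑efficient sign QSVT (\Cref{lemma:sign-polynomial-implementation}), and an error split into polynomial‑approximation and QSVT‑implementation parts, with the singular‑value decomposition and the rank bound $\operatorname{rank}(X_{\mathrm{Uhl}})\le 2^r$ taming the sub‑$\delta$ singular values exactly as in \Cref{eq:algo-Uhlmann-polyError-SVbound}. The only cosmetic differences are that you obtain the upper half of \Cref{eq:algo-Uhlmann-bound} directly from H\"older rather than \Cref{corr:stronger-Uhlmann}, and you pass from linear to squared error via the cruder $|a^2-b^2|\le 2|a-b|$ instead of the paper's sharper $(2+|a-b|)|a-b|$ bookkeeping, both of which are fine.
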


\begin{remark}[\GapFEstlog{} is \BQL{}-complete]
    In analogy with the connection between the algorithmic Holevo--Helstrom measurement (\Cref{thm:algo-HH-meas}) and the \BQL{} containment of \GapQSDlog{} proven in~\cite[Section 4.2]{LGLW23}, one can establish that \GapFEstlog{} is in \BQL{} by adapting the space-efficient QSVT-based approach in \Cref{thm:algorithmic-Uhlmann-transform}. Here, \GapFEstlog{} denotes the space-bounded version of \GapFEst{}, where the state-preparation circuits have input length $O(\log n)$. Moreover, \GapFEstlog{} is \BQL{}-complete,\footnote{The \BQL{} hardness of \GapFEstlog{} holds even for pure states, which follows from combining~\cite[Lemma 4.23]{LGLW23} with the fact that \BQL{} is closed under complement~\cite[Corollary 4.8]{Wat99}.} and we leave a formal proof for future work. 
\end{remark}

By inspecting the (honest-verifier) quantum statistical zero-knowledge protocol (``closeness test'') for $\QSC[\beta,\alpha]$, serving as the complement of \QSD{}, with constant satisfying $\alpha^2>\beta$ in~\cite[Section 4.3]{Watrous02}, we establish a slightly improved upper bound for \GapFEst{}:
\begin{theorem}[\GapFEst{} is in \QIPtwo{} with a quantum linear-space honest prover]
    \label{thm:GapFEst-in-QIP(2)-bounded-prover}
    There exists a two-message quantum interactive proof system for $\FEst[\alpha(n),\beta(n)]$ with completeness $c(n) = \alpha(n)-2^{-n}$ and soundness $s(n) = \beta(n)$. Moreover, a near-optimal prover strategy for this proof system can be implemented in quantum single-exponential time and linear space. 

    \noindent Consequently, for any $\alpha(n)$ and $\beta(n)$ satisfying $\alpha(n)-\beta(n) \geq 1/\poly(n)$, 
    \[\FEst[\alpha(n),\beta(n)] \text{ is in } \QIPtwo{} \text{ with a quantum } O(n') \text{ space honest prover,}\] 
    where $n'$ is the total input length of the quantum circuits that prepare the corresponding tuple of quantum states.
\end{theorem}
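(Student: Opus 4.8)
The plan is to mirror the proof of \Cref{thm:GapQSD-in-QIP(2)-bounded-prover}, with the algorithmic Holevo--Helstrom measurement replaced by the algorithmic Uhlmann transform of \Cref{thm:algorithmic-Uhlmann-transform}. First I would record the two-message proof system, which is precisely the ``closeness test'' of \cite[Section 4.3]{Watrous02} (equivalently, \Cref{prob:fidelity-test}): the verifier $\calV$ uses $Q_1$ to prepare a purification $\ket{\psi_1}$ of $\rho_1$ on registers $(\sfA,\sfR)$, sends the reference register $\sfR$ to the prover, receives a (possibly transformed) register $\sfR$ back, applies the two-outcome measurement $\cbra*{\ketbra{\psi_0}{\psi_0},\, I-\ketbra{\psi_0}{\psi_0}}$ to the joint state on $(\sfA,\sfR)$, and accepts iff the first outcome occurs; the verifier is polynomial-time (it runs $Q_1$, then $Q_0^{\dagger}$, then measures). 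The only new ingredient is the behaviour of the \emph{honest} prover: instead of the exact Uhlmann transform $U_{\star}$ of \Cref{lemma:Uhlmann-transform}, it applies the algorithmic Uhlmann transform $\tilde{U}_{\star}$ from \Cref{thm:algorithmic-Uhlmann-transform} to register $\sfR$.

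Next I would carry out the completeness and soundness analysis. Since the prover returns a register of the same dimension as $\sfR$, its action is without loss of generality a dimension-preserving quantum channel $\Phi$ on $\sfR$ (its private workspace is traced out from the verifier's point of view), so the acceptance probability equals $\bra{\psi_0}\rbra*{I^{\sfA}\otimes\Phi^{\sfR}}\rbra*{\ketbra{\psi_1}{\psi_1}}\ket{\psi_0}$; by \Cref{corr:stronger-Uhlmann} this is at most $\F^2(\rho_0,\rho_1)$, hence at most $\beta(n)$ on \emph{no} instances, which gives soundness $s(n)=\beta(n)$. On \emph{yes} instances the honest prover uses $\Phi(\cdot)=\tilde{U}_{\star}(\cdot)\tilde{U}_{\star}^{\dagger}$, so its acceptance probability equals $\abs*{\bra{\psi_0}\rbra*{I^{\sfA}\otimes\tilde{U}_{\star}^{\sfR}}\ket{\psi_1}}^2$, which by \Cref{eq:algo-Uhlmann-bound} is at least $\F^2(\rho_0,\rho_1)-2^{-n}\geq\alpha(n)-2^{-n}$, giving completeness $c(n)=\alpha(n)-2^{-n}$. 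The efficiency claim for the honest prover is then immediate from \Cref{thm:algorithmic-Uhlmann-transform}: $\tilde{U}_{\star}$ acts on $O(n)$ qubits, uses $2^{O(n)}$ queries to $Q_0,Q_1$ and $2^{O(n)}$ elementary gates, and admits a circuit description computable in deterministic time $2^{O(n)}$ and space $O(n)$.

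It remains to amplify the gap. Whenever $\alpha(n)-\beta(n)\geq 1/\poly(n)$, we have $c(n)-s(n)=\rbra*{\alpha(n)-\beta(n)}-2^{-n}\geq 1/\poly(n)$ for all sufficiently large $n$, so \Cref{lemma:QIPtwo-error-reduction} with $l(n)=n$ yields a proof system $\protocol{\calP'}{\calV'}$ with exponentially small two-sided error, consisting of $t_0t_1=\poly(n)$ independent parallel copies of the original protocol (all copies using the same $Q_0,Q_1$). The honest $\calP'$ simply applies $\tilde{U}_{\star}$ independently to each of the $t_0t_1$ returned reference registers, so that each copy retains completeness $\alpha(n)-2^{-n}$, as required by \Cref{lemma:QIPtwo-error-reduction}; this strategy runs on $n' \coloneqq n\cdot t_0t_1=O(n^{\tau})$ qubits (for some constant $\tau$), i.e.\ in quantum $O(n')$ space, with circuit description computable in $2^{O(n')}$ time and $O(n')$ space. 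Handling the finitely many small instances by brute force then yields completeness $2/3$ and soundness $1/3$, completing the proof.

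I do not expect any individual step to be genuinely difficult once \Cref{thm:algorithmic-Uhlmann-transform} is available --- the argument is essentially a transcription of the \GapQSD{} case. The two points needing care are the soundness bound, which rests entirely on the monotonicity-of-fidelity reasoning behind \Cref{corr:stronger-Uhlmann} (one must confirm that a prover acting on $\sfR$ with arbitrary private ancilla is captured by a dimension-preserving channel on $\sfR$, so that Uhlmann's upper bound indeed applies), and the post-amplification bookkeeping showing that the honest prover's space stays linear in the new input length $n'$ even though the verifier's message has grown to $\poly(n)$ reference registers.
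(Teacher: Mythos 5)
Your proposal is correct and follows essentially the same route as the paper: the protocol is Watrous's closeness test with the honest prover replaced by the algorithmic Uhlmann transform of \Cref{thm:algorithmic-Uhlmann-transform}, soundness comes from \Cref{corr:stronger-Uhlmann} (a cheating prover's action on $\sfR$ is, from the verifier's viewpoint, a dimension-preserving channel), completeness $\alpha(n)-2^{-n}$ comes directly from \Cref{eq:algo-Uhlmann-bound}, and the parallel-repetition bookkeeping with $l(n)=n$ and $n'=n\cdot t_0t_1=O(n^{\tau})$ matches the paper. One inessential difference: you have the verifier send the reference register of $\ket{\psi_1}$ and measure against $\ketbra{\psi_0}{\psi_0}$, which actually matches \Cref{prob:fidelity-test} and the orientation of $\tilde U_\star$ in \Cref{eq:algo-Uhlmann-bound} more literally; the paper's \Cref{algo:GapFEST-in-QIP(2)-bounded-prover} sends $\ket{\psi_0}$'s reference register and has the verifier apply $Q_1^\dagger$, so the honest prover there should really apply $\tilde U_\star^\dagger$ rather than $\tilde U_\star$ --- the two are equivalent by the symmetry of fidelity and by $\sign^{\SV}(X^\dagger)=\rbra*{\sign^{\SV}(X)}^\dagger$, but your version avoids that (silent) relabeling.
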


\vspace{1em}
In the remainder of this section, we first present the proofs of \Cref{thm:algorithmic-Uhlmann-transform} and \Cref{thm:GapFEst-in-QIP(2)-bounded-prover} in \Cref{subsec:algo-Uhlmann-proof} and \Cref{subsec:GapFEst-in-QIP(2)-proof}, respectively. We then discuss the implications of \Cref{thm:GapFEst-in-QIP(2)-bounded-prover} for promise problems defined with respect to the trace distance in \Cref{subsec:implications-traceDist-from-fidelity}, which are closely related to the complexity classes \QSZK{} and \NIQSZK{}. 

\subsection{Algorithmic Uhlmann transform: Proof of \texorpdfstring{\Cref{thm:algorithmic-Uhlmann-transform}}{Theorem 4.4}}
\label{subsec:algo-Uhlmann-proof}

To implement our algorithmic Uhlmann transform, we begin by stating an exact block-encoding of $\Tr_{\sfA'} \rbra[\big]{ \ketbra{\psi_0}{\psi_1}^{\sfA'\sfR} }$, as specified in \Cref{lemma:Uhlmann-transform-block-encoding}. The proof of \Cref{lemma:Uhlmann-transform-block-encoding} appears at the beginning of \cite[Section 5.1]{UNWT25}, specifically from Equation (37) to Equation (42). 

\begin{lemma}[Exact block-encoding of ${\Tr_{\sfA} \rbra[\big]{ \ketbra{\psi_0}{\psi_1}^{\sfA\sfR} }}$, adapted from~{\cite[Section 5.1]{UNWT25}}]
    \label{lemma:Uhlmann-transform-block-encoding}
    Let $X_\mathrm{Uhl} \coloneqq \Tr_{\sfA} \rbra[\big]{ \ketbra{\psi_0}{\psi_1}^{\sfA\sfR} }$ be a linear operator on register $\sfR$ such that the unitary $\sign^\SV\rbra*{X_\mathrm{Uhl}}$ attains the maximum in Uhlmann's theorem (\Cref{thm:Uhlmann}). 
    Recall that $Q_0$ and $Q_1$ denote the state-preparation circuits of $\rho_0$ and $\rho_1$, as specified in \Cref{prob:fidelity-test}.     
    Then the unitary $W$ on registers $(\sfA',\sfR,\sfE)$, where $\sfA'$ is identical to $\sfA$ and $\sfE$ contains the same number of qubits as $\sfR$, is a $(1,n,0)$-block-encoding of $X_\mathrm{Uhl}$, given by 
    \[
    \bra{\bar{0}}^{\sfA'}\bra{\bar{0}}^{\sfE} W \ket{\bar{0}}^{\sfA'}\ket{\bar{0}}^{\sfE} = X_\mathrm{Uhl}, \text{ where } W \coloneqq \rbra[\big]{Q^{\sfA'\sfR}_1}^{\dagger} \rbra*{I^{\sfA'} \otimes \SWAP^{\sfR,\sfE}} Q_0^{\sfA'\sfR}. \]
    \noindent This block-encoding can be implemented using a single query to each state-preparation circuit $Q_0$ and $Q_1$, together with $O(n)$ one- and two-qubit quantum gates. 
\end{lemma}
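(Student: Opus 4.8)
The plan is to establish the claimed operator identity in \Cref{lemma:Uhlmann-transform-block-encoding} by a single direct computation --- propagating the fixed ancilla state $\ket{\bar 0}^{\sfA'}\ket{\bar 0}^{\sfE}$ through the three unitary factors of $W$ --- and then to read off the essentially trivial resource count. The setup I have in mind is: $\sfA'$ and $\sfE$ are the two ancillary registers (of $r$ and $n-r$ qubits, copies of $\sfA$ and $\sfR$) on which $Q_0$ and $Q_1$ prepare the purifications $\ket{\psi_0}$ and $\ket{\psi_1}$; $\sfR$ is the $(n-r)$-qubit register on which the block-encoding is evaluated; and the operator being encoded, $X_\mathrm{Uhl} = \Tr_{\sfA}\rbra*{\ketbra{\psi_0}{\psi_1}} = \sum_a \rbra*{\bra{a}^{\sfA'}\otimes I}\ket{\psi_0}\,\bra{\psi_1}\rbra*{\ket{a}^{\sfA'}\otimes I}$, is the partial trace over the output index $a$, which lives on $\sfE$ and is transported onto $\sfR$ by $\SWAP^{\sfR,\sfE}$. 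So the whole statement reduces to checking that compressing $W$ onto $\ket{\bar 0}$ of the ancilla block $(\sfA',\sfE)$ reproduces this sum.

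For the computation, I would feed an arbitrary state $\ket{\xi}^{\sfR}$ into the evaluated register and track $W = Q_1^{\dagger}\rbra*{I^{\sfA'}\otimes\SWAP^{\sfR,\sfE}}Q_0$ from right to left, with $Q_0$ and $Q_1$ acting on $(\sfA',\sfE)$. First, $Q_0$ turns $\ket{\bar 0}^{\sfA'}\ket{\bar 0}^{\sfE}$ into $\ket{\psi_0} = \sum_a \ket{a}^{\sfA'}\ket{\phi^{(0)}_a}^{\sfE}$, where $\ket{\phi^{(0)}_a} \coloneqq \rbra*{\bra{a}^{\sfA'}\otimes I}\ket{\psi_0}$, leaving $\ket{\xi}^{\sfR}$ untouched. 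Next --- and this is the one genuinely load-bearing step --- $\SWAP^{\sfR,\sfE}$ moves each $\ket{\phi^{(0)}_a}$ out of $\sfE$ onto $\sfR$ while depositing $\ket{\xi}$ into the now-emptied $\sfE$; this restores $(\sfA',\sfE)$ to exactly the configuration in which $Q_1$ built $\ket{\psi_1}$, so that applying $Q_1^{\dagger}$ and then projecting $(\sfA',\sfE)$ onto $\ket{\bar 0}$ contracts $\ket{a}^{\sfA'}\ket{\xi}^{\sfE}$ against $\bra{\psi_1}$, producing the scalar $\overline{\rbra*{\bra{a}^{\sfA'}\otimes\bra{\xi}^{\sfE}}\ket{\psi_1}} = \innerprod{\phi^{(1)}_a}{\xi}$ with $\ket{\phi^{(1)}_a}\coloneqq\rbra*{\bra{a}^{\sfA'}\otimes I}\ket{\psi_1}$. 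Summing over $a$, the compressed map sends $\ket{\xi}\mapsto\sum_a\innerprod{\phi^{(1)}_a}{\xi}\ket{\phi^{(0)}_a} = \rbra*{\sum_a\ketbra{\phi^{(0)}_a}{\phi^{(1)}_a}}\ket{\xi}$, that is, it equals $\sum_a\ketbra{\phi^{(0)}_a}{\phi^{(1)}_a} = \Tr_{\sfA}\rbra*{\ketbra{\psi_0}{\psi_1}} = X_\mathrm{Uhl}$, the canonical identification of $\sfE$ with $\sfR$ being used implicitly.

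The remaining points are immediate. For exactness: $W$ is a composition of the unitaries $Q_0$, $Q_1^{\dagger}$, and $\SWAP^{\sfR,\sfE}$, hence itself unitary, so the compression above is a genuine sub-block of a unitary; thus the normalization is $1$, the error is $0$, the bound $\norm{X_\mathrm{Uhl}}\leq 1$ is automatic, and the number of flagged ancilla qubits is $\abs{\sfA'}+\abs{\sfE} = r+(n-r) = n$, matching the claimed $(1,n,0)$ block-encoding. For the cost: $W$ issues one query to $Q_0$ and one to $Q_1$ (used as $Q_1^{\dagger}$), plus a $\SWAP$ between two $(n-r)$-qubit registers, that is, $O(n)$ two-qubit gates, with no classical preprocessing. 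I do not anticipate a real obstacle here; the only thing requiring care is the register bookkeeping --- which register is fed into the encoding, which two registers $Q_0$ and $Q_1$ act on, and the identification $\sfE\cong\sfR$ that lets $X_\mathrm{Uhl}$, nominally an operator on the reference space, be the operator actually realized. The conceptual point worth flagging, by contrast with the $X_\mathrm{HH} = (\rho_0-\rho_1)/2$ case handled by \Cref{lemma:space-efficient-LCU}, is that this ``prepare, SWAP, and partially uncompute'' device produces $X_\mathrm{Uhl}$ with normalization exactly $1$ --- thereby avoiding the spurious factor $\dim(\calH_{\sfA})$ incurred by the naive ``contract $\abs{\sfA}$ EPR pairs'' construction --- which is precisely why it is the right operator to pass into the space-efficient QSVT for the sign function in the proof of \Cref{thm:algorithmic-Uhlmann-transform}.
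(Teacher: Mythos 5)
Your proof is correct in substance, and it is essentially the same ``prepare, $\SWAP$, partially uncompute'' construction that the paper attributes to \cite[Section~5.1]{UNWT25}: prepare $\ket{\psi_0}$ on fresh ancillas, swap the freshly prepared reference into $\sfR$ while loading the incoming state into the prover's ancillary reference, then uncompute with $Q_1^{\dagger}$ so that the projection onto $\ket{\bar 0}$ contracts against $\bra{\psi_1}$. The single load-bearing step --- that $\SWAP^{\sfR,\sfE}$ restores $(\sfA',\sfE)$ to exactly the configuration on which $Q_1^{\dagger}$ acts, so that projecting gives the scalar $\bra{\psi_1}(\ket{a}\otimes\ket{\xi})=\innerprod{\phi^{(1)}_a}{\xi}$ --- is correctly identified, and the resource count ($O(1)$ queries, $O(n)$ swaps, $n$ flag qubits, normalization $1$) is immediate as you say.

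One bookkeeping point, however, deserves to be made explicit, since you yourself flagged register placement as the only delicate part. You apply $Q_0$ and $Q_1$ to $(\sfA',\sfE)$, whereas the lemma's displayed formula reads $Q_0^{\sfA'\sfR}$ and $\rbra[\big]{Q_1^{\sfA'\sfR}}^{\dagger}$, i.e.\ the state-preparation circuits act on $(\sfA',\sfR)$. Propagating $\ket{\bar 0}^{\sfA'}\ket{\xi}^{\sfR}\ket{\bar 0}^{\sfE}$ through the lemma's literal $W$ and projecting onto $\bra{\bar 0}^{\sfA'}\bra{\bar 0}^{\sfE}$ yields $\Tr_{\sfA}\rbra[\big]{\ketbra{\psi_1}{\psi_0}} = X_{\mathrm{Uhl}}^{\dagger}$ rather than $X_{\mathrm{Uhl}}$; a small sanity check is $n=2$, $r=1$, $Q_0 = I\otimes X$, $Q_1 = I$, where $X_{\mathrm{Uhl}}=\ketbra{1}{0}$ but the literal formula gives $\ketbra{0}{1}$. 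Your register placement $(\sfA',\sfE)$ is the one that actually produces $X_{\mathrm{Uhl}}$ as the lemma asserts --- equivalently one could exchange the roles of $Q_0$ and $Q_1$ in the displayed $W$. The discrepancy is harmless downstream, since a block-encoding of $X^{\dagger}$ is $W^{\dagger}$ and $\sign^{\SV}\rbra*{X^{\dagger}} = \rbra*{\sign^{\SV}(X)}^{\dagger}$ implements an equally valid Uhlmann unitary, but you have silently corrected what looks like a register mislabel in the lemma statement rather than reproduced its formula verbatim; it is worth noting this so your computation and the written $W$ are reconciled.
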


Next, using the space-efficient polynomial approximation $P^\sign_{d'}$ of the sign function from \Cref{lemma:space-efficient-sign}, it suffices to implement another transform $\hat{U}_\star$ that is very close to $U_\star$: 
\[ \hat{U}_\star = P^{\SV}_{\sign,d'} \rbra*{ \Tr_{\sfA} \rbra*{ \ketbra{\psi_0}{\psi_1}^{\sfA\sfR} } }.\]
By applying the space-efficient QSVT associated with the polynomial $P^\sign_{d'}$ to the block-encoding of $\Tr_{\sfA} \rbra[\big]{ \ketbra{\psi_0}{\psi_1}^{\sfA\sfR} }$, we obtain a unitary $V_\star$ that is a block-encoding of $\hat{U}_\star$. In particular, $V_\star$ acts as an exact block-encoding of $\tilde{U}_{\star} \coloneqq \bra{\bar{0}} \tilde{V}_\star \ket{\bar{0}}$, which gives an approximate implementation of the Uhlmann transform. The difference between $\tilde{U}_{\star}$ and $\hat{U}_\star$ comes from the implementation error of the space-efficient QSVT.\footnote{It is worth noting that $\tilde{U}_{\star}$ is \textit{not} necessarily a unitary.} We now move on to the proof. 

\begin{proof}[Proof of \Cref{thm:algorithmic-Uhlmann-transform}]
    Our proof strategy is inspired by~\cite[Section 5.1]{UNWT25}, which provides a \BQP{} containment of the low-rank variant of \GapFEst{}. 
    \begin{figure}[!htp]
    \centering
    \begin{quantikz}[wire types={b,b,b,b}, classical gap=0.06cm, row sep=0.75em]
        \lstick{$\ket{\bar{0}}^\sfA~$} & \gate[2]{\quad Q_0^\dagger \quad} & & \gate[2]{\quad Q_1 \quad} & \meter{} & \setwiretype{c} \rstick{$z_\sfA$} \\
        \lstick{$\ket{\bar{0}}^\sfR~$} & & \gate[3]{ V_\star \approx U_{P^{\SV}_{\sign,d'}\rbra*{\Tr_{\sfA} \rbra[\big]{ \ketbra{\psi_0}{\psi_1}^{\sfA\sfR} }}}} & & \meter{} & \setwiretype{c} \rstick{$z_\sfR$} \\
        \lstick{$\ket{\bar{0}}^{\sfA'}$} & & & &\\
        \lstick{$\ket{\bar{0}}^\sfE~$} & & & &\\
    \end{quantikz}
    \caption{Algorithmic Uhlmann transform.}
    \label{fig:algo-Uhlmann-transform}
    \end{figure}
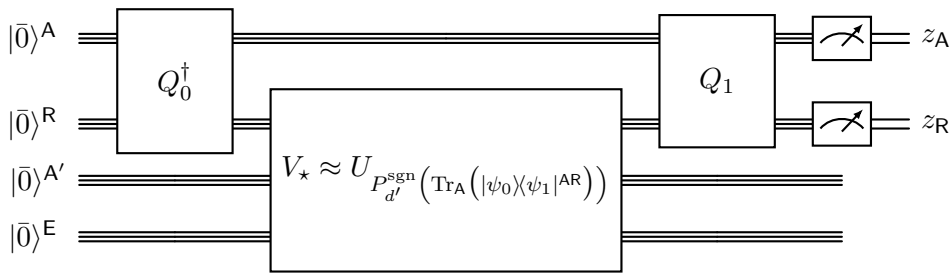
    Recall that $\rho_0$ and $\rho_1$ are $r(n)$-qubit quantum states on the register $\sfA$, each prepared by polynomial-size $n$-qubit quantum circuits $Q_0$ and $Q_1$, respectively, acting on the registers $(\sfA,\sfR)$, as defined in \Cref{prob:fidelity-test}. The overall procedure for estimating $\F^2(\rho_0,\rho_1)$ is presented in \Cref{fig:algo-Uhlmann-transform}, where the register $\sfA'$ contains $r$ qubits and the register $\sfE$ contains all remaining workspace qubits used by the implemented block-encoding, including the $n-r$ purification ancillary qubits and the additional ancillary qubits required by the QSVT implementation.
    In this procedure, acceptance occurs when the joint measurement outcome $(z_\sfA,z_\sfR)$ is the $n$-bit \textit{all-zero} string. 
    The central component in \Cref{fig:algo-Uhlmann-transform} is to implement the unitary $V_\star$, which can be achieved as follows:

    \begin{enumerate}[label={\upshape(\arabic*)}]
    \item Applying \Cref{lemma:Uhlmann-transform-block-encoding}, we obtain a $(1,n,0)$-block-encoding $W$ of $\XUhl = \Tr_{\sfA} \rbra*{ \ketbra{\psi_0}{\psi_1}^{\sfA\sfR} }$, using $O(1)$ queries to $Q_0$ and $Q_1$, together with $O(n)$ one- and two-qubit quantum gates.     
    \item Let $P_{d'}^{\sign} \in \mathbb{R}[x]$ be the degree-${d'}$ polynomial approximation of the sign function, as specified in \Cref{lemma:space-efficient-sign}.\footnote{This polynomial is obtained from some degree-$d$ averaged Chebyshev truncation with $d'=2d-1$.} 
    We choose parameters $\varepsilon \coloneqq 2^{-n}$, $\delta \coloneqq \frac{\varepsilon}{2^{r+3}}$, and $\epsilon \coloneqq \frac{\varepsilon}{64n\max\cbra{1,36\hat C_{\sign}+37,C_{\sign},\tilde C_{\sign}}}$.
    Consequently, the degree $d' \coloneqq \tilde C_{\sign}\cdot \frac{1}{\delta}\log\frac{1}{\epsilon}$ satisfies $d'=2^{O(n)}$, where $\tilde C_{\sign}$ is the constant from \Cref{lemma:space-efficient-sign}. This degree is fixed before choosing the numerical precision used to compute the coefficients. We also define $B_0 \coloneqq \rbra[\big]{36\hat{C}_\sign \log d' +37}^2 + C_\sign^2$ and $B_1 \coloneqq 2\rbra[\big]{36\hat{C}_\sign \log d' +37} + 3C_\sign$.\footnote{Since $r\le n$, we have $\log(1/\delta)=O(n)$ and $\log(1/\epsilon)=O(n)$, hence $\log d'=O(n)$. Therefore $B_0=O(n^2)$ and $B_1=O(n)$, with implied constants depending only on $\hat C_{\sign},C_{\sign},\tilde C_{\sign}$. Thus $B_0\epsilon^2+B_1\epsilon\le \varepsilon/4$ for all sufficiently large $n$, after increasing the absolute constant $64$ if necessary.\label{footnote:Uhlmann-parameters}}
    Applying the space-efficient QSVT associated with this polynomial (\Cref{lemma:sign-polynomial-implementation} with $\epsilon_1 \coloneqq 0$ and $\epsilon_2 \coloneqq \epsilon$), we obtain an implementation of the unitary $V_\star$. 
    \end{enumerate}

    \paragraph{Error analysis.} Noting that the resulting block-encoding $\tilde{V}_\star$ corresponds to a quantum channel acting on register $\sfR$, the upper bound in \Cref{eq:algo-Uhlmann-bound} follows immediately from \Cref{corr:stronger-Uhlmann}. Consequently, it suffices to prove the following weaker bound: 
    \[ \abs*{\F^2(\rho_0,\rho_1) - \abs*{\bra{\psi_0}^{\sfA\sfR} \rbra*{I^\sfA \!\otimes\! \tilde{U}_{\star}^{\sfR}} \ket{\psi_1}^{\sfA\sfR}}^2 } \leq 2^{-n}. \]
    
    To this end, we first bound the error introduced by the space-efficient polynomial approximation (\Cref{lemma:space-efficient-sign}). Consider the singular value decomposition $\XUhl =\sum_j s_j \ketbra{L_j}{R_j}$, where both $\cbra{\ket{L_j}}$ and $\cbra{\ket{R_j}}$ form orthonormal bases. We then define the index sets $\Lambda_0 \coloneqq \cbra{j: 0\leq s_j \leq \delta}$ and $\Lambda_+ \coloneqq \cbra{j: s_j > \delta}$. Using these definitions, we obtain the following bound: 
    \begin{subequations}
    \label{eq:algo-Uhlmann-polyError}
    \begin{align}
        &\abs*{ \F^2(\rho_0,\rho_1) - \abs*{ \bra{\psi_0} \rbra*{I^{\sfA}\!\otimes\!\hat{U}_\star^\sfR} \ket{\psi_1} }^2 }\\
        =~& \abs*{ \abs*{ \bra{\psi_0} \rbra*{I^{\sfA}\!\otimes\! \sign^{\SV}\rbra{\XUhl} } \ket{\psi_1} }^2 - \abs*{ \bra{\psi_0} \rbra*{I^{\sfA}\!\otimes\! P^\SV_{\sign,d'}\rbra{\XUhl} } \ket{\psi_1} }^2 }\\
        \leq~& \rbra*{ 2\abs*{ \bra{\psi_0} \rbra*{I^{\sfA}\!\otimes\! \sign^{\SV}\rbra{\XUhl} } \ket{\psi_1} } + \abs*{ \bra{\psi_0}  \rbra*{I^{\sfA}\!\otimes\! \rbra[\big]{\sign^{\SV}-P^\SV_{\sign,d'}}\rbra*{\XUhl}} \ket{\psi_1} } } \\
        & \quad \cdot \abs*{ \bra{\psi_0}  \rbra*{I^{\sfA}\!\otimes\! \rbra[\big]{\sign^{\SV}-P^\SV_{\sign,d'}}\rbra*{\XUhl}} \ket{\psi_1} }\\
        \leq~& \rbra*{2 + \abs*{ \bra{\psi_0}  \rbra*{I^{\sfA}\!\otimes\! \rbra[\big]{\sign^{\SV}-P^\SV_{\sign,d'}}\rbra*{\XUhl}} \ket{\psi_1} }}\\
        & \quad \cdot \abs*{ \bra{\psi_0} \rbra*{I^{\sfA}\!\otimes\! \rbra[\big]{\sign^{\SV}-P^\SV_{\sign,d'}}\rbra*{\XUhl}} \ket{\psi_1} }
    \end{align}
    \end{subequations}
    Here, the third line follows from the triangle inequality and the difference-of-squares formula, and the last line owes to the fact that $\abs*{ \bra{\psi_0} \rbra*{I^{\sfA}\!\otimes\! \sign^{\SV}\rbra{\XUhl} } \ket{\psi_1} } = \F(\rho_0,\rho_1) \leq 1$.

    Noting that any singular value $s_j$ of $\XUhl$ can be expressed as
    \[ s_j = \abs*{\Tr\rbra*{\ketbra{L_j}{R_j} \XUhl^{\dagger}}} = \abs*{\Tr\rbra*{\ketbra{L_j}{R_j} \Tr_{\sfA}\rbra*{\ketbra{\psi_1}{\psi_0}}}} = \abs*{\bra{\psi_0}\rbra[\big]{I^\sfA\otimes\ketbra{L_j}{R_j}}\ket{\psi_1}},\]
    it then follows from the singular value decomposition of $\XUhl$ that:
    \begin{subequations}
    \label{eq:algo-Uhlmann-polyError-SVbound}
    \begin{align}
        &\abs*{ \bra{\psi_0} \rbra*{I^{\sfA}\!\otimes\! \rbra[\big]{\sign^{\SV}-P^\SV_{\sign,d'}}\rbra*{\XUhl}} \ket{\psi_1} }\\
        \leq~& \sum_{j\in\Lambda_0} \abs*{s_j\sign(s_j)-s_j P^\sign_{d'}(s_j)} + \sum_{j\in\Lambda_+} \abs*{s_j\sign(s_j)-s_j P^\sign_{d'}(s_j)}\\
        \leq~& \sum_{j\in\Lambda_0} s_j\cdot\abs{\sign(s_j)-P^\sign_{d'}(s_j)} + \sum_{j\in\Lambda_+} s_j\cdot\abs{1-P^\sign_{d'}(s_j)}\\
        \leq~& \sum_{j\in\Lambda_0} 2 s_j + \sum_{j\in\Lambda_+} s_j C_{\sign}\epsilon\\
        \leq~& 2^{r+1}\delta + C_{\sign}\epsilon.
    \end{align}
    \end{subequations}
    Here, the second line uses the triangle inequality, the third line applies the sign function, the fourth line is guaranteed by \Cref{lemma:space-efficient-sign}, and the last line uses the facts that $\sum_j s_j = \F(\rho_0,\rho_1) \leq 1$ and that $\rank(\XUhl) \leq \min\cbra{\rank(\rho_0),\rank(\rho_1)} \leq 2^r$. 

    \vspace{1em}
    Next, we bound the error caused by space-efficient QSVT implementation (\Cref{lemma:sign-polynomial-implementation}): 
    \begin{subequations}
    \label{eq:algo-Uhlmann-implementError}
    \begin{align}
        & \abs*{ \abs*{ \bra{\psi_0} \rbra*{I^{\sfA}\!\otimes\!\hat{U}_\star^\sfR} \ket{\psi_1} }^2 - \abs*{ \bra{\psi_0} \rbra*{I^{\sfA}\!\otimes\!\tilde{U}_\star^\sfR} \ket{\psi_1} }^2 }\\
        =~& \abs*{ \abs*{ \bra{\psi_0} \rbra*{I^{\sfA}\!\otimes\! P^\SV_{\sign,d'}\rbra{\XUhl} } \ket{\psi_1} }^2 - \abs*{ \bra{\psi_0}\bra{\bar{0}}^{\sfA'}\bra{\bar{0}}^{\sfE} \rbra*{ I^{\sfA} \!\otimes\! V_{\star}} \ket{\psi_1}\ket{\bar{0}}^{\sfA'} \ket{\bar{0}}^{\sfE} }^2 }\\
        \leq~& \rbra*{ 2 + \abs*{ \bra{\psi_0} \rbra*{I^{\sfA}\!\otimes\! P^\SV_{\sign,d'}\rbra{\XUhl} } \ket{\psi_1} - \bra{\psi_0}\bra{\bar{0}}^{\sfA'}\bra{\bar{0}}^{\sfE} \rbra*{ I^{\sfA} \!\otimes\! V_{\star}} \ket{\psi_1}\ket{\bar{0}}^{\sfA'} \ket{\bar{0}}^{\sfE} } }\\
        & \quad \cdot \abs*{ \bra{\psi_0} \rbra*{I^{\sfA}\!\otimes\! P^\SV_{\sign,d'}\rbra{\XUhl} } \ket{\psi_1} - \bra{\psi_0}\bra{\bar{0}}^{\sfA'}\bra{\bar{0}}^{\sfE} \rbra*{ I^{\sfA} \!\otimes\! V_{\star}} \ket{\psi_1}\ket{\bar{0}}^{\sfA'} \ket{\bar{0}}^{\sfE} }.
    \end{align}
    \end{subequations}
    Here, the third line follows from the triangle inequality, the difference-of-squares formula, and the fact that $\abs*{ \bra{\psi_0}\bra{\bar{0}}^{\sfA'}\bra{\bar{0}}^{\sfE} \rbra*{ I^{\sfA} \!\otimes\! V_{\star}} \ket{\psi_1}\ket{\bar{0}}^{\sfA'} \ket{\bar{0}}^{\sfE} }  \leq 1$, since $I^{\sfA} \!\otimes\! V_{\star}$ is unitary and both $\ket{\psi_0}\ket{\bar{0}}^{\sfA'}\ket{\bar{0}}^{\sfE}$ and $\ket{\psi_1}\ket{\bar{0}}^{\sfA'}\ket{\bar{0}}^{\sfE}$ are pure states. In particular, it now suffices to bound the following: 
    \begin{subequations}
    \label{eq:algo-Uhlmann-implementError-HolderBound}
    \begin{align}
        & \abs*{ \bra{\psi_0} \rbra*{I^{\sfA}\!\otimes\! P^\SV_{\sign,d'}\rbra{\XUhl} } \ket{\psi_1} - \bra{\psi_0}\bra{\bar{0}}^{\sfA'}\bra{\bar{0}}^{\sfE} \rbra*{ I^{\sfA} \!\otimes\! V_{\star}} \ket{\psi_1}\ket{\bar{0}}^{\sfA'} \ket{\bar{0}}^{\sfE} }\\
        =~& \abs*{\Tr\rbra*{ \Tr_{\sfA}\rbra*{\ketbra{\psi_0}{\psi_1}}^{\dagger} P^\SV_{\sign,d'}\rbra{\XUhl} } - \Tr\rbra*{\Tr_{\sfA}\rbra*{\ketbra{\psi_0}{\psi_1}}^{\dagger} \bra{\bar{0}}^{\sfA'}\bra{\bar{0}}^{\sfE} V_{\star}\ket{\bar{0}}^{\sfA'} \ket{\bar{0}}^{\sfE} } } \\
        \leq~& \norm*{ P^\SV_{\sign,d'}\rbra{\XUhl} - \bra{\bar{0}}^{\sfA'}\bra{\bar{0}}^{\sfE} V_{\star}\ket{\bar{0}}^{\sfA'} \ket{\bar{0}}^{\sfE}} \cdot \norm*{\Tr_{\sfA}\rbra*{\ketbra{\psi_0}{\psi_1}}}_1\\
        =~& \norm*{ P^\SV_{\sign,d'}\rbra{\XUhl} - \bra{\bar{0}}^{\sfA'}\bra{\bar{0}}^{\sfE} V_{\star}\ket{\bar{0}}^{\sfA'} \ket{\bar{0}}^{\sfE}} \cdot \F(\rho_0,\rho_1)\\
        \leq~& \rbra*{36 \hat{C}_\sign \log{d'} + 37} \epsilon.
    \end{align}
    \end{subequations}
    Here, the third line follows from the H\"older inequality for Schatten norms (\Cref{lemma:matrix-Holder-inequality}), the fourth line utilizes the identity $\norm*{\Tr_{\sfA}\rbra*{\ketbra{\psi_0}{\psi_1}}}_1 = \max_U \abs*{\bra{\psi_1} I^\sfA\otimes U^{\sfR} \ket{\psi_0}} = \F(\rho_0,\rho_1)$, and the last line is guaranteed by \Cref{lemma:sign-polynomial-implementation}. 

    \vspace{1em}
    Combining the bounds in \Cref{eq:algo-Uhlmann-polyError,eq:algo-Uhlmann-polyError-SVbound,eq:algo-Uhlmann-implementError,eq:algo-Uhlmann-implementError-HolderBound}, we obtain the following error bound under the specified choice of parameters:
    \begin{align*}
        &\abs*{ \F^2(\rho_0,\rho_1) - \abs*{ \bra{\psi_0} \rbra*{I^{\sfA}\!\otimes\!\tilde{U}_\star^\sfR} \ket{\psi_1} }^2 }\\
        \leq~& \abs*{ \F^2(\rho_0,\rho_1) - \abs*{ \bra{\psi_0} \rbra*{I^{\sfA}\!\otimes\!\hat{U}_\star^\sfR} \ket{\psi_1} }^2 } + \abs*{ \abs*{ \bra{\psi_0} \rbra*{I^{\sfA}\!\otimes\!\hat{U}_\star^\sfR} \ket{\psi_1} }^2 - \abs*{ \bra{\psi_0} \rbra*{I^{\sfA}\!\otimes\!\tilde{U}_\star^\sfR} \ket{\psi_1} }^2 }\\
        \leq~& \rbra*{2+2^{r+1}\delta+C_\sign\epsilon} \rbra*{2^{r+1}\delta+C_\sign\epsilon} + \rbra*{2+\rbra[\big]{36\hat{C}_\sign\log{d'} + 37}\epsilon} \rbra[\big]{36\hat{C}_\sign\log{d'} + 37}\epsilon\\
        =~&\rbra*{ \rbra[\big]{36\hat{C}_\sign\log{d'} + 37}^2 + C_\sign^2 } \epsilon^2 + \rbra*{ 2\rbra[\big]{36\hat{C}_\sign\log{d'} + 37} + \rbra*{2+\frac{\varepsilon}{2}} C_\sign } \epsilon + \frac{\varepsilon}{2} + \frac{\varepsilon^2}{16}\\
        \leq~& B_0 \epsilon^2 + B_1 \epsilon + \frac{\varepsilon}{2} + \frac{\varepsilon^2}{16}\\
        \leq~& \frac{\varepsilon}{4}+\frac{\varepsilon}{2}+\frac{\varepsilon}{16}\\
        <~&\varepsilon.
    \end{align*}
    Here, the fourth line owes to $\delta = \varepsilon/2^{r+3}$, the fifth line follows from our choices of $B_0$ and $B_1$ together with $0 < \varepsilon \leq 1$, which implies $2\rbra[\big]{36\hat{C}_\sign\log d'+37}+\rbra*{2+\frac{\varepsilon}{4}}C_\sign\leq B_1$, and the sixth line uses the bound $B_0\epsilon^2+B_1\epsilon\leq \varepsilon/4$ in \Cref{footnote:Uhlmann-parameters}.
    
    \paragraph{Complexity analysis.} We complete the proof by analyzing the computational complexity of our construction. According to \Cref{lemma:sign-polynomial-implementation}, the procedure specified in \Cref{fig:algo-Uhlmann-transform} requires $O(n)$ qubits and $O(d^2\log{d}) \leq \widetilde{O}(2^{2r}/\varepsilon^{2}) \leq 2^{O(n)}$ queries to $Q_0$ and $Q_1$. In addition, the circuit description can be computed deterministically in time $\widetilde{O}(d^{9/2}/\varepsilon) = \widetilde{O}(2^{4.5r}/\varepsilon^{5.5}) \leq 2^{O(n)}$. 
\end{proof}

\subsection{A slightly improved upper bound for \texorpdfstring{\GapFEst{}}{GapF2Est}: Proof of \texorpdfstring{\Cref{thm:GapFEst-in-QIP(2)-bounded-prover}}{Theorem 4.5}}
\label{subsec:GapFEst-in-QIP(2)-proof}

We begin by presenting the quantum interactive proof system used in \Cref{thm:GapFEst-in-QIP(2)-bounded-prover}, as shown in \Cref{algo:GapFEST-in-QIP(2)-bounded-prover}. This proof system aligns with~\cite[Figure 3]{Watrous02}, with the new component being the honest prover's behavior. Specifically, the honest prover now utilizes the algorithmic Uhlmann transform $\tilde{U}_{\star}$ from \Cref{thm:algorithmic-Uhlmann-transform}, instead of the Uhlmann transform in \Cref{lemma:Uhlmann-transform}. 

\begin{algorithm}[!htp]
	\caption{\parbox{\linewidth}{Two-message proof system for \GapFEst{} (quantum linear-space prover).}}
	\label{algo:GapFEST-in-QIP(2)-bounded-prover}
    \SetEndCharOfAlgoLine{.}
    1. The verifier $\calV$ applies $Q_0$ to $\ket{0}^{\otimes n}$, and sends the \emph{non-output} qubits in register $\sfR$, while keeping output qubits in register $\sfA$\;
    2. The verifier $\calV$ receives the (possibly modified) qubits in register $\sfR$\;
    \begin{tcolorbox}[colback=gray!10, colframe=gray!40, boxrule=0pt, sharp corners, width=0.9\linewidth]
        The \emph{honest} prover $\calP$ applies the algorithmic Uhlmann transform $\tilde{U}_{\star}$ (\Cref{thm:algorithmic-Uhlmann-transform}) to the received qubits, and sends the resulting qubits back. 
    \end{tcolorbox}
    3. The verifier $\calV$ applies $Q_1^\dagger$ to the registers $(\sfA,\sfR)$, where register $\sfA$ contains the output qubits of $Q_0$ and register $\sfR$ contains the received qubits. $\calV$ then measures all qubits in $(\sfA,\sfR)$ in the computational basis and accepts if the measurement outcome is the $n$-bit all-zero string; otherwise, $\calV$ rejects. 
    \BlankLine
\end{algorithm}

Next, we complete the analysis of \Cref{algo:GapFEST-in-QIP(2)-bounded-prover}. 

\begin{proof}[Proof of \Cref{thm:GapFEst-in-QIP(2)-bounded-prover}]
    For \textit{yes} instances, where $\F^2(\rho_0,\rho_1) \geq \alpha(n)$, we note that the scenario in \Cref{algo:GapFEST-in-QIP(2)-bounded-prover} with the honest prover coincides with \Cref{prob:fidelity-test}. Therefore, the maximum acceptance probability $p_\acc$ of \Cref{algo:GapFEST-in-QIP(2)-bounded-prover} equals $\F^2(\rho_0,\rho_1)$, as guaranteed by Uhlmann's theorem (\Cref{thm:Uhlmann}). 
    Since the honest prover applies only an \textit{approximate} implementation of the Uhlmann transform (\Cref{thm:algorithmic-Uhlmann-transform}), it holds that 
    \[p_\acc \geq \F^2(\rho_0,\rho_1)-2^{-n} \geq \alpha(n)-2^{-n}\coloneqq c(n),\] 
    and the (honest) prover strategy described in \Cref{algo:GapFEST-in-QIP(2)-bounded-prover} is indeed implementable in quantum single-exponential time and linear space. 

    For \textit{no} instances, where $\F^2(\rho_0,\rho_1) \leq \beta(n)$, the argument follows immediately from \Cref{corr:stronger-Uhlmann} (see also \Cref{footnote:stronger-Uhlmann}), which gives $s(n) \coloneqq \beta(n)$. 

    \paragraph{Error reduction for \Cref{algo:GapFEST-in-QIP(2)-bounded-prover}.}
    To reduce the completeness and soundness errors, we apply error reduction for \QIPtwo{} (\Cref{lemma:QIPtwo-error-reduction}) to \Cref{algo:GapFEST-in-QIP(2)-bounded-prover} with $l(n)=n$. The resulting proof system $\protocol{\calP'}{\calV'}$ is obtained by substituting \Cref{algo:GapFEST-in-QIP(2)-bounded-prover} into \Cref{algo:QIP(2)-error-reduction}. 

    We now analyze the complexity of the honest prover $\calP'$. Noting that the resulting proof system $\protocol{\calP'}{\calV'}$ can be seen as testing the closeness of $t_0t_1$ pairs of quantum states $(\rho_0^{(j)},\rho_1^{(j)})$ for $1 \leq j \leq t_0t_1$, the total input length of the state-preparation circuits is $n\cdot t_0t_1 = 32n^4 q^4(n) \leq O(n^\tau) \coloneqq n'$ for some positive constant $\tau$, and the space complexity of the honest prover $\calP'$ is \textit{linear} in $n'$. Lastly, the desired completeness and soundness errors owe to the fact that $2^{-l\rbra[\big]{(n')^{1/\tau}}} \leq 1/3$ whenever $n'\geq \rbra*{\log{3}}^\tau$. 
\end{proof}

\subsection{Implications for closeness testing problems based on the trace distance}
\label{subsec:implications-traceDist-from-fidelity}
Combining the Fuchs--van de Graaf inequality (\Cref{lemma:traceDist-vs-fidelity}), which relates the squared fidelity to the trace distance, with the polarization lemma for the trace distance~\cite{Watrous02}, a direct calculation yields the following corollary:
\begin{corollary}[A slightly improved upper bound for \QSC{}]
    \label{corr:QSC-improved-bound}
    For any efficiently computable functions $\alpha(n)$ and $\beta(n)$ satisfying $\alpha^2(n)-\beta(n) \geq 1/O(\log{n})$, 
    \[\QSC[\beta(n),\alpha(n)] \text{ is in } \QIPtwo \text{ with a quantum } O(n') \text{ space honest prover}.\] 
    Here, $n'$ denotes the total input length of the state-preparation circuits. 
\end{corollary}

\begin{proof}
    For any pair of quantum states $\rho_0$ and $\rho_1$ whose purifications are efficiently preparable and for which either $\td(\rho_0,\rho_1) \leq \beta(n)$ or $\td(\rho_0,\rho_1) \geq \alpha(n)$, where $\alpha(n)^2-\beta(n) \geq 1/O(\log{n})$, the polarization lemma for the trace distance (cf.~\cite[Section 4.1]{Watrous02} or~\cite[Lemma 4.7]{Liu23}) allows us to construct a new pair of quantum states $\rho'_0$ and $\rho'_1$ whose purifications are also efficiently preparable and for which either $\td(\rho'_0,\rho'_1) \leq 1/6$ or $\td(\rho'_0,\rho'_1) \geq 5/6$. 
    
    Applying \Cref{lemma:traceDist-vs-fidelity}, the condition $\td(\rho_0,\rho_1) \leq \beta(n)$ for \textit{yes} instances implies that 
    \[\F^2(\rho'_0,\rho'_1) \geq \rbra*{1- \td(\rho'_0,\rho'_1)}^2 \geq (1-1/6)^2 = 25/36 \geq 2/3 \coloneqq c'.\] 
    Likewise, the condition $\td(\rho_0,\rho_1) \geq \alpha(n)$ for \textit{no} instances implies that 
    \[\F^2(\rho'_0,\rho'_1) \leq 1-\td^2(\rho'_0,\rho'_1) \leq 1 - (5/6)^2 = 11/36 \leq 1/3 \coloneqq s'.\] 

    The proof is therefore complete, because $c'-s' \geq 1/3 > 0$, and this follows by applying \Cref{thm:GapFEst-in-QIP(2)-bounded-prover} to $\FEst[2/3,1/3]$. 
\end{proof}

Since the \coQSZK{}-hard regime of \QSC{}, implicitly specified in~\cite[Section 5]{Watrous02}, is covered by \Cref{corr:QSC-improved-bound}, applying the complement gives the \coQIPtwo{} part in \Cref{corr:QSZK-upper-bound}. In addition, by fixing $\rho_0$ in \QSC{} to be the maximally mixed state and choosing the state-preparation circuit $Q_0$ to create EPR pairs across the registers $\sfA$ and $\sfR$, \Cref{corr:QSC-improved-bound} yields a two-message quantum interactive proof system in which the verifier's message consists exactly of half of EPR pairs, leading to \Cref{corr:NIQSZK-upper-bound}: 
\begin{corollary}[A slightly improved upper bound for \QSCMM{}]
    \label{corr:QSCMM-improved-bound}
    For any efficiently computable functions $\alpha(n)$ and $\beta(n)$ satisfying $\alpha^2(n)-\beta(n) \geq 1/O(\log n)$, 
    \[\QSCMM[\beta(n),\alpha(n)] \text{ is in } \qqQAM \text{ with a quantum } O(n') \text{ space honest prover}.\] 
    Here, $n'$ denotes the total input length of the state-preparation circuits.  
\end{corollary}


\section*{Acknowledgments}
\noindent
A preliminary version of \Cref{sec:algo-HH-meas} (whose informal theorems are summarized in \Cref{thm:GapQSD-in-QIP(2)-informal,thm:algo-HH-meas-informal}) appeared in Section 5 of the second arXiv version of~\cite{LGLW23} and in the second-named author's PhD thesis~\cite[Section 6.3]{Liu25}.
Part of the work of Qisheng Wang was done when the author was with the School of Informatics, University of Edinburgh, United Kingdom. 

The authors thank Harumichi Nishimura and Thomas Vidick for helpful feedback on a preliminary version of the manuscript.
This work was partially supported by MEXT Q-LEAP grant No.~\mbox{JPMXS0120319794}. 
FLG was also supported by JSPS KAKENHI grant No.~\mbox{JP24H00071}, JST ASPIRE grant No.~\mbox{JPMJAP2302}, and JST CREST grant No.~\mbox{JPMJCR24I4}.
YL was also supported in part by funding from the Swiss State Secretariat for Education, Research and Innovation (SERI), and in part by JSPS KAKENHI grant No.~\mbox{JP24H00071}.
QW was also supported by the Engineering and Physical Sciences Research Council under Grant \mbox{EP/X026167/1}. 
In addition, ChatGPT was used only for proofreading the manuscript, including identifying possible calculation errors and suggesting changes, with all corresponding changes verified and made by the authors.
The circuit diagrams were drawn using the Quantikz package~\cite{Kay18}.


\bibliographystyle{alphaurlQ}
\bibliography{QSZK-improved-upper-bounds}

\newcommand{\etalchar}[1]{$^{#1}$}
\DeclareRobustCommand{\dutchPrefix}[2]{#2}\providecommand{\dutchPrefix}[2]{#2}\renewcommand{\dutchPrefix}[2]{#2}\newcommand{\prelimVersion}[2]{Preliminary version in \textit{\MakeUppercase{#1} #2}}
\begin{thebibliography}{CFM{\dutchPrefix{Wolf}{d}}W10}

\bibitem[AH91]{AH91}
William Aiello and Johan H{\aa}stad.
\newblock Statistical zero-knowledge languages can be recognized in two rounds.
\newblock {\em Journal of Computer and System Sciences}. 42(3):327--345. 1991.
\newblock \href {https://doi.org/10.1016/0022-0000(91)90006-Q} {\nolinkurl{doi:10.1016/0022-0000(91)90006-Q}}. \prelimVersion{FOCS}{1987}. Appearances:\!

\bibitem[AJL09]{AJL09}
Dorit Aharonov, Vaughan Jones, and Zeph Landau.
\newblock A polynomial quantum algorithm for approximating the {Jones} polynomial.
\newblock {\em Algorithmica}. 55(3):395--421. 2009.
\newblock \href {https://doi.org/10.1007/s00453-008-9168-0} {\nolinkurl{doi:10.1007/s00453-008-9168-0}}. \prelimVersion{STOC}{2006}. \href {https://arxiv.org/abs/quant-ph/0511096} {\nolinkurl{arXiv:quant-ph/0511096}}. Appearances:\!

\bibitem[BCC{\etalchar{+}}15]{BCC+15}
Dominic~W Berry, Andrew~M Childs, Richard Cleve, Robin Kothari, and Rolando~D Somma.
\newblock Simulating {Hamiltonian} dynamics with a truncated {Taylor} series.
\newblock {\em Physical Review Letters}. 114(9):090502. 2015.
\newblock \href {https://doi.org/10.1103/PhysRevLett.114.090502} {\nolinkurl{doi:10.1103/PhysRevLett.114.090502}}. \href {https://arxiv.org/abs/1412.4687} {\nolinkurl{arXiv:1412.4687}}. Appearances:\!

\bibitem[BDRV19]{BDRV19}
Itay Berman, Akshay Degwekar, Ron~D Rothblum, and Prashant~Nalini Vasudevan.
\newblock Statistical difference beyond the polarizing regime.
\newblock In {\em Theory of Cryptography Conference}. pages 311--332. Springer. 2019.
\newblock \href {https://doi.org/10.1007/978-3-030-36033-7\_12} {\nolinkurl{doi:10.1007/978-3-030-36033-7\_12}}. \href {https://eccc.weizmann.ac.il/report/2019/038} {\nolinkurl{ECCC:TR19-038}}. Appearances:\!

\bibitem[BEM{\etalchar{+}}26]{BEM+23}
John Bostanci, Yuval Efron, Tony Metger, Alexander Poremba, Luowen Qian, and Henry Yuen.
\newblock Unitary complexity and the {Uhlmann} transformation problem.
\newblock In {\em Proceedings of the 17th Innovations in Theoretical Computer Science Conference ({ITCS} 2026)}. volume 362 of {\em LIPIcs}. pages 24:1--24:17. Schloss Dagstuhl - Leibniz-Zentrum f{\"{u}}r Informatik. 2026.
\newblock \href {https://doi.org/10.4230/LIPICS.ITCS.2026.24} {\nolinkurl{doi:10.4230/LIPICS.ITCS.2026.24}}. \href {https://arxiv.org/abs/2306.13073} {\nolinkurl{arXiv:2306.13073}}. Appearances:\!

\bibitem[BKT20]{BKT20}
Mark Bun, Robin Kothari, and Justin Thaler.
\newblock The polynomial method strikes back: tight quantum query bounds via dual polynomials.
\newblock {\em Theory of Computing}. 16(10):1--71. 2020.
\newblock \href {https://doi.org/10.4086/toc.2020.v016a010} {\nolinkurl{doi:10.4086/toc.2020.v016a010}}. \prelimVersion{STOC}{2018}. \href {https://arxiv.org/abs/1710.09079} {\nolinkurl{arXiv:1710.09079}}. Appearances:\!

\bibitem[BMY26]{BMY25}
John Bostanci, Tony Metger, and Henry Yuen.
\newblock Local transformations of bipartite entanglement are rigid.
\newblock In {\em Proceedings of the 17th Innovations in Theoretical Computer Science Conference (ITCS 2026)}. volume 362 of {\em LIPIcs}. pages 26:1--26:8. Schloss Dagstuhl - Leibniz-Zentrum f{\"{u}}r Informatik. 2026.
\newblock \href {https://doi.org/10.4230/LIPICS.ITCS.2026.26} {\nolinkurl{doi:10.4230/LIPICS.ITCS.2026.26}}. \href {https://arxiv.org/abs/2509.05257} {\nolinkurl{arXiv:2509.05257}}. Appearances:\!

\bibitem[BST10]{BASTS10}
Avraham {Ben-Aroya}, Oded Schwartz, and Amnon {Ta-Shma}.
\newblock Quantum expanders: Motivation and construction.
\newblock {\em Theory of Computing}. 6(1):47--79. 2010.
\newblock \href {https://doi.org/10.4086/toc.2010.v006a003} {\nolinkurl{doi:10.4086/toc.2010.v006a003}}. \prelimVersion{CCC}{2008}. Appearances:\!

\bibitem[CCKV08]{CCKV08}
Andr{\'e} Chailloux, Dragos~Florin Ciocan, Iordanis Kerenidis, and Salil Vadhan.
\newblock Interactive and noninteractive zero knowledge are equivalent in the help model.
\newblock In {\em Theory of Cryptography Conference}. pages 501--534. Springer. 2008.
\newblock \href {https://doi.org/10.1007/978-3-540-78524-8\_28} {\nolinkurl{doi:10.1007/978-3-540-78524-8\_28}}. \href {https://eprint.iacr.org/2007/467} {\nolinkurl{IACR ePrint:2007/467}}. Appearances:\!

\bibitem[CFM{\dutchPrefix{Wolf}{d}}W10]{CFMdW10}
Sourav Chakraborty, Eldar Fischer, Arie Matsliah, and Ronald {\dutchPrefix{Wolf}{d}}e~Wolf.
\newblock New results on quantum property testing.
\newblock In {\em IARCS Annual Conference on Foundations of Software Technology and Theoretical Computer Science (FSTTCS 2010)}. volume~8 of {\em LIPIcs}. pages 145--156. Schloss Dagstuhl - Leibniz-Zentrum f{\"{u}}r Informatik. 2010.
\newblock \href {https://doi.org/10.4230/LIPICS.FSTTCS.2010.145} {\nolinkurl{doi:10.4230/LIPICS.FSTTCS.2010.145}}. \href {https://arxiv.org/abs/1005.0523} {\nolinkurl{arXiv:1005.0523}}. Appearances:\!

\bibitem[CK08]{CK08}
Andr{\'{e}} Chailloux and Iordanis Kerenidis.
\newblock Increasing the power of the verifier in quantum zero knowledge.
\newblock In {\em Proceedings of the {IARCS} Annual Conference on Foundations of Software Technology and Theoretical Computer Science ({FSTTCS} 2008)}. LIPIcs. pages 95--106. 2008.
\newblock \href {https://doi.org/10.4230/LIPICS.FSTTCS.2008.1744} {\nolinkurl{doi:10.4230/LIPICS.FSTTCS.2008.1744}}. \href {https://arxiv.org/abs/0711.4032} {\nolinkurl{arXiv:0711.4032}}. Appearances:\!

\bibitem[CWZ25]{CWZ25}
Kean Chen, Qisheng Wang, and Zhicheng Zhang.
\newblock A list of complexity bounds for property testing by quantum sample-to-query lifting.
\newblock {\em arXiv preprint}. 2025.
\newblock \href {https://arxiv.org/abs/2512.01971} {\nolinkurl{arXiv:2512.01971}}. Appearances:\!

\bibitem[F{\dutchPrefix{Graaf}{v}}dG99]{FvdG99}
Christopher~A Fuchs and Jeroen {\dutchPrefix{Graaf}{v}}an~de Graaf.
\newblock Cryptographic distinguishability measures for quantum-mechanical states.
\newblock {\em IEEE Transactions on Information Theory}. 45(4):1216--1227. 1999.
\newblock \href {https://doi.org/10.1109/18.761271} {\nolinkurl{doi:10.1109/18.761271}}. \href {https://arxiv.org/abs/quant-ph/9712042} {\nolinkurl{arXiv:quant-ph/9712042}}. Appearances:\!

\bibitem[FL18]{FL18}
Bill Fefferman and Cedric Yen-Yu Lin.
\newblock A complete characterization of unitary quantum space.
\newblock In {\em Proceedings of the 9th Innovations in Theoretical Computer Science Conference}. volume~94. page~4. 2018.
\newblock \href {https://doi.org/10.4230/LIPIcs.ITCS.2018.4} {\nolinkurl{doi:10.4230/LIPIcs.ITCS.2018.4}}. \href {https://arxiv.org/abs/1604.01384} {\nolinkurl{arXiv:1604.01384}}. Appearances:\!

\bibitem[For87]{Fortnow87}
Lance Fortnow.
\newblock The complexity of perfect zero-knowledge.
\newblock In {\em Proceedings of the 19th Annual ACM Symposium on Theory of Computing}. pages 204--209. 1987.
\newblock \href {https://doi.org/10.1145/28395.28418} {\nolinkurl{doi:10.1145/28395.28418}}. Appearances:\!

\bibitem[GKL21]{GKL19}
Ayal Green, Guy Kindler, and Yupan Liu.
\newblock Towards a quantum-inspired proof for $\mathsf{IP} = \mathsf{PSPACE}$.
\newblock {\em Quantum Information \& Computation}. 21(5{\&}6):377--386. 2021.
\newblock \href {https://doi.org/10.26421/QIC21.5-6-2} {\nolinkurl{doi:10.26421/QIC21.5-6-2}}. \href {https://arxiv.org/abs/1912.11611} {\nolinkurl{arXiv:1912.11611}}. Appearances:\!

\bibitem[GKR15]{GKR15}
Shafi Goldwasser, Yael~Tauman Kalai, and Guy~N. Rothblum.
\newblock Delegating computation: interactive proofs for muggles.
\newblock {\em Journal of the ACM}. 62(4):1--64. 2015.
\newblock \href {https://doi.org/10.1145/2699436} {\nolinkurl{doi:10.1145/2699436}}. \prelimVersion{STOC}{2008}. \href {https://eccc.weizmann.ac.il/report/2017/108} {\nolinkurl{ECCC:TR17-108}}. Appearances:\!

\bibitem[GMW91]{GMW91}
Oded Goldreich, Silvio Micali, and Avi Wigderson.
\newblock Proofs that yield nothing but their validity for all languages in $\mathsf{NP}$ have zero-knowledge proof systems.
\newblock {\em Journal of the {ACM}}. 38(3):691--729. 1991.
\newblock \href {https://doi.org/10.1145/116825.116852} {\nolinkurl{doi:10.1145/116825.116852}}. \prelimVersion{FOCS}{1986}. Appearances:\!

\bibitem[Gol18]{Goldreich18}
Oded Goldreich.
\newblock On doubly-efficient interactive proof systems.
\newblock {\em Foundations and Trends{\textregistered} in Theoretical Computer Science}. 13(3):158--246. 2018.
\newblock \href {https://doi.org/10.1561/0400000084} {\nolinkurl{doi:10.1561/0400000084}}. \href {https://eccc.weizmann.ac.il/report/2017/017} {\nolinkurl{ECCC:TR17-017}}. Appearances:\!

\bibitem[GP22]{GP22}
Andr{\'a}s Gily{\'e}n and Alexander Poremba.
\newblock Improved quantum algorithms for fidelity estimation.
\newblock {\em arXiv preprint}. 2022.
\newblock \href {https://arxiv.org/abs/2203.15993} {\nolinkurl{arXiv:2203.15993}}. Appearances:\!

\bibitem[GRZ24]{GRZ23}
Uma Girish, Ran Raz, and Wei Zhan.
\newblock Quantum logspace computations are verifiable.
\newblock In {\em Proceedings of the 2024 Symposium on Simplicity in Algorithms}. pages 144--150. 2024.
\newblock \href {https://doi.org/10.1137/1.9781611977936.14} {\nolinkurl{doi:10.1137/1.9781611977936.14}}. \href {https://arxiv.org/abs/2307.11083} {\nolinkurl{arXiv:2307.11083}}. Appearances:\!

\bibitem[GSLW19]{GSLW19}
Andr{\'a}s Gily{\'e}n, Yuan Su, Guang~Hao Low, and Nathan Wiebe.
\newblock Quantum singular value transformation and beyond: exponential improvements for quantum matrix arithmetics.
\newblock In {\em Proceedings of the 51st Annual ACM SIGACT Symposium on Theory of Computing}. pages 193--204. 2019.
\newblock \href {https://doi.org/10.1145/3313276.3316366} {\nolinkurl{doi:10.1145/3313276.3316366}}. \href {https://arxiv.org/abs/1806.01838} {\nolinkurl{arXiv:1806.01838}}. Appearances:\!

\bibitem[GSS{\etalchar{+}}22]{GSS+22}
Sevag Gharibian, Miklos Santha, Jamie Sikora, Aarthi Sundaram, and Justin Yirka.
\newblock Quantum generalizations of the polynomial hierarchy with applications to $\mathsf{QMA(2)}$.
\newblock {\em computational complexity}. 31(2):1--52. 2022.
\newblock \href {https://doi.org/10.1007/s00037-022-00231-8} {\nolinkurl{doi:10.1007/s00037-022-00231-8}}. \prelimVersion{MFCS}{2018}. \href {https://arxiv.org/abs/1805.11139} {\nolinkurl{arXiv:1805.11139}}. Appearances:\!

\bibitem[GSV98]{GSV98}
Oded Goldreich, Amit Sahai, and Salil Vadhan.
\newblock Honest-verifier statistical zero-knowledge equals general statistical zero-knowledge.
\newblock In {\em Proceedings of the 30th Annual ACM Symposium on Theory of Computing}. pages 399--408. 1998.
\newblock \href {https://doi.org/10.1145/276698.276852} {\nolinkurl{doi:10.1145/276698.276852}}. Appearances:\!

\bibitem[Hel69]{Helstrom69}
Carl~W Helstrom.
\newblock Quantum detection and estimation theory.
\newblock {\em Journal of Statistical Physics}. 1:231--252. 1969.
\newblock \href {https://doi.org/10.1007/BF01007479} {\nolinkurl{doi:10.1007/BF01007479}}. Appearances:\!

\bibitem[Hol73]{Holevo73TraceDist}
Alexander~S Holevo.
\newblock Statistical decision theory for quantum systems.
\newblock {\em Journal of Multivariate Analysis}. 3(4):337--394. 1973.
\newblock \href {https://doi.org/10.1016/0047-259X(73)90028-6} {\nolinkurl{doi:10.1016/0047-259X(73)90028-6}}. Appearances:\!

\bibitem[JJUW11]{JJUW11}
Rahul Jain, Zhengfeng Ji, Sarvagya Upadhyay, and John Watrous.
\newblock $\mathsf{QIP}=\mathsf{PSPACE}$.
\newblock {\em Journal of the ACM}. 58(6):1--27. 2011.
\newblock \href {https://doi.org/10.1145/2049697.2049704} {\nolinkurl{doi:10.1145/2049697.2049704}}. \prelimVersion{STOC}{2010}. \href {https://arxiv.org/abs/0907.4737} {\nolinkurl{arXiv:0907.4737}}. Appearances:\!

\bibitem[Joz94]{Jozsa94}
Richard Jozsa.
\newblock Fidelity for mixed quantum states.
\newblock {\em Journal of modern optics}. 41(12):2315--2323. 1994.
\newblock \href {https://doi.org/10.1080/09500349414552171} {\nolinkurl{doi:10.1080/09500349414552171}}. Appearances:\!

\bibitem[JUW09]{JUW09}
Rahul Jain, Sarvagya Upadhyay, and John Watrous.
\newblock Two-message quantum interactive proofs are in $\mathsf{PSPACE}$.
\newblock In {\em Proceedings of the 50th Annual IEEE Symposium on Foundations of Computer Science}. pages 534--543. IEEE. 2009.
\newblock \href {https://doi.org/10.1109/FOCS.2009.30} {\nolinkurl{doi:10.1109/FOCS.2009.30}}. \href {https://arxiv.org/abs/0905.1300} {\nolinkurl{arXiv:0905.1300}}. Appearances:\!

\bibitem[Kay18]{Kay18}
Alastair Kay.
\newblock Tutorial on the quantikz package.
\newblock {\em arXiv preprint}. 2018.
\newblock \href {https://arxiv.org/abs/1809.03842} {\nolinkurl{arXiv:1809.03842}}. Appearances:\!

\bibitem[Kit95]{Kitaev95}
Alexei~Yu Kitaev.
\newblock Quantum measurements and the {Abelian} stabilizer problem.
\newblock {\em arXiv preprint}. 1995.
\newblock \href {https://arxiv.org/abs/quant-ph/9511026} {\nolinkurl{arXiv:quant-ph/9511026}}. Appearances:\!

\bibitem[KKMV09]{KKMV09}
Julia Kempe, Hirotada Kobayashi, Keiji Matsumoto, and Thomas Vidick.
\newblock Using entanglement in quantum multi-prover interactive proofs.
\newblock {\em computational complexity}. 18:273--307. 2009.
\newblock \href {https://doi.org/10.1007/s00037-009-0275-3} {\nolinkurl{doi:10.1007/s00037-009-0275-3}}. \prelimVersion{CCC}{2008}. \href {https://arxiv.org/abs/0711.3715} {\nolinkurl{arXiv:0711.3715}}. Appearances:\!

\bibitem[KLN19]{KLGN19}
Hirotada Kobayashi, François {Le Gall}, and Harumichi Nishimura.
\newblock Generalized quantum {Arthur--Merlin} games.
\newblock {\em {SIAM} Journal on Computing}. 48(3):865--902. 2019.
\newblock \href {https://doi.org/10.1137/17M1160173} {\nolinkurl{doi:10.1137/17M1160173}}. \prelimVersion{CCC}{2015}. \href {https://arxiv.org/abs/1312.4673} {\nolinkurl{arXiv:1312.4673}}. Appearances:\!

\bibitem[Kob03]{Kobayashi03}
Hirotada Kobayashi.
\newblock Non-interactive quantum perfect and statistical zero-knowledge.
\newblock In {\em Proceedings of the 14th International Symposium on Algorithms and Computation}. pages 178--188. Springer. 2003.
\newblock \href {https://doi.org/10.1007/978-3-540-24587-2\_20} {\nolinkurl{doi:10.1007/978-3-540-24587-2\_20}}. \href {https://arxiv.org/abs/quant-ph/0207158} {\nolinkurl{arXiv:quant-ph/0207158}}. Appearances:\!

\bibitem[KW00]{KW00}
Alexei Kitaev and John Watrous.
\newblock Parallelization, amplification, and exponential time simulation of quantum interactive proof systems.
\newblock In {\em Proceedings of the 32nd Annual ACM Symposium on Theory of Computing}. pages 608--617. 2000.
\newblock \href {https://doi.org/10.1145/335305.335387} {\nolinkurl{doi:10.1145/335305.335387}}. Appearances:\!

\bibitem[LC19]{LC19}
Guang~Hao Low and Isaac~L Chuang.
\newblock Hamiltonian simulation by qubitization.
\newblock {\em Quantum}. 3:163. 2019.
\newblock \href {https://doi.org/10.22331/q-2019-07-12-163} {\nolinkurl{doi:10.22331/q-2019-07-12-163}}. \href {https://arxiv.org/abs/1610.06546} {\nolinkurl{arXiv:1610.06546}}. Appearances:\!

\bibitem[LFKN92]{LFKN92}
Carsten Lund, Lance Fortnow, Howard Karloff, and Noam Nisan.
\newblock Algebraic methods for interactive proof systems.
\newblock {\em Journal of the ACM}. 39(4):859--868. 1992.
\newblock \href {https://doi.org/10.1145/146585.146605} {\nolinkurl{doi:10.1145/146585.146605}}. \prelimVersion{FOCS}{1990}. Appearances:\!

\bibitem[Liu25a]{Liu25}
Yupan Liu.
\newblock {\em Complexity-theoretic perspectives on quantum state testing}.
\newblock PhD thesis. Nagoya University. 2025.
\newblock URL: \url{https://nagoya.repo.nii.ac.jp/records/2012662}. Appearances:\!

\bibitem[Liu25b]{Liu23}
Yupan Liu.
\newblock Quantum state testing beyond the polarizing regime and quantum triangular discrimination.
\newblock {\em Computational Complexity}. 34(11):1--67. 2025.
\newblock \href {https://doi.org/10.1007/s00037-025-00273-8} {\nolinkurl{doi:10.1007/s00037-025-00273-8}}. \href {https://arxiv.org/abs/2303.01952} {\nolinkurl{arXiv:2303.01952}}. Appearances:\!

\bibitem[LLW25]{LGLW23}
Fran{\c{c}}ois {Le Gall}, Yupan Liu, and Qisheng Wang.
\newblock Space-bounded quantum state testing via space-efficient quantum singular value transformation.
\newblock {\em \emph{To appear in} computational complexity}. 2025.
\newblock \href {https://doi.org/10.1007/s00037-025-00284-5} {\nolinkurl{doi:10.1007/s00037-025-00284-5}}. \href {https://arxiv.org/abs/2308.05079} {\nolinkurl{arXiv:2308.05079}}. Appearances:\!

\bibitem[LW25]{LW25Lalpha}
Yupan Liu and Qisheng Wang.
\newblock On estimating the quantum $\ell_{\alpha}$ distance.
\newblock In {\em Proceedings of the 33rd Annual European Symposium on Algorithms ({ESA} 2025)}. volume 351 of {\em LIPIcs}. pages 105:1--105:20. Schloss Dagstuhl - Leibniz-Zentrum f{\"{u}}r Informatik. 2025.
\newblock \href {https://doi.org/10.4230/LIPIcs.ESA.2025.105} {\nolinkurl{doi:10.4230/LIPIcs.ESA.2025.105}}. \href {https://arxiv.org/abs/2505.00457} {\nolinkurl{arXiv:2505.00457}}. Appearances:\!

\bibitem[MN17]{MN17}
Tomoyuki Morimae and Harumichi Nishimura.
\newblock Merlinization of complexity classes above $\mathsf{BQP}$.
\newblock {\em Quantum Information \& Computation}. 17(11{\&}12):959--972. 2017.
\newblock \href {https://doi.org/10.26421/QIC17.11-12-3} {\nolinkurl{doi:10.26421/QIC17.11-12-3}}. \href {https://arxiv.org/abs/1704.01514} {\nolinkurl{arXiv:1704.01514}}. Appearances:\!

\bibitem[MW05]{MW05}
Chris Marriott and John Watrous.
\newblock Quantum {Arthur--Merlin} games.
\newblock {\em Computational Complexity}. 14(2):122--152. 2005.
\newblock \href {https://doi.org/10.1007/s00037-005-0194-x} {\nolinkurl{doi:10.1007/s00037-005-0194-x}}. \prelimVersion{CCC}{2004}. \href {https://arxiv.org/abs/cs/0506068} {\nolinkurl{arXiv:cs/0506068}}. Appearances:\!

\bibitem[MY23]{MY23}
Tony Metger and Henry Yuen.
\newblock $\mathsf{stateQIP}=\mathsf{statePSPACE}$.
\newblock In {\em Proceedings of the 64th Annual IEEE Symposium on Foundations of Computer Science}. pages 1349--1356. {IEEE}. 2023.
\newblock \href {https://doi.org/10.1109/FOCS57990.2023.00082} {\nolinkurl{doi:10.1109/FOCS57990.2023.00082}}. \href {https://arxiv.org/abs/2301.07730} {\nolinkurl{arXiv:2301.07730}}. Appearances:\!

\bibitem[NC10]{NC10}
Michael~A Nielsen and Isaac~L Chuang.
\newblock {\em Quantum computation and quantum information}.
\newblock Cambridge University Press. 2010.
\newblock \href {https://doi.org/10.1017/CBO9780511976667} {\nolinkurl{doi:10.1017/CBO9780511976667}}. Appearances:\!

\bibitem[RRR21]{RRR16}
Omer Reingold, Guy~N. Rothblum, and Ron~D. Rothblum.
\newblock Constant-round interactive proofs for delegating computation.
\newblock {\em SIAM Journal on Computing}. 50(3). 2021.
\newblock \href {https://doi.org/10.1137/16M1096773} {\nolinkurl{doi:10.1137/16M1096773}}. \prelimVersion{STOC}{2016}. \href {https://eccc.weizmann.ac.il/report/2016/016} {\nolinkurl{ECCC:TR16-016}}. Appearances:\!

\bibitem[Sha92]{Shamir92}
Adi Shamir.
\newblock $\mathsf{IP}=\mathsf{PSPACE}$.
\newblock {\em Journal of the ACM}. 39(4):869--877. 1992.
\newblock \href {https://doi.org/10.1145/146585.146609} {\nolinkurl{doi:10.1145/146585.146609}}. \prelimVersion{FOCS}{1990}. Appearances:\!

\bibitem[SV03]{SV97}
Amit Sahai and Salil Vadhan.
\newblock A complete problem for statistical zero knowledge.
\newblock {\em Journal of the ACM}. 50(2):196--249. 2003.
\newblock \href {https://doi.org/10.1145/636865.636868} {\nolinkurl{doi:10.1145/636865.636868}}. \prelimVersion{FOCS}{1997}. \href {https://eccc.weizmann.ac.il/report/2000/084} {\nolinkurl{ECCC:TR00-084}}. Appearances:\!

\bibitem[TS13]{TS13}
Amnon Ta-Shma.
\newblock Inverting well conditioned matrices in quantum logspace.
\newblock In {\em Proceedings of the 45th Annual ACM Symposium on Theory of Computing}. pages 881--890. 2013.
\newblock \href {https://doi.org/10.1145/2488608.2488720} {\nolinkurl{doi:10.1145/2488608.2488720}}. Appearances:\!

\bibitem[Uhl76]{Uhlmann76}
Armin Uhlmann.
\newblock The ``transition probability'' in the state space of {$A^*$}-algebra.
\newblock {\em Reports on Mathematical Physics}. 9(2):273--279. 1976.
\newblock \href {https://doi.org/10.1016/0034-4877(76)90060-4} {\nolinkurl{doi:10.1016/0034-4877(76)90060-4}}. Appearances:\!

\bibitem[UNWT25]{UNWT25}
Takeru Utsumi, Yoshifumi Nakata, Qisheng Wang, and Ryuji Takagi.
\newblock Quantum algorithms for {U}hlmann transformation.
\newblock {\em arXiv preprint}. 2025.
\newblock \href {https://arxiv.org/abs/2509.03619} {\nolinkurl{arXiv:2509.03619}}. Appearances:\!

\bibitem[VW16]{VW16}
Thomas Vidick and John Watrous.
\newblock Quantum proofs.
\newblock {\em Foundations and Trends{\textregistered} in Theoretical Computer Science}. 11(1-2):1--215. 2016.
\newblock \href {https://doi.org/10.1561/0400000068} {\nolinkurl{doi:10.1561/0400000068}}. \href {https://arxiv.org/abs/1610.01664} {\nolinkurl{arXiv:1610.01664}}. Appearances:\!

\bibitem[Wat99]{Wat99}
John Watrous.
\newblock Space-bounded quantum complexity.
\newblock {\em Journal of Computer and System Sciences}. 59(2):281--326. 1999.
\newblock \href {https://doi.org/10.1006/jcss.1999.1655} {\nolinkurl{doi:10.1006/jcss.1999.1655}}. \prelimVersion{CCC}{1998}. Appearances:\!

\bibitem[Wat02]{Watrous02}
John Watrous.
\newblock Limits on the power of quantum statistical zero-knowledge.
\newblock In {\em Proceedings of the 43rd Annual IEEE Symposium on Foundations of Computer Science}. pages 459--468. IEEE. 2002.
\newblock \href {https://doi.org/10.1109/SFCS.2002.1181970} {\nolinkurl{doi:10.1109/SFCS.2002.1181970}}. \href {https://arxiv.org/abs/quant-ph/0202111} {\nolinkurl{arXiv:quant-ph/0202111}}. Appearances:\!

\bibitem[Wat09a]{Watrous08}
John Watrous.
\newblock Quantum computational complexity.
\newblock {\em Encyclopedia of Complexity and Systems Science}. pages 7174--7201. 2009.
\newblock \href {https://doi.org/10.1007/978-0-387-30440-3\_428} {\nolinkurl{doi:10.1007/978-0-387-30440-3\_428}}. \href {https://arxiv.org/abs/0804.3401} {\nolinkurl{arXiv:0804.3401}}. Appearances:\!

\bibitem[Wat09b]{Wat09}
John Watrous.
\newblock Zero-knowledge against quantum attacks.
\newblock {\em {SIAM} Journal on Computing}. 39(1):25--58. 2009.
\newblock \href {https://doi.org/10.1137/060670997} {\nolinkurl{doi:10.1137/060670997}}. \prelimVersion{STOC}{2006}. \href {https://arxiv.org/abs/quant-ph/0511020} {\nolinkurl{arXiv:quant-ph/0511020}}. Appearances:\!

\bibitem[Wat18]{Watrous18}
John Watrous.
\newblock {\em The Theory of Quantum Information}.
\newblock Cambridge University Press. 1st edition. 2018.
\newblock \href {https://doi.org/10.1017/9781316848142} {\nolinkurl{doi:10.1017/9781316848142}}. Appearances:\!

\bibitem[WGL{\etalchar{+}}24]{WGL+22}
Qisheng Wang, Ji~Guan, Junyi Liu, Zhicheng Zhang, and Mingsheng Ying.
\newblock New quantum algorithms for computing quantum entropies and distances.
\newblock {\em IEEE Transactions on Information Theory}. 70(8):5653--5680. 2024.
\newblock \href {https://doi.org/10.1109/TIT.2024.3399014} {\nolinkurl{doi:10.1109/TIT.2024.3399014}}. \href {https://arxiv.org/abs/2203.13522} {\nolinkurl{arXiv:2203.13522}}. Appearances:\!

\bibitem[Wil13]{Wilde13}
Mark~M Wilde.
\newblock {\em Quantum Information Theory}.
\newblock Cambridge University Press. 1st edition. 2013.
\newblock \href {https://doi.org/10.1017/9781316809976} {\nolinkurl{doi:10.1017/9781316809976}}. Appearances:\!

\bibitem[{\dutchPrefix{Wolf}{d}}W19]{deWolf19}
Ronald {\dutchPrefix{Wolf}{d}}e~Wolf.
\newblock Quantum computing: Lecture notes.
\newblock {\em arXiv preprint}. 2019.
\newblock \href {https://arxiv.org/abs/1907.09415} {\nolinkurl{arXiv:1907.09415}}. Appearances:\!

\bibitem[WZ24]{WZ23}
Qisheng Wang and Zhicheng Zhang.
\newblock Fast quantum algorithms for trace distance estimation.
\newblock {\em IEEE Transactions on Information Theory}. 70(4):2720--2733. 2024.
\newblock \href {https://doi.org/10.1109/TIT.2023.3321121} {\nolinkurl{doi:10.1109/TIT.2023.3321121}}. \href {https://arxiv.org/abs/2301.06783} {\nolinkurl{arXiv:2301.06783}}. Appearances:\!

\bibitem[WZC{\etalchar{+}}23]{WZC+23}
Qisheng Wang, Zhicheng Zhang, Kean Chen, Ji~Guan, Wang Fang, Junyi Liu, and Mingsheng Ying.
\newblock Quantum algorithm for fidelity estimation.
\newblock {\em IEEE Transactions on Information Theory}. 69(1):273--282. 2023.
\newblock \href {https://doi.org/10.1109/TIT.2022.3203985} {\nolinkurl{doi:10.1109/TIT.2022.3203985}}. \href {https://arxiv.org/abs/2103.09076} {\nolinkurl{arXiv:2103.09076}}. Appearances:\!

\end{thebibliography}

\end{document}